\providecommand{\tabularnewline}{\\}
\theoremstyle{plain}
	\newtheorem{thm}{Theorem}[section]
	\newtheorem{lem}[thm]{Lemma}
	\newtheorem{prop}[thm]{Proposition}
	\newtheorem{assu}{Assumption}[section]
\theoremstyle{definition}
\newtheorem{eg}[thm]{Example}
\theoremstyle{remark}
\newtheorem{rmk}[thm]{Remark}
\numberwithin{equation}{section}
\newcommand{\indep}{\mathop{\perp \! \! \! \perp}}
\renewcommand{\overline}{\bar}
\renewcommand{\epsilon}{\varepsilon}
     \andothersdelim\bibstring[\emph]{andothers}}
\begin{document}
\title[Optimal treatment assignment rules under capacity constraints]{Optimal treatment assignment rules \\
 under capacity constraints}
\date{September 10, 2025 \,(First version: June 13, 2025)}
\author[K.~Sunada]{Keita Sunada}
\address{Department of Economics, University of Rochester, Rochester, NY 14627,
USA.}
\email{\href{mailto:ksunada@ur.rochester.edu}{ksunada@ur.rochester.edu}}
\author[K.~Izumi]{Kohei Izumi}
\address{Department of Economics, University of Rochester, Rochester, NY 14627,
USA.}
\email{\href{mailto:kizumi3@ur.rochester.edu}{kizumi3@ur.rochester.edu}}
\thanks{We are particularly grateful to Nese Yildiz for her guidance and support.
We thank Bin Chen, Kazuma Inagaki, Yukun Ma, Tai Otsu, Yuya Sasaki,
and Takeaki Sunada for their helpful comments and suggestions.}
\begin{abstract}
We study treatment assignment problems under capacity constraints,
where a planner aims to maximize social welfare by assigning treatments
based on observable covariates. Such constraints, common when treatments
are costly or limited in supply, introduce nontrivial challenges for
deriving optimal statistical assignment rules because the planner
needs to coordinate treatment assignment probabilities across the
entire covariate distribution. To address these challenges, we reformulate
the planner's constrained maximization problem as an \emph{optimal
transport} problem, which makes the problem effectively unconstrained.
We then establish local asymptotic optimality results of assignment
rules using a limits of experiments framework. Finally, we illustrate
our method with a voucher assignment problem for private secondary
school attendance using data from \citet{angrist2006long}.

\medskip{}
\noindent\textbf{JEL Classification:} C10, C18, C21, C44, C61 \\
\textbf{Keywords:} Treatment assignment, capacity constraints, optimal transport, statistical decision theory, external validity.
\end{abstract}

\maketitle

\section{Introduction}

When a social planner allocates goods such as vaccines or school vouchers,
two key challenges often arise. First, the planner typically lacks
knowledge of the true treatment effects. Second, even when treatment
is clearly beneficial, capacity constraints---such as limited budgets
or supplies---may prevent treating everyone. While many studies have
addressed the first challenge, the second has received comparatively
less attention. This paper studies a treatment assignment problem
under such capacity constraints using Le Cam's limits of experiments
framework.

We consider a planner who aims to maximize social welfare by designing
treatment rules based on observable covariates subject to a capacity
constraint. In unconstrained settings, the asymptotically optimal
assignment rule can be determined pointwise: one simply treats individuals
for whom the estimated (conditional) treatment effect is positive
\citep{hirano2009asymptotics}. However, under a capacity constraint,
such pointwise rules may violate the overall treatment quota. A natural
alternative is a plug-in rule that ranks individuals by their estimated
effects and assigns treatment in descending order until the quota
is met. But is this rule still optimal in the presence of constraints?
A key challenge in answering this question stems from the need to
coordinate treatment probabilities across the entire covariate distribution
$F_{X}$, which makes it difficult to obtain an asymptotic representation
of a treatment assignment rule.

To address this, we reformulate the planner's constrained maximization
problem as an \emph{optimal transport} problem. To illustrate, consider
a binary treatment setting where a fixed fraction $p$ of the population
can be treated. If the constraint binds exactly, then the distribution
of treatment assignments must be a Bernoulli distribution with success
probability $p$. The planner’s problem can then be framed as transporting
the mass of $F_{X}$ into the Bernoulli distribution $F_{T}$, in
a way that maximizes the social welfare. Since solutions to this optimal
transport problem automatically satisfy the capacity constraints,
this reformulation simplifies the analysis. This approach is also
computationally attractive, as one can use existing software for optimal
transport. In particular, when both $F_{X}$ and $F_{T}$ are discrete
(or discretized), which is often the case in practice, the optimal
transport problem reduces to a simple linear programming problem. 

Based on this reformulation, we analyze the optimality of two canonical
assignment rules: the plug-in rule and the Bayesian rule. The plug-in
rule is known to be average optimal under point-identified models
when the planner's utility function is smooth \citep{hirano2009asymptotics}.
The Bayesian rule is known to be average optimal under partially identified
models when the planner's utility function is directionally differentiable
\citep{christensen2023optimal}.\footnote{Formally, \citet{christensen2023optimal} adopt an optimality criterion
that is weaker than average optimality. Under additional assumptions
their results imply average optimality as well.} We show that both rules remain average optimal under capacity constraints,
provided that the planner's utility function is smooth.

A meaningful distinction between the two rules arises when the planner's
utility function is only directionally differentiable. This scenario
is not merely a theoretical curiosity but arises in important practical
settings. For instance, it occurs when the (conditional) potential
outcome distributions may differ slightly between the target and training
populations, and the planner adopts a utility function that is robust
to such distributional shifts \citep{adjaho2023externallyvalidpolicychoice}.
In this setting, we show that the plug-in rule is generally not optimal,
whereas the Bayesian rule remains optimal. This result echoes the
findings of \citet{christensen2023optimal}, where the lack of point
identification similarly precludes full differentiability and renders
the plug-in rule suboptimal. 

Our simulation study supports these theoretical results. Specifically:
(i) the Bayesian rule achieves a substantially lower risk than the
plug-in rule in small samples for both the smooth and the directionally
differentiable utility functions, (ii) the two rules behave similarly
under the smooth utility function in larger samples, and (iii) the
Bayesian rule continues to outperform the plug-in rule under the directionally
differentiable utility function in larger samples.

We illustrate our methods using data from \citet{angrist2006long},
who study the impact of receiving a randomly assigned voucher (which
allows students to attend private secondary schools) on educational
attainment seven years later. We hypothetically treat the marginal
distribution of covariates (age and sex) in the observed sample as
that of the target population, and compute both the plug-in rule and
the Bayesian rule. The two rules produce identical allocations under
the smooth utility function, but they produce different allocations
under the directionally differentiable utility function.

\subsection{Related literature}

This study builds on the literature on the statistical treatment assignment
problems in econometrics, where the pioneering works include \citet{manski2004statistical}
and \citet{dehejia2005program}. Within this expanding literature,
our main contribution is to provide a decision-theoretic optimality
result under capacity constraints on available treatments. Previous
studies have established the decision-theoretic optimality of treatment
rules in several settings including: (i) point-identified smooth (semi-)parametric
models under local asymptotics \citep{hirano2009asymptotics,masten2023minimax},
(ii) partially-identified smooth (semi-)parametric models under local
asymptotics \citep{christensen2023optimal,kido2023locally,xu2024jmp},
(iii) point-identified models with finite samples \citep{stoye2009minimax,stoye2012minimax,tetenov2012statistical,guggenberger2024minimax,kitagawa2024treatmentchoicenonlinearregret,chen2025note},
(iv) partially-identified models with finite samples \citep{manski2007minimax,stoye2012minimax,yata2023optimal,ishihara2021evidence,fernandez2024robust,oleaDecisionTheoryTreatment2024}.
However, none of these studies consider the capacity constraints in
the way we do. To the best of our knowledge, this is the first study
to establish an optimality result under such constraints within the
framework of \citet{hirano2009asymptotics}, extended to cover non-binary
treatments---both discrete and continuous.

Besides \citet{hirano2009asymptotics}, the most closely related paper
is \citet{christensen2023optimal}. They extend \citet{hirano2009asymptotics}
to allow for partially identified parameters and (non-randomized)
discrete actions, and show the asymptotic optimality of the Bayesian
rule. Our setting differs in that the action space is the set of couplings
of $F_{X}$ and $F_{T}$ (see notation below) equipped with the Wasserstein
distance, which naturally allows for randomized assignment rules.
This non-standard action space requires nontrivial extensions to analyze
the asymptotic properties of the Bayesian rule, which we address by
drawing on tools from optimal transport. In contrast to their framework,
we focus on point-identified parameters.\footnote{\citet{xu2024jmp} further extends the framework of \citet{christensen2023optimal}
to continuous decision problems using an expansion-based approach.}

There also exist studies that incorporate exogenously given constraints
into the treatment assignment problem. \citet{bhattacharya2012inferring}
impose the capacity constraints in the way we do, but focus on the
estimation and inference of a nonparametric plug-in rule. Other papers
adopting the \emph{empirical welfare maximization} approach allow
for various types of constraints, including the capacity constraints
\citep{kitagawa2018should,athey2021policy,mbakop2021model,sun2024empiricalwelfaremaximizationconstraints}.
\citet{kitagawa2018should} show the optimality of their proposed
rule in terms of the welfare convergence rate, which measures how
quickly the average welfare achieved by the proposed rule converges
to the maximum welfare under the true data generating process. 

Some recent works utilize tools from optimal transport theory in the
literature of treatment assignment problems. \citet{kidoDistributionallyRobustPolicy2022}
and \citet{adjaho2023externallyvalidpolicychoice} study the external
validity of treatment choices by measuring the difference in potential
outcome distributions between the training and target populations
using the Wasserstein distance. \citet{hazard2025Who} formulate a
learning problem of optimal matching policies in a two-sided market
as an empirical optimal transport problem, and derive a welfare regret
bound for their estimated policy. 

Our work also relates to the growing field of statistical methods
for optimal transport problems \citep{chewi2024statisticaloptimaltransport},
as we study the local asymptotic properties of transport maps of an
optimal transport problem where the cost function is indexed by parameters
that can be efficiently estimated.

\subsection{Structure of the paper}

The remainder of the paper is organized as follows. Section \ref{sec:Setting}
formulates the planner's problem and introduces the data generating
process. Section \ref{sec:Optimal-decisions} introduces the decision
theoretic framework and define the plug-in rule and the Bayesian rule.
Then the optimality results are stated. Section \ref{sec:Simulation}
provides a simulation study to evaluate the finite sample performance
of rules. Section \ref{sec:Empirical-application} illustrates our
methods using the data from \citet{angrist2006long}. Finally, Section
\ref{sec:Conclusion} concludes. All of the proofs are relegated to
Appendix.

\subsection{Notation}

A function $f:\Theta\subset\mathbb{R}^{k}\to\mathbb{R}$ is \emph{(Hadamard)
directionally differentiable }at $\theta_{0}$ if there is a continuous
function $\dot{f}_{\theta_{0}}:\mathbb{R}^{k}\to\mathbb{R}$ such
that 
\[
\lim_{n\to\infty}\left|\frac{f(\theta_{0}+t_{n}h_{n})-f(\theta_{0})}{t_{n}}-\dot{f}_{\theta_{0}}(h)\right|=0
\]
for all sequences $\left\{ t_{n}\right\} \subset\mathbb{R}_{+}$ and
$\left\{ h_{n}\right\} \subset\mathbb{R}^{k}$ such that $t_{n}\downarrow0$,
$h_{n}\to h\in\mathbb{R}^{k}$ as $n\to\infty$ and $\theta_{0}+t_{n}h_{n}\in\Theta$
for all $n$. It is worth noting that this requires $\dot{f}_{\theta_{0}}$
need to be continuous, but not to be linear.

Let $P$ and $Q$ be Borel probability measures on $\mathcal{A}$
and $\mathcal{B}$, respectively. A joint distribution $\mu$ on $\mathcal{A}\times\mathcal{B}$
is called a \emph{coupling }of $P$ and $Q$ if its marginals are
$P$ and $Q$; that is, $\mu(A\times\mathcal{B})=P(A)$ and $\mu(\mathcal{A}\times B)=Q(B)$
for any measurable sets $A$ and $B$. 

\section{Setting\label{sec:Setting}}

The setting of this paper closely follows that of \citet{hirano2009asymptotics}.
We consider a social planner who assigns a treatment $T$ to individuals
based on their observable covariates $X$. Let $F_{X}$ denote the
marginal distribution of $X$ in the target population, with support
$\mathcal{X}$. We assume that $F_{X}$ is known to the planner. The
planner can fractionally (probabilistically) assign treatment $T=t$
to an individual with covariate $X=x$.

Let $Y(t)$ denote potential outcomes under treatment $T=t$. In contrast
to \citet{hirano2009asymptotics}, we distinguish between the conditional
potential outcome distribution in the target population and in the
training population. We denote the conditional distribution of $Y(t)$
in the training population as $F_{t}(\cdot|x,\theta)$, where $F_{t}(\cdot|x,\theta)$
belongs to families of distributions indexed by a parameter $\theta\in\Theta\subset\mathbb{R}^{k}$.
The planner must learn $\theta$ from the available data from experimental
or observational studies. 

\subsection{The planner's preferences}

The planner's utility for assigning treatment $T=t$ to an individual
with covariate $X=x$ depends on the conditional distribution $F_{t}(\cdot|x,\theta)$
via a functional $w$. For the shorthand notation, we write 
\[
w(\theta,x,t):=w(F_{t}(\cdot|x,\theta)).
\]

We consider two scenarios. First, $w(\theta,x,t)$ is fully differentiable
in $\theta$. Second, $w(\theta,x,t)$ is only directionally differentiable
in $\theta$. Two examples corresponding to each scenario are given
as follows.

\begin{eg} \label{eg:bayesian} When the planner is interested in
the (conditional) mean outcome, then a natural choice of utility function
is 
\[
w(\theta,x,t)=\int y\mathrm{d}F_{t}(y|x,\theta).
\]
This choice is standard in the literature and appropriate especially
when the target and training populations are assumed to have the same
conditional potential outcome distribution. $\blacksquare$\end{eg}

\begin{eg} \label{eg:maxmin} When the conditional potential outcome
distributions may differ slightly across the target and training populations,
the planner may wish to adopt a utility function that is robust to
distributional shifts. Following \citet{adjaho2023externallyvalidpolicychoice},
we formalize such a utility function.\footnote{For alternative approaches to robust welfare, see \citet{si2020distributionally},
\citet{kidoDistributionallyRobustPolicy2022}, and \citet{qi2023robustness}. } We first define an $\varepsilon$-neighborhood of $F_{t}(\cdot|x,\theta)$
as
\[
\mathcal{N}_{\varepsilon}:=\left\{ G_{t}(\cdot|x):\tilde{d}_{W}(G_{t}(\cdot|x),F_{t}(\cdot|x,\theta))\le\varepsilon\right\} ,
\]
where $\varepsilon>0$ is a measure of neighborhood size, and $\tilde{d}_{W}$
is the Wasserstein distance of order 1.\footnote{Formally, 
\[
\tilde{d}_{W}(F_{t}(\cdot|x),G_{t}(\cdot|x,\theta)):=\inf_{\pi\in\Pi(F_{t},G_{t})}\int\left|y-\tilde{y}\right|\mathrm{d}\pi(y,\tilde{y}),
\]
where $\Pi(F_{t},G_{t})$ denotes all couplings of $F_{t}$ and $G_{t}$.
Note that $\tilde{d}_{W}$ differs from $d_{W}$ defined in (\ref{eq:d_W}).} Given $\lambda\in[0,1]$, the planner's utility for assigning treatment
$T=t$ for individuals with covariate $X=x$ is then defined as
\[
w(\theta,x,t)=\lambda\int y\mathrm{d}F_{t}(y|x,\theta)+(1-\lambda)\inf_{G_{t}(\cdot|x)\in\mathcal{N}_{\varepsilon}}\int y\mathrm{d}G_{t}(y|x).
\]
This formulation corresponds to maxmin preferences of an ambiguity-averse
decision maker \citep{gilboa1989maxmin}. In the second term, since
the true target distributions are unknown, the planner computes the
conditional mean under the worst-case distribution within the neighborhood
of $F_{t}(\cdot|x,\theta)$, treating it as a fixed reference prior
on $Y(t)$ given $x$.\footnote{Several recent studies incorporate non-Bayesian preferences that arise
naturally in settings with set-identifiable parameters \citep{giacominiRobustBayesianInference2021Econometrica,fernandez2024robust,christensen2023optimal}.
\citet{banerjee2020theory} adopt maxmin preferences to study of the
optimal experimental design problems.}

By \citet[Remark 2.2]{adjaho2023externallyvalidpolicychoice}, this
utility function can be rewritten as 
\begin{equation}
w(\theta,x,t)=\lambda\int y\mathrm{d}F_{t}(y|x,\theta)+(1-\lambda)\max\left\{ \int y\mathrm{d}F_{t}(y|x,\theta)-\varepsilon,y_{\ell}(t)\right\} ,\label{eq:maxmin2}
\end{equation}
where $y_{\ell}(t)$ is the possible minimum values in the support
of $Y(t)$. From this expression, one finds that (\ref{eq:maxmin2})
is only directionally differentiable, and the directional derivative
of the second term of (\ref{eq:maxmin2}) for direction $h$ at $\theta$
is given by $(1-\lambda)$ times
\[
\begin{cases}
\left(\frac{\partial}{\partial\theta}\int y\mathrm{d}F_{t}(y|x,\theta)\right)^{\top}h & \text{if }\int y\mathrm{d}F_{t}(y|x,\theta)-\varepsilon>y_{\ell}(t),\\
\max\left\{ \left(\frac{\partial}{\partial\theta}\int y\mathrm{d}F_{t}(y|x,\theta)\right)^{\top}h,0\right\}  & \text{if }\int y\mathrm{d}F_{t}(y|x,\theta)-\varepsilon=y_{\ell}(t),\\
0 & \text{if }\int y\mathrm{d}F_{t}(y|x,\theta)-\varepsilon<y_{\ell}(t).
\end{cases}
\]
We note that the non-linearity of this directional derivative corresponds
to the failure of the full differentiability of (\ref{eq:maxmin2})
at $\theta$. $\blacksquare$ \end{eg}

\subsection{The planner's problem as optimal transport}

We consider a setting where the planner faces the capacity constraints
on the available treatments. To illustrate, consider a simple binary
treatment case, $T\in\mathcal{T}:=\left\{ 0,1\right\} $, where a
fraction $p$ of the target population is to be treated. We assume
that the capacity constraint binds exactly. Let $\mu(t|x)$ denote
the conditional probability of assigning treatment $T=t$ to individuals
with $X=x$. Then, under the capacity constraint, the planner's problem
can be written as
\begin{align}
 & \max_{\mu(\cdot|\cdot)}\int_{\mathcal{X}}\left\{ w(\theta,x,1)\mu(1|x)+w(\theta,x,0)\mu(0|x)\right\} \mathrm{d}F_{X}(x),\label{eq:constrained_problem}\\
 & \text{s.t. }\int\mu(1|x)\mathrm{d}F_{X}(x)=p.\nonumber 
\end{align}

We now show how to convert this constrained optimization problem into
an unconstrained one. Observe that the distribution of treatment assignments
must be a Bernoulli distribution with the success probability $p$.
With this in mind, the planner’s problem can be seen as an optimal
transport problem: the planner transports the mass of $F_{X}$ into
$F_{T}$, the Bernoulli distribution, in a way that maximizes the
social welfare. 

Formally, let $\mathcal{M}_{a}$ be the set of all couplings of $F_{X}$
and $F_{T}$. Let $d$ be a distance function on $\mathcal{X}\times\mathcal{T}$,
and define the Wasserstein distance of order 1 as
\begin{equation}
d_{W}(\mu,\nu):=\inf_{\gamma\in\Gamma(\mu,\nu)}\int d((x,t),(x^{\prime},t^{\prime}))\mathrm{d}\gamma((x,t),(x^{\prime},t^{\prime})),\label{eq:d_W}
\end{equation}
where $\Gamma(\mu,\nu)$ is the set of couplings whose marginals are
$\mu$ and $\nu$. We focus on couplings that have a finite first
moment: 
\[
\mathcal{M}:=\left\{ \mu\in\mathcal{M}_{a}:\int d((x_{0},t_{0}),(x,t))\mathrm{d}\mu<+\infty\right\} ,
\]
for some arbitrary $(x_{0},t_{0})\in\mathcal{X}\times\mathcal{T}$.
Then $(\mathcal{M},d_{W})$ becomes a metric space, and $d_{W}$ is
finite on $\mathcal{M}$ \citep[Theorem 6.9]{villani2009optimal}.
Using this setup, the original constrained problem (\ref{eq:constrained_problem})
is equivalent to the following:
\begin{equation}
\max_{\mu\in\mathcal{M}}W(\theta,\mu),
\end{equation}
where 
\begin{align*}
W(\theta,\mu) & :=\int_{\mathcal{X}\times\mathcal{T}}w(\theta,x,t)\mathrm{d}\mu=\int_{\mathcal{X}}\int_{\mathcal{T}}w(\theta,x,t)\mathrm{d}\mu(t|x)\mathrm{d}F_{X}(x).
\end{align*}
Hence, the action space of the planner is the space of couplings $(\mathcal{M},d_{W})$.
We remark that $\mathrm{d}\mu(t|x)$ becomes a conditional probability
measure when $F_{T}$ is continuous.

There are three important remarks regarding this optimal transport
formulation. First, the capacity constraint is automatically satisfied
by any coupling in $\mathcal{M}$, making the problem effectively
unconstrained. Second, this reformulation is computationally attractive,
as one can use an existing software for optimal transport. In particular,
when both $F_{X}$ and $F_{T}$ are discrete---which is often the
case in practice---the optimal transport problem reduces to a simple
linear programming problem. Finally, this approach can easily accommodate
non-binary treatment settings. We assume that $T$ follows a distribution
$F_{T}$, determined by the capacity constraints, with support $\mathcal{T}$.

\subsection{The data generating process}

When the true (finite-dimensional) parameter $\theta_{0}$ is known
to the planner, the optimal rules can be obtained by solving
\[
\max_{\mu\in\mathcal{M}}W(\theta_{0},\mu).
\]
However, since the planner does not know $\theta_{0}$ in practice,
she must select a rule in a data-driven manner. For this purpose,
data $Z^{n}=(Z_{1},\dots,Z_{n})$, which are informative about $\theta$
(and hence about $F_{t}(\cdot|x,\theta)$) are available from a training
population. We assume that the data $Z^{n}$ are i.i.d. with $Z_{i}\sim P_{\theta}$
on some space $\mathcal{Z}$ equipped with the Borel $\sigma$-algebra
$\mathcal{B}(\mathcal{Z})$. We let $P_{\theta}^{n}$ denote the joint
probability measure of $Z^{n}$. In Appendix \ref{sec:Semipara},
we consider an extension in which the sampling distribution of data
may depend on (possibly infinite-dimensional) nuisance parameters,
as in a GMM model. 

\begin{eg}\label{eg:ABK} \citet{angrist2006long} study the medium-term
effects of the PACES program in Columbia.\footnote{PACES stands for Programa de Ampliación de Cobertura de la Educación
Secundaria.} Specifically, they investigate the impact of receiving a randomly
assigned voucher (which allows attendance at private secondary schools)
on educational attainment seven years later, using the administrative
data. 

In one of their main specifications, they consider the following linear
model:
\[
Y_{i}=X_{i}^{\top}\beta+\alpha T_{i}+u_{i},
\]
where $Y_{i}$ denotes test scores, $X_{i}$ includes observable covariates
(sex and age), $T_{i}$ is an indicator for the treatment status,
and $u_{i}$ is an error term. However, not all individuals in the
sample took the exam. Because the voucher recipients were more likely
to take the exam than non-recipients, a selection issue arises. To
address this, \citet{angrist2006long} construct a modified test score
variable by censoring the observed scores at a specific quantile of
the test score distribution. Let $R_{i}$ be an indicator for exam
registration and $\tau>0$ be the censoring threshold. Then the censored
test score is defined as 
\[
Y_{i}(\tau):=\boldsymbol{1}\left\{ R_{i}Y_{i}\ge\tau\right\} Y_{i}+\boldsymbol{1}\left\{ R_{i}Y_{i}<\tau\right\} \tau.
\]
Under the assumptions that (i) $u_{i}$ is normally distributed with
mean zero, and (ii) any untested student would have scored below the
threshold $\tau$ had they taken the exam, the parameters can be consistently
estimated using a Tobit-type maximum likelihood estimator. In this
context, the parameter is $\theta=(\alpha,\beta,\sigma)$, and the
observed data is $Z^{n}=\left\{ \left(T_{i},X_{i},Y_{i}(\tau)\right):i=1,\dots,n\right\} $,
where $\sigma$ is the standard deviation of the error term $u_{i}$.
$\blacksquare$ \end{eg}

Following \citet{hirano2009asymptotics} and among others, we use
a local asymptotic framework where we perturb the data-generating
process around the true one. Let $\Theta$ be an open subset of $\mathbb{R}^{k}$
and suppose that $\theta_{0}$ is the true parameter. Let $\theta_{nh}:=\theta_{0}+h/\sqrt{n}$.
We assume that the sequence of experiments $\mathcal{E}_{n}=\left\{ P_{\theta}^{n}:\theta\in\Theta\right\} $
satisfies \emph{differentiability in quadratic mean} (DQM) at $\theta_{0}$:
there exists a function $s:\mathcal{Z}^{n}\to\mathbb{R}^{k}$ such
that 
\begin{equation}
\int\left[\mathrm{d}P_{\theta_{0}+h}^{1/2}(z)-\mathrm{d}P_{\theta_{0}}^{1/2}(z)-\frac{1}{2}h^{\prime}s(z)\mathrm{d}P_{\theta_{0}}^{1/2}(z)\right]^{2}=o(\left\lVert h\right\rVert ^{2})\quad\text{as }h\to0,
\end{equation}
where $s$ is the score function associated with $\mathcal{E}_{1}$.
Let $I_{0}=\mathbb{E}_{P_{\theta_{0}}^{n}}[ss^{\prime}]$.

The planner's \emph{statistical treatment assignment rule }(or just
\emph{rule}) $\mathcal{\mu}:\mathcal{Z}^{n}\to\mathcal{M}$ maps realizations
of data into the coupling. Let 
\[
A_{0}:=\arg\max_{\mu\in\mathcal{M}}W(\theta_{0},\mu)
\]
 be the set of couplings that maximize the welfare at $\theta_{0}$.
We define the class of sequences of rules by 
\begin{equation}
\mathcal{D}:=\left\{ \left\{ \mu_{n}\right\} :\mu_{n}(Z^{n})\stackrel{h}{\rightsquigarrow}Q_{\theta_{0},h}\text{ and }\sqrt{n}P_{\theta_{nh}}^{n}\left(\mu_{n}(Z^{n})\notin A_{0}\right)\to0\quad\forall h\in\mathbb{R}^{k},\forall\theta_{0}\in\Theta\right\} ,
\end{equation}
where $\stackrel{h}{\rightsquigarrow}$ denotes convergence in distribution
along $P_{\theta_{nh}}^{n}$ with $Z^{n}\sim P_{\theta_{nh}}^{n}$
for each $n$, and $Q_{\theta_{0},h}$ is a (possibly degenerate)
probability measure on $\mathcal{M}$. For technical reasons, we restrict
our analysis to rules satisfying $\sqrt{n}P_{\theta_{nh}}^{n}\left(\mu_{n}(Z^{n})\notin A_{0}\right)\to0$,
a condition also imposed by \citet{christensen2023optimal}. This
condition ensures that the treatment rule maximizes the welfare at
the true parameter $\theta_{0}$ with high probability in a neighborhood
of $\theta_{0}$, and that the probability of selecting a suboptimal
coupling (i.e., $\mu_{n}\not\in A_{0}$) vanishes sufficiently fast. 

Since $(\mathcal{M},d_{W})$ is compact (and thus complete and separable)
by \citet[Remark 6.19]{villani2009optimal}, we obtain the following
asymptotic representation theorem by \citet[Theorem 3.1]{van1991asymptotic}. 

\begin{prop}Let $\left\{ \mu_{n}\right\} \in\mathcal{D}$ satisfy
$\mu_{n}\overset{h}{\rightsquigarrow}Q_{\theta_{0},h}$ for all $h\in\mathbb{R}^{k}$
and $\theta_{0}\in\Theta$. Under the DQM condition, there exists
$\mu_{\infty}:\mathbb{R}^{k}\times[0,1]\to\mathcal{M}$ such that
for every $h\in\mathbb{R}^{k}$, $\mathcal{L}_{h}\left(\mu_{\infty}(\Delta,U)\right)=Q_{\theta_{0},h}$,
$\mathcal{L}_{h}\left(\Delta\right)=N(h,I_{0}^{-1})$, and $\mathcal{L}_{h}(U)=\mathrm{Unif}[0,1]$
with $U\indep\Delta$. \label{prop:ART} \end{prop}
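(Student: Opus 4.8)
The plan is to apply the asymptotic representation theorem of \citet[Theorem 3.1]{van1991asymptotic} directly, treating $\mu_n(Z^n)$ as a sequence of randomized statistics taking values in the metric space $(\mathcal{M}, d_W)$. The crucial structural inputs required by that theorem are that the underlying sequence of experiments $\mathcal{E}_n$ converges (in the Le Cam sense) to a Gaussian shift experiment, and that the sample space in which the statistics live is Polish. The first input is supplied by the DQM assumption: under DQM at $\theta_0$ with nonsingular information $I_0$, the local experiments $\{P^n_{\theta_{nh}} : h \in \mathbb{R}^k\}$ converge weakly to the Gaussian shift experiment in which one observes a single draw $\Delta \sim N(h, I_0^{-1})$, this being the standard local asymptotic normality conclusion. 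The second input is the compactness of $(\mathcal{M}, d_W)$ cited from \citet[Remark 6.19]{villani2009optimal}, which makes $\mathcal{M}$ complete and separable (compact metric spaces are automatically both), so the theorem is applicable.

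First I would fix $\theta_0 \in \Theta$ and record that, by hypothesis, $\{\mu_n\} \in \mathcal{D}$ means $\mu_n(Z^n) \stackrel{h}{\rightsquigarrow} Q_{\theta_0,h}$ along $P^n_{\theta_{nh}}$ for every $h$; that is, the sequence of $\mathcal{M}$-valued statistics is weakly convergent in the local model for each direction $h$, with limit law $Q_{\theta_0,h}$ (abbreviated $Q_h$). Next I would invoke the convergence of experiments to pass to the Gaussian shift limit, and then apply the representation theorem: it guarantees the existence of a (possibly randomized) measurable map defined on the limit experiment's sample space, enriched by an independent uniform randomization, whose law under the shifted Gaussian matches $Q_h$ for every $h$ simultaneously. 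Concretely, the limit experiment observes $\Delta \sim N(h, I_0^{-1})$, and the randomization is encoded by an independent $U \sim \mathrm{Unif}[0,1]$; the theorem yields $\mu_\infty : \mathbb{R}^k \times [0,1] \to \mathcal{M}$ with $\mathcal{L}_h(\mu_\infty(\Delta,U)) = Q_h$ and $U \indep \Delta$, which is exactly the claimed conclusion.

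The step I expect to require the most care is verifying that the measure-theoretic regularity conditions of \citet{van1991asymptotic} are genuinely met in our non-Euclidean target space. The representation theorem is typically stated for statistics valued in a Polish space, so the role of \citet[Remark 6.19]{villani2009optimal} is not merely convenience but a hard prerequisite: without separability and completeness of $(\mathcal{M}, d_W)$ one cannot guarantee the existence of the randomization kernel realizing the limit laws, since the construction relies on disintegration and on the existence of regular conditional distributions, both of which need a Polish codomain. I would therefore make explicit that compactness of $(\mathcal{M}, d_W)$ follows because $\mathcal{X} \times \mathcal{T}$ is compact (or at least that $F_X$ and $F_T$ are fixed, so the coupling set is tight), and hence the marginal-constrained set $\mathcal{M}$ of couplings is weakly compact and metrized by $d_W$. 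Once this topological groundwork is in place, the remainder is a direct citation: the theorem delivers $\mu_\infty$ and the independence $U \indep \Delta$ without further computation, and the Gaussian shift structure fixes $\mathcal{L}_h(\Delta) = N(h, I_0^{-1})$ and $\mathcal{L}_h(U) = \mathrm{Unif}[0,1]$.
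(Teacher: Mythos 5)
Your proposal matches the paper's argument exactly: the paper also obtains this proposition by noting that DQM yields convergence of the local experiments to the Gaussian shift experiment and that $(\mathcal{M},d_W)$ is compact (hence complete and separable) by \citet[Remark 6.19]{villani2009optimal}, and then directly invoking \citet[Theorem 3.1]{van1991asymptotic} for statistics valued in a Polish space. Your additional remarks on why the Polish codomain is a genuine prerequisite (existence of the randomization kernel) are correct elaboration rather than a different route.
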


This proposition states that any sequence $\left\{ \mu_{n}\right\} $
in $\mathcal{D}$ can be matched by some treatment rule $\mu_{\infty}$
in a \emph{limit experiment} where we observe a single draw $\Delta$
from a shifted normal distribution and an independent uniform random
variable $U$. This representation is useful for analyzing the asymptotic
properties of rules, as the limit experiment is more analytically
tractable than the original experiments $\mathcal{E}_{n}$. 

\begin{rmk} \citet{hirano2009asymptotics} study the optimality of
treatment assignment rules after conditioning on a fixed covariate
value $X=x$. In our notation, this is equivalent to finding an optimal
conditional probability $\mu(\cdot|x)$ for each $x$. Accordingly,
they derive the asymptotic representation of $\mu(\cdot|x)$ as a
function of $Z^{n}$ by applying a version of the representation theorem
(see their Proposition 3.1). However, because it is difficult to accommodate
the capacity constraints within this framework, we instead apply the
representation theorem to couplings $\mu\in\mathcal{M}$. Note that
in our setup, the map $Z^{n}\mapsto\mu(Z^{n})$ takes values in $\mathcal{M}$,
rather than the unit interval. \end{rmk}

\section{Optimal decisions\label{sec:Optimal-decisions}}

\subsection{Decision theoretic framework and rules\label{subsec:framework_rules}}

We begin by introducing a decision theoretic framework to evaluate
the performance of a sequence of rules $\left\{ \mu_{n}\right\} \in\mathcal{D}$.
Let 
\[
W_{\mathcal{M}}^{*}(\theta):=\max_{\mu^{\prime}\in\mathcal{M}}W(\theta,\mu^{\prime})
\]
 denote the maximum attainable welfare at $\theta$. Following the
literature, we define the \emph{welfare regret} $W_{\mathcal{M}}^{*}(\theta)-W(\theta,\mu)$
as the loss incurred from choosing a coupling $\mu\in\mathcal{M}$.
Accordingly, the \emph{risk} associated with the map $Z^{n}\mapsto\mu(Z^{n})\in\mathcal{M}$
at $\theta$ is given by
\begin{equation}
R(\mu,\theta):=\mathbb{E}_{P_{\theta}^{n}}\left[W_{\mathcal{M}}^{*}(\theta)-W(\theta,\mu(Z^{n}))\right],
\end{equation}
where the expectation is taken with respect to the sampling distribution
$P_{\theta}^{n}$ of $Z^{n}$. The planner's objective is to minimize
the risk by constructing data-driven rules $\left\{ \mu_{n}\right\} \in\mathcal{D}$.

Let $\pi$ be any prior density function on $\Theta$ that is continuous
and positive at $\theta_{0}$. A sequence of rules $\left\{ \mu_{n}^{*}\right\} \in\mathcal{D}$
is said to be \emph{average optimal} if $\left\{ \mu_{n}^{*}\right\} $
attains the infimum of the asymptotic risk function: 
\begin{equation}
\inf_{\left\{ \mu_{n}\right\} \in\mathcal{D}}\liminf_{n\to\infty}\int\sqrt{n}R(\mu_{n},\theta_{nh})\pi(\theta_{nh})\mathrm{d}h.
\end{equation}

Our main goal is to construct a sequence of rules that is average
optimal. A natural candidate is the plug-in rule, which has been shown
to be average optimal by \citet{hirano2009asymptotics} in settings
without capacity constraints. To formalize this rule, let $\hat{\theta}_{n}$
be a \emph{best regular estimator} of $\theta_{0}$ such that 
\[
\sqrt{n}(\hat{\theta}_{n}-\theta_{nh})\stackrel{h}{\rightsquigarrow}N(0,I_{0}^{-1})\quad\text{as }n\to\infty.
\]
The maximum likelihood estimator or the Bayesian posterior mean estimator
are typical examples of best regular estimators. The \emph{plug-in
rule }is the sequence $\{\mu_{n}^{P}(Z^{n})\}$, where for each $n$,
\[
\mu_{n}^{P}(Z^{n})\in\arg\max_{\mu\in\mathcal{M}}W(\hat{\theta}_{n},\mu).
\]

Another rule we examine is the \emph{Bayesian rule} $\{\mu_{n}^{B}(Z^{n})\}$,
defined for each $n$ by 
\begin{align*}
\mu_{n}^{B}(Z^{n}) & \in\arg\max_{\mu\in\mathcal{M}}\int_{\Theta}W(\theta,\mu)\pi_{n}(\theta|Z^{n})\mathrm{d}\theta,
\end{align*}
where $\pi_{n}(\theta|Z^{n})$ is the posterior density obtained from
a strictly positive, continuous prior density $\pi$ on $\Theta$.
\citet{christensen2023optimal} show the average optimality of such
Bayesian rules in discrete choice problems when (i) the model is partially
identified, and (ii) decision rules are not fractional. 

\subsection{\label{subsec:An-optimality-result}An optimality result for the
Bayesian rule}

We provide an optimality result for a Bayesian rule under the directional
differentiability of $w$. We discuss asymptotic properties of the
plug-in rule in Subsection \ref{subsec:plugin-asym}. We impose the
following assumptions for results in this subsection. As in \citet{clarke2002information}
and \citet{christensen2023optimal}, we say that a family $\mathcal{P}$
is \emph{locally quadratic }if for any $\theta_{0}\in\Theta$, $D_{\mathrm{KL}}(p_{\theta}\mathrel{\Vert}p_{\theta^{\prime}})\le2(\theta-\theta^{\prime})^{\top}I_{0}(\theta-\theta^{\prime})$
holds for any $\theta$ and $\theta^{\prime}$ belonging to a neighborhood
of $\theta_{0}$, where $D_{\mathrm{KL}}(p_{\theta}\mathrel{\Vert}p_{\theta^{\prime}})$
is the Kullback-Leibler divergence with respect to a common dominating
measure $\nu$. Also, we say $\mathcal{P}$ is \emph{sound} if weak
convergence of $P_{\theta}$ to $P_{\theta^{\prime}}$ is equivalent
to the convergence of $\theta$ to $\theta^{\prime}$ for probability
measures $P_{\theta},P_{\theta^{\prime}}$ and parameters $\theta,\theta^{\prime}\in\Theta$.

\begin{assu}\label{assu:cms25:assu2} (i) $\Theta$ is open. (ii)
$\mathcal{P}$ is DQM at any $\theta_{0}$. (iii) $I_{0}$ is finite
and nonsingular at any $\theta_{0}$. (iv) $\mathcal{P}$ is locally
quadratic. (v) $\mathcal{P}$ is sound. \end{assu}

The first three conditions are standard assumptions in local asymptotic
frameworks. Conditions (iv) and (v) ensure that Schwartz’s theorem---which
originally establishes posterior consistency in a space of density
functions (see, for example, \citet{ghosh2003Bayesian} and \citet{ghosal2017fundamentals})---can
be applied in a parametrized setting, as in \citet{clarke2002information}.
We impose these assumptions to show that the Bayesian rule $\mu_{n}^{B}$
satisfies $\sqrt{n}P_{\theta_{nh}}^{n}\left(\mu_{n}^{B}(Z^{n})\notin A_{0}\right)\to0$.
The same conditions are also imposed by \citet{christensen2023optimal}. 

\begin{assu}\label{assu:villani_thm6-9} $\mathcal{X}\times\mathcal{T}$
is compact in the product metric space where the distance function
$d$ is equipped. \end{assu}

For example, this condition is satisfied if $\mathcal{X}$ is a compact
metric space, and $\mathcal{T}$ is a finite discrete space.

\begin{assu}\label{assu:w} (i) $w(\theta,x,t)$ is bounded continuous
on $\Theta$ for any $(x,t)\in\mathcal{X}\times\mathcal{T}$. (ii)
$w(\theta,x,t)$ is continuous on $\mathcal{X}\times\mathcal{T}$
uniformly over $\Theta$. \end{assu}

Note that discrete covariates are compatible with condition (ii),
since the metric $d$ can be defined to incorporate the discrete metric.

\begin{assu}\label{assu:w-dot} (i) $w(\theta,\cdot)$ is directionally
differentiable (as a function on $\mathcal{X}\times\mathcal{T}$)
at any $\theta\in\Theta$ with derivative $\dot{w}_{\theta}$.\footnote{That is, for any $r_{n}\downarrow0$ and $h_{n}\to h$, 
\[
\max_{(x,t)\in\mathcal{X}\times\mathcal{T}}\left|\frac{w(\theta_{0}+r_{n}h_{n},x,t)-w(\theta_{0},x,t)}{r_{n}}-\dot{w}_{\theta_{0}}(x,t;h)\right|\to0.
\]
} (ii) $\dot{w}_{\theta_{0}}(x,t;h)$ is continuous on $\mathcal{X}\times\mathcal{T}$
for any $h$. (iii) $\dot{w}_{\theta_{0}}(\cdot;h)$ is uniformly
dominated by a function $K(h)$ that grows at most subpolynomially
of order $p$; i.e., $\max_{(x,t)\in\mathcal{X}\times\mathcal{T}}\left|\dot{w}_{\theta_{0}}(x,t;h)\right|\le K(h)\le1+\left\lVert h\right\rVert ^{p}$.

\end{assu}

Condition (i) imposes a uniform version of directional differentiability,
rather than requiring it only pointwise in $(x,t)$. For condition
(iii), a similar polynomial growth condition appears in the study
of Bayes estimators \citep[Section 10.3]{van2000asymptotic}. Choosing
$p=1$ is sufficient for (\ref{eq:maxmin2}) provided $\max_{(x,t)\in\mathcal{X}\times\mathcal{T}}\left\lVert \frac{\partial}{\partial\theta}\int y\mathrm{d}F_{t}(y|x,\theta_{0})\right\rVert <\infty$.

\begin{assu}\label{assu:prior} (i) The prior (Lebesgue) density
function $\pi$ is positive, continuous, and bounded on $\Theta$
. (ii) $\int\left\lVert \theta\right\rVert ^{p}\pi(\theta)\mathrm{d}\theta<\infty$.
\end{assu}

The order $p$ used in condition (ii) must align with the order in
Assumption \ref{assu:w-dot} (iii). 

\begin{assu}\label{assu:CMS25:lem:8} There exists $K$ such that
for all $(\mu,\nu)\in A_{0}\times(\mathcal{M}\setminus A_{0})$, $W(\theta_{0},\mu)>K\ge W(\theta_{0},\nu).$
 \end{assu}

Note that $W(\theta_{0},\mu)$ is constant over $A_{0}$. This condition
requires that the value of $W(\theta_{0},\cdot)$ is uniformly separated
between $A_{0}$ and $\mathcal{M}\setminus A_{0}$. The requirement
arises because $\mathcal{M}$ is infinite; it is unnecessary when
the action space is finite. To see an implication from this condition,
note that the correspondence $A(\theta):=\arg\max_{\mu\in\mathcal{M}}W(\theta,\mu)$
is upper hemicontinuous at $\theta_{0}$ by the theorem of maximum
of Berge. From this observation, one can show that Assumption \ref{assu:CMS25:lem:8}
implies that for sufficiently small $\varepsilon>0$ we have that
$A(\theta)=A_{0}$ for all $\theta\in N_{\varepsilon}(\theta_{0})$,
which means that $A(\theta)$ is invariant around the neighborhood
of $\theta_{0}$. 

We note that the sequence $\{\mu_{n}^{B}(Z^{n})\}$ of Bayesian rules
may not be uniquely determined as our framework allows for multiple
maximizers of the objective function. This non-uniqueness complicates
the analysis since we cannot directly apply the argmax theorem, which
is often used to study the asymptotic behavior of general argmax-functionals. 

To deal with this, we utilize a penalized version of the Bayesian
rule. Let $\nu\in\mathcal{M}$ be any fixed reference measure and
$H:\mathcal{M}\to\mathbb{R}_{+}$ be a functional given by $\mu\mapsto\left(d_{W}\left(\mu,\nu\right)\right)^{2}$,
which will serve as a penalty function of a maximization problem.
For example, we can let $\nu$ be the product measure of $F_{X}$
and $F_{T}$. The functional $H$ has following properties:

\begin{prop}\label{prop:penalty} $H$ is a nonnegative, continuous,
strictly convex, and bounded functional on $(\mathcal{M},d_{W})$.
\end{prop}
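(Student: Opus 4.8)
The plan is to verify the four properties in increasing order of difficulty, treating strict convexity as the crux. \emph{Nonnegativity} is immediate since $d_{W}$ is a metric (equivalently, the transport cost integrand $d$ is nonnegative), so $H(\mu)=d_{W}(\mu,\nu)^{2}\ge 0$. \emph{Boundedness} follows from the compactness of $(\mathcal{M},d_{W})$ noted before Proposition \ref{prop:ART} (via \citet[Remark 6.19]{villani2009optimal}): a compact metric space has finite diameter $D:=\sup_{\mu\in\mathcal{M}}d_{W}(\mu,\nu)<\infty$, whence $0\le H\le D^{2}$. For \emph{continuity}, I would first observe that the triangle inequality gives $|d_{W}(\mu,\nu)-d_{W}(\mu',\nu)|\le d_{W}(\mu,\mu')$, so $\mu\mapsto d_{W}(\mu,\nu)$ is $1$-Lipschitz on $(\mathcal{M},d_{W})$; composing this continuous map with the continuous function $t\mapsto t^{2}$ yields continuity of $H$.

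The substantive work is \emph{strict convexity}. First I would record that $\mathcal{M}$ is a convex set: a mixture $(1-\lambda)\mu_{0}+\lambda\mu_{1}$ of two couplings with marginals $F_{X}$ and $F_{T}$ again has those marginals, and finiteness of the first moment is preserved. Next I would establish (non-strict) convexity of the inner map $\mu\mapsto d_{W}(\mu,\nu)$ by the standard gluing argument: taking optimal plans $\gamma_{0}\in\Gamma(\mu_{0},\nu)$ and $\gamma_{1}\in\Gamma(\mu_{1},\nu)$, the mixture $\gamma_{\lambda}:=(1-\lambda)\gamma_{0}+\lambda\gamma_{1}$ lies in $\Gamma(\mu_{\lambda},\nu)$ with $\mu_{\lambda}:=(1-\lambda)\mu_{0}+\lambda\mu_{1}$, and evaluating the (linear) transport cost on $\gamma_{\lambda}$ gives
\[
d_{W}(\mu_{\lambda},\nu)\le(1-\lambda)\,d_{W}(\mu_{0},\nu)+\lambda\,d_{W}(\mu_{1},\nu).
\]

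To upgrade convexity of $H=d_{W}(\cdot,\nu)^{2}$ to strictness, I would compose the above with the strictly convex, nondecreasing function $t\mapsto t^{2}$ on $[0,\infty)$. Writing $a:=d_{W}(\mu_{0},\nu)$ and $b:=d_{W}(\mu_{1},\nu)$, the inequality $[(1-\lambda)a+\lambda b]^{2}<(1-\lambda)a^{2}+\lambda b^{2}$ holds whenever $a\ne b$ and $\lambda\in(0,1)$, and combined with the displayed bound this already gives $H(\mu_{\lambda})<(1-\lambda)H(\mu_{0})+\lambda H(\mu_{1})$ in that case. \textbf{The hard part is the remaining case $a=b$}, where squaring alone is not enough and one must instead show the inner map is \emph{strictly} convex, i.e.\ $d_{W}(\mu_{\lambda},\nu)<a$ for $\mu_{0}\ne\mu_{1}$ and $\lambda\in(0,1)$. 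I expect this to be the main obstacle, since the order-$1$ Wasserstein distance is generally not strictly convex; the route I would pursue is to argue that the glued plan $\gamma_{\lambda}$ fails to be optimal for $(\mu_{\lambda},\nu)$ when $\mu_{0}\ne\mu_{1}$, so that the inequality defining $d_{W}(\mu_{\lambda},\nu)$ is strict. Establishing this suboptimality---exploiting the coupling structure of $\mathcal{M}$ (common marginals $F_{X},F_{T}$) to rule out that a nontrivial segment of distinct couplings remains on a single $d_{W}$-sphere about $\nu$---is where the genuine effort lies, and I would treat it as the technical heart of the proof.
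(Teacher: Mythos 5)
Your handling of nonnegativity, boundedness (via compactness of $(\mathcal{M},d_{W})$), continuity (the paper instead cites \citet[Corollary 6.11]{villani2009optimal}, but your Lipschitz argument from the triangle inequality is equally valid), and convexity of the inner map $\mu\mapsto d_{W}(\mu,\nu)$ via the mixture-of-optimal-plans construction all coincide with the paper's proof. The issue is strict convexity, and you have put your finger on exactly the right spot: composing with $t\mapsto t^{2}$ yields a strict inequality only when $a:=d_{W}(\mu_{0},\nu)\ne d_{W}(\mu_{1},\nu)=:b$, and the case $a=b$ requires strict convexity of $\mu\mapsto d_{W}(\mu,\nu)$ itself along the segment. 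Your proposal leaves this case open, and the route you sketch (showing the glued plan is suboptimal) cannot work in general: by Kantorovich duality, $d_{W}(\cdot,\nu)$ is a supremum of linear functionals of $\mu$, hence convex but typically piecewise affine when $\mathcal{X}\times\mathcal{T}$ is discrete; if a single Kantorovich potential is optimal for both $\mu_{0}$ and $\mu_{1}$ and assigns them the same value $a$, then $d_{W}(\cdot,\nu)\equiv a$ on the entire segment and $H$ is constant there, so strict convexity genuinely fails. This is not merely a technical obstacle you did not surmount --- it is a real counterexample threat.

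You should also know that the paper's own proof does not resolve this step: it asserts that because $\mu\mapsto d_{W}(\mu,\nu)$ is convex and nonnegative and $x\mapsto x^{2}$ is increasing and strictly convex, the composite is strictly convex. That implication is false precisely in the case you isolated ($a=b$ with $\mu_{0}\ne\mu_{1}$), so the published argument contains the same gap your proposal flags. In short, your write-up is incomplete, but the missing piece sits at a step the paper itself does not correctly justify; repairing it appears to require either a genuinely strictly convex penalty (e.g., an entropic one in the spirit of \citet{nutzIntroductionEntropicOptimal2022}) or an additional argument ruling out flat faces of the $d_{W}$-ball restricted to $\mathcal{M}$.
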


\begin{proof} See Appendix \ref{sec:penalty}. \end{proof}

Then we define the penalized Bayesian rule by 
\[
\mu_{n,\varepsilon}^{B}(z):=\arg\max_{\mu\in\mathcal{M}}\int\sqrt{n}W(\theta,\mu)\pi_{n}(\theta|z)\mathrm{d}\theta-\varepsilon H(\mu),
\]
for $\varepsilon>0$. Note that $\mu_{n,\varepsilon}^{B}$ becomes
the unique maximizer of this penalized problem by the strict convexity
of $H$. We then obtain a useful result on the penalized rules $\{\mu_{n,\varepsilon}^{B}\}$
by following the arguments of \citet{nutzIntroductionEntropicOptimal2022}.\footnote{\citet{nutzIntroductionEntropicOptimal2022} provides corresponding
results by choosing $H$ as the Kullback-Leibler (KL) information
criterion between $\mu\in\mathcal{M}$ and any reference measure in
$\mathcal{M}$. KL is nonnegative and strictly convex in $\mu$, but
not continuous and bounded. Here we impose stronger requirements for
the penalty function $H$, which is needed to handle weak convergence
of functionals on $\mathcal{M}$ to study the asymptotic properties
of rules. Accordingly, the mode of convergence of $\mu_{n,\varepsilon}^{B}(z)$
is modified to weak convergence from convergence in total variation,
see \citet[Theorem 5.5]{nutzIntroductionEntropicOptimal2022}.} 

\begin{prop}\label{prop:nutz} Let $\mathcal{M}_{opt}(z):=\arg\max_{\mu\in\mathcal{M}}\int\sqrt{n}W(\theta,\mu)\pi_{n}(\theta|z)\mathrm{d}\theta$.
For each $z\in\mathcal{Z}^{n}$, there exists a unique $\mu_{n}^{B}(z)\in\mathcal{M}$
such that (i) $\mu_{n,\varepsilon}^{B}(z)$ converges weakly to $\mu_{n}^{B}(z)$
as $\varepsilon\downarrow0$, (ii) $\mu_{n}^{B}(z)\in\mathcal{M}_{opt}(z)$,
and (iii) $\mu_{n}^{B}(z)=\arg\min_{\mu\in\mathcal{M}_{opt}(z)}H(\mu)$.
\end{prop}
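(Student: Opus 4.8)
The plan is to treat this as a classical Tikhonov-type regularization argument, in which a strictly convex penalty selects a unique element from the (possibly large) set of maximizers of an affine objective. Write $\Phi(\mu):=\int_{\Theta}\sqrt{n}W(\theta,\mu)\,\mathrm{d}\pi_{n}(\theta|z)$ for the unpenalized objective. Since $W(\theta,\mu)=\int w(\theta,x,t)\,\mathrm{d}\mu$ is linear in $\mu$ and integration against the posterior preserves linearity, $\Phi$ is affine in $\mu$; moreover, by Assumption \ref{assu:w} and dominated convergence (the posterior being a probability measure and $w$ bounded), the averaged weight $(x,t)\mapsto\int\sqrt{n}w(\theta,x,t)\,\mathrm{d}\pi_{n}(\theta|z)$ is continuous and bounded, so $\Phi$ is continuous on $(\mathcal{M},d_{W})$. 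The key topological facts I would lean on are that $\mathcal{M}$ is compact (by \citet[Remark 6.19]{villani2009optimal}) and convex, and that on a compact space $d_{W}$ metrizes weak convergence, so $d_{W}$-convergence and weak convergence coincide throughout.

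With these in place, I would first record the easy structural facts. By Weierstrass, the continuous affine $\Phi$ attains its maximum on the compact set $\mathcal{M}$, so $\mathcal{M}_{opt}(z)$ is nonempty; as the preimage of the maximal value under $\Phi$, it is closed (hence compact) and convex. Since $H$ is continuous, strictly convex, and bounded by Proposition \ref{prop:penalty}, it admits a unique minimizer over the compact convex set $\mathcal{M}_{opt}(z)$; I would take this minimizer as the definition of $\mu_{n}^{E}(z)$, which immediately yields uniqueness together with parts (ii) and (iii). The penalized maximizer $\mu_{n,\varepsilon}^{E}(z)$ is well defined and unique because $\Phi-\varepsilon H$ is strictly concave (affine minus a positive multiple of a strictly convex functional) on the convex set $\mathcal{M}$.

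It remains to establish the convergence claim (i). The engine is the optimality inequality $\Phi(\mu_{n,\varepsilon}^{E})-\varepsilon H(\mu_{n,\varepsilon}^{E})\ge\Phi(\mu_{n}^{E}(z))-\varepsilon H(\mu_{n}^{E}(z))$. Writing $C:=\sup_{\mu}H(\mu)<\infty$ and using $\mu_{n}^{E}(z)\in\mathcal{M}_{opt}(z)$, this sandwiches $\Phi(\mu_{n,\varepsilon}^{E})$ between $\max\Phi-\varepsilon C$ and $\max\Phi$, so $\Phi(\mu_{n,\varepsilon}^{E})\to\max\Phi$ as $\varepsilon\downarrow0$; hence, by continuity of $\Phi$, every weak limit point of $\{\mu_{n,\varepsilon}^{E}\}$ lies in $\mathcal{M}_{opt}(z)$. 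Rearranging the same inequality and using that $\mu_{n}^{E}(z)$ maximizes $\Phi$ gives $H(\mu_{n,\varepsilon}^{E})\le H(\mu_{n}^{E}(z))$ for every $\varepsilon$, so any weak limit point minimizes $H$ over $\mathcal{M}_{opt}(z)$ and therefore equals the unique minimizer $\mu_{n}^{E}(z)$. Since $\mathcal{M}$ is sequentially compact and every subsequential weak limit coincides with $\mu_{n}^{E}(z)$, the full net converges weakly to $\mu_{n}^{E}(z)$, giving (i).

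I expect the main obstacle to be the topological bookkeeping rather than the selection argument itself: verifying the weak (equivalently $d_{W}$) continuity of $\Phi$ and justifying the interchange of posterior integration with the limiting operations are where Assumption \ref{assu:w} and the compactness of $\mathcal{M}$ do the real work. This is precisely the point flagged in the footnote contrasting with \citet{nutzIntroductionEntropicOptimal2022}: because the penalty $H$ is built from $d_{W}$ rather than the KL criterion, the relevant mode of convergence is weak convergence, and one must ensure that all continuity and minimization statements are phrased for $d_{W}$ on the compact space $(\mathcal{M},d_{W})$.
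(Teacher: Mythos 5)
Your proof is correct, but it takes a genuinely different route from the paper's. The paper imports the machinery of \citet{nutzIntroductionEntropicOptimal2022} wholesale: it establishes monotonicity of $H(\mu_{\varepsilon})$ and of $\int c\,\mathrm{d}\mu_{\varepsilon}$ in $\varepsilon$, recasts each $\mu_{\varepsilon}$ as the $H$-minimizer over a nested family of closed convex sets $\mathcal{Q}_{\varepsilon}=\{\mu:\int c\,\mathrm{d}\mu\le\int c\,\mathrm{d}\mu_{\varepsilon}\}$ whose intersection is $\mathcal{M}_{opt}$, proves weak convergence of $H$-minimizing sequences via a parallelogram-type midpoint argument (using the bound $d_{W}\le D\lVert\cdot\rVert_{\mathrm{TV}}$ on a bounded space), and separately verifies the value convergence $\mathcal{C}_{\varepsilon}\to\mathcal{C}_{0}$. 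You instead run the standard Tikhonov-selection argument directly: the optimality inequality gives both $\Phi(\mu_{n,\varepsilon}^{E})\to\max\Phi$ and $H(\mu_{n,\varepsilon}^{E})\le H(\mu_{n}^{E}(z))$, compactness of $(\mathcal{M},d_{W})$ supplies subsequential limits, and continuity of $\Phi$ and $H$ plus uniqueness of the constrained minimizer identifies every limit point. Given that the paper already assumes $\mathcal{X}\times\mathcal{T}$ compact (so $\mathcal{M}$ is compact) and proves $H$ continuous, strictly convex, and bounded (Proposition \ref{prop:penalty}), your route is shorter and entirely self-contained; what the paper's route buys is independence from compactness and from full continuity of $H$ (the midpoint argument needs only convexity and the TV comparison), which is why it survives in Nutz's original KL setting where $H$ is neither continuous nor bounded --- but that generality is not needed here. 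Two small points to tighten: the continuity and boundedness of the averaged weight $(x,t)\mapsto\int\sqrt{n}\,w(\theta,x,t)\,\mathrm{d}\pi_{n}(\theta|z)$ requires a dominating function uniform over $\theta$, which Assumption \ref{assu:w}(ii) does not literally provide (it bounds $w$ for each fixed $\theta$); the paper quietly sidesteps this by proving the proposition ``under a general framework using continuous and bounded cost function $c$,'' so you are no worse off, but you should flag the assumption rather than attribute it to dominated convergence alone. Also note that your argument delivers the value convergence $\lim_{\varepsilon\downarrow0}\max_{\mu}\mathcal{Q}_{n}(\mu,\varepsilon;z)=\max_{\mu}\mathcal{Q}_{n}(\mu,0;z)$ as a byproduct, which the paper needs later (in Lemma \ref{lem:events_equivalence}) and obtains from its Lemmas \ref{lem:nutz5.2} and \ref{lem:nutz5.4}.
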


\begin{proof} See Appendix \ref{sec:Proof_Lemma_Nutz}. \end{proof}

Thus, we can construct a unique $\{\mu_{n}^{B}(z)\}$ where $\mu_{n}^{B}(z)$
minimizes the penalty function $H$ over $\mathcal{M}_{opt}(z)$.
The following result is stated in terms of $\{\mu_{n}^{B}(Z^{n})\}$
defined in this way. 

\begin{thm}\label{thm:optimality} Under Assumptions \ref{assu:cms25:assu2}--\ref{assu:CMS25:lem:8},
$\{\mu_{n}^{B}\}\in\mathcal{D}$ is average optimal. \end{thm}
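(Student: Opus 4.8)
The plan is to run the two-sided limits-of-experiments argument in the tradition of \citet{hirano2009asymptotics} and \citet{christensen2023optimal}: first bound the integrated local asymptotic risk of an \emph{arbitrary} sequence in $\mathcal{D}$ from below by the Bayes risk of a single limit decision problem, and then show that $\{\mu_n^E\}$ attains exactly this bound. Writing the scaled regret as $\rho_n(h):=\sqrt{n}[W_{\mathcal{M}}^*(\theta_{nh})-W(\theta_{nh},\mu_n(Z^n))]$, the first task is to pass $\rho_n$ to its limit in the experiment of Proposition \ref{prop:ART}.

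To carry out this reduction I would use the uniform directional differentiability in Assumption \ref{assu:w-dot}(iv) to replace $\sqrt{n}[W(\theta_{nh},\mu)-W(\theta_0,\mu)]$ by the functional $\dot{W}_{\theta_0}(\mu;h):=\int\dot{w}_{\theta_0}(x,t;h)\,\mathrm{d}\mu$, uniformly over $\mu\in\mathcal{M}$, and to obtain a matching expansion of the value function $W_{\mathcal{M}}^*(\theta_{nh})$ through a Danskin/envelope argument over the compact set $\mathcal{M}$ (compact by Assumption \ref{assu:villani_thm6-9} and \citet[Remark 6.19]{villani2009optimal}). Since $w$ and $\dot{w}_{\theta_0}$ are continuous and bounded in $(x,t)$ (Assumptions \ref{assu:w}(ii), \ref{assu:w-dot}(ii)), the map $\mu\mapsto\dot{W}_{\theta_0}(\mu;h)$ is continuous for $d_W$, so Proposition \ref{prop:ART}, which matches $\mu_n(Z^n)$ to a limit rule $\mu_\infty(\Delta,U)$ with $\Delta\sim N(h,I_0^{-1})$ and $U\indep\Delta$ uniform, together with the portmanteau and continuous-mapping theorems along $P_{\theta_{nh}}^n$, delivers a limit loss $\ell(h,\mu_\infty(\Delta,U))$ for $\rho_n$. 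Because $\pi$ is continuous and positive at $\theta_0$, the localized prior $\pi(\theta_{nh})\,\mathrm{d}h$ converges to a flat prior on $h$, and the domination in Assumption \ref{assu:reverse-Fatou} (with the moment control of Assumption \ref{assu:posterior-consistent}(ii)) licenses exchanging $\liminf_n$ with the integral. This bounds $\liminf_n\int\sqrt{n}R(\mu_n,\theta_{nh})\,\mathrm{d}\pi(\theta_{nh})\,\mathrm{d}h$ from below by $\int\mathbb{E}_h[\ell(h,\mu_\infty(\Delta,U))]\,\mathrm{d}h$; rewriting the inner expectation as a posterior expectation with posterior $N(\Delta,I_0^{-1})$ reduces the bound to a Bayes problem in which, for each $\Delta$, one chooses a coupling maximizing $\int\dot{W}_{\theta_0}(\mu;h)\,\mathrm{d}N(\Delta,I_0^{-1})(h)$. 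The nonlinearity of $h\mapsto\dot{w}_{\theta_0}(x,t;h)$ in the directionally differentiable case is precisely what makes this problem genuinely Bayesian and distinct from a plug-in.

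Next I would show that $\{\mu_n^E\}\in\mathcal{D}$ attains this bound. The crux is a Bernstein--von Mises step: under DQM, posterior consistency (Assumption \ref{assu:posterior-consistent}(i)), and the uniform expansion of Assumption \ref{assu:w-dot}(iv), the posterior localizes so that $\int\sqrt{n}W(\theta,\mu)\,\mathrm{d}\pi_n(\theta\mid Z^n)=\sqrt{n}W(\theta_0,\mu)+\int\dot{W}_{\theta_0}(\mu;h)\,\mathrm{d}\tilde{\pi}_n(h)+o(1)$ uniformly in $\mu$, with the localized posterior $\tilde{\pi}_n\rightsquigarrow N(\Delta,I_0^{-1})$. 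The leading term drives the maximizer toward the optimal set, while the second term selects within it the limit-experiment Bayes coupling. To convert convergence of objectives into convergence of maximizers despite possible non-uniqueness, I would work with the penalized rule $\mu_{n,\varepsilon}^E$: by Proposition \ref{prop:penalty} the penalty $H$ is continuous, bounded, and strictly convex, giving a unique maximizer, and by Proposition \ref{prop:nutz} these converge as $\varepsilon\downarrow0$ to the $H$-minimal element $\mu_n^E$ of the argmax set. An epi-convergence argument for the penalized objectives then shows $\mu_n^E(Z^n)$ converges weakly along $P_{\theta_{nh}}^n$ to the uniquely $H$-selected limit Bayes coupling, which simultaneously establishes $\{\mu_n^E\}\in\mathcal{D}$ and matches its asymptotic risk to the lower bound.

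I expect the main obstacle to be this last step: propagating the penalization-based selection through the double limit (first $n\to\infty$, then $\varepsilon\downarrow0$, along a diagonal subsequence) while keeping the convergence uniform enough over $\mu\in\mathcal{M}$ to identify the weak limit of $\mu_n^E(Z^n)$ with the limit-experiment Bayes rule. The uniform expansion of the posterior-averaged welfare is exactly where Assumption \ref{assu:w-dot}(iv) does its work---it plays the role of the high-level condition in \citet[Assumption 3(ii)]{christensen2023optimal}---and verifying that the argmax correspondence is continuous at the limiting objective, so that the $H$-minimal selection passes to the limit, are the technically delicate pieces.
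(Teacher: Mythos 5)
Your proposal follows essentially the same route as the paper: a Fatou/reverse-Fatou two-sided bound on the integrated local risk expressed through the limit experiment of Proposition \ref{prop:ART} and a Danskin-type directional derivative of the value function (the paper's Lemma \ref{lem:HDD}), followed by a Bernstein--von Mises localization and the $H$-penalized selection of Propositions \ref{prop:penalty} and \ref{prop:nutz} to show $\{\mu_n^E\}$ lands in $A_0$ with probability approaching one and attains the bound. The one place where the execution differs is the step you yourself flag as delicate: where you invoke epi-convergence to pass the penalized argmax to the limit, the paper instead proves an exact equivalence between the event $\{\mu_n^E(z)\in G\}$ and an event stated in terms of $\lim_{\varepsilon\downarrow0}\max_{\mu}$ of the penalized objectives, establishes weak convergence of the objective process in $\ell^{\infty}(\mathbb{D})$, and applies the continuous mapping theorem to that operator before concluding via the Portmanteau lemma---a concrete implementation of the same idea.
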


\begin{proof} See Appendix \ref{sec:Proof-of-Theorem}. \end{proof}

Our proof strategy follows the approaches of \citet{hirano2009asymptotics},
\citet{christensen2023optimal}, and \citet{xu2024jmp}, but requires
suitable extensions since our action space $(\mathcal{M},d_{W})$
is more complicated than theirs. This gives rise to technical challenges
specific to our framework, which we address by drawing on tools from
optimal transport.

\begin{rmk} Formally, the construction of $\{\mu_{n}^{B}\}$ depends
on the choice of a prior distribution. However, any prior satisfying
Assumption \ref{assu:prior} leads to the same conclusion.\end{rmk}

\begin{rmk} $\{\mu_{n}^{B}(Z^{n})\}$ remains average optimal if
the directional differentiability in Assumption \ref{assu:w-dot}
is strengthened to the full differentiability. \end{rmk}

\begin{rmk}\label{rmk: no_first-order_ties} \citet{christensen2023optimal}
impose a condition called \emph{no first-order ties}, which requires
the uniqueness of the minimizer of the loss function in the limit
experiment. This condition addresses an indeterminacy: their treatment
rule must be matched to one of the minimizers, but it is not uniquely
determined without this condition. By contrast, our construction of
the rule $\{\mu_{n}^{B}(Z^{n})\}$ allows us to avoid imposing this
condition, since $\mu_{n}^{B}(z)$ is matched with $\mu_{\infty}^{*}$
in the limit experiment where $\mu_{\infty}^{*}$ uniquely minimizes
the penalty function $H$ over the set of maximizers of (\ref{eq:max_optimal})
defined below.\footnote{\citet{xu2024jmp} assumes uniqueness of the rule and therefore does
not encounter the issue of non-uniqueness we addressed here.} \end{rmk}

\subsection{\label{subsec:plugin-asym}Asymptotic behavior of the plug-in rules}

We will explore the asymptotic behavior of the plug-in rules to compare
with the Bayesian rule. 

\subsubsection{When $w$ is directionally differentiable}

As a part of the proof of Theorem \ref{thm:optimality}, we show that
any average optimal rule $\{\mu_{n}(Z^{n})\}\in\mathcal{D}$ will
be matched by the rule in the limit experiment $\mu_{\infty}$ where
$\mu_{\infty}$ satisfies
\begin{equation}
\mu_{\infty}(\Delta)\in\arg\max_{\mu\in A_{0}}\int\int\dot{w}_{\theta_{0}}(x,t;s)\mathrm{d}N(\Delta,I_{0}^{-1})(s)\mathrm{d}\mu,\label{eq:max_optimal}
\end{equation}
with $\Delta\sim N(h,I_{0}^{-1})$ (see Lemma \ref{lem:lower_bound}).
The Bayesian rule satisfies this condition. We want to check whether
the plug-in rule $\mu_{n}^{P}(Z^{n})$ satisfies this. 

One can show that $\{\mu_{n}^{P}\}\in\mathcal{D}$, which implies
$\mu_{n}^{P}(Z^{n})\in A_{0}$ with probability approaching to one
along $P_{\theta_{nh}}^{n}$. Thus, for sufficiently large $n$, $\mu_{n}^{P}(Z^{n})$
equivalently solves
\begin{align*}
\arg\max_{\mu\in A_{0}}\sqrt{n}W(\hat{\theta}_{n},\mu) & =\arg\max_{\mu\in A_{0}}\int\sqrt{n}w(\hat{\theta}_{n},x,t)\mathrm{d}\mu\\
 & =\arg\max_{\mu\in A_{0}}\sqrt{n}\left[\int w(\hat{\theta}_{n},x,t)\mathrm{d}\mu-\int w(\theta_{0},x,t)\mathrm{d}\mu\right]
\end{align*}
where the second equality follows because the value of $W(\theta_{0},\mu)$
is constant across $\mu\in A_{0}$. We will see that the maximization
problem that plug-in rules solve weakly converges to a different maximization
problem from the one that the matched rules of optimal rules must
solve; i.e., (\ref{eq:max_optimal}). We actually claim that 
\begin{equation}
\max_{\mu\in A_{0}}\sqrt{n}\left[\int w(\hat{\theta}_{n},x,t)\mathrm{d}\mu-\int w(\theta_{0},x,t)\mathrm{d}\mu\right]\stackrel{h}{\rightsquigarrow}\max_{\mu\in A_{0}}\int\dot{w}_{\theta_{0}}(x,t;\Delta)\mathrm{d}\mu\quad\text{as }n\to\infty,\label{eq:max_plugin}
\end{equation}
where $\Delta\sim N(h,I_{0}^{-1})$. 

To see this, let $B_{n}(\mu):=\sqrt{n}\left[\int w(\hat{\theta}_{n},x,t)\mathrm{d}\mu-\int w(\theta_{0},x,t)\mathrm{d}\mu\right]$
and $B_{\infty}(\mu):=\int\dot{w}_{\theta_{0}}(x,t;\Delta)\mathrm{d}\mu$.
We impose a high-level condition that the process $\left\{ B_{n}(\mu):\mu\in A_{0}\right\} $
is asymptotically tight. Also note that $\sqrt{n}(\hat{\theta}-\theta_{0})=I_{0}^{-1}S_{n}+o_{P_{\theta_{0}}^{n}}(1)$
with $S_{n}\stackrel{0}{\rightsquigarrow}N(0,I_{0})$ as $n\to\infty$
by the best regularity of $\hat{\theta}_{n}$. Combining Le Cam's
third lemma and the delta method for the directionally differentiable
functions \citep[Theorem 2.1]{fang2019inference} yields
\[
\sqrt{n}\left[\int w(\hat{\theta}_{n},x,t)\mathrm{d}\mu-\int w(\theta_{0},x,t)\mathrm{d}\mu\right]\stackrel{h}{\rightsquigarrow}\int\dot{w}_{\theta_{0}}(x,t;\Delta)\mathrm{d}\mu\quad\text{as }n\to\infty,
\]
where $\Delta\sim N(h,I_{0}^{-1})$. By the asymptotic tightness of\emph{
}$\left\{ B_{n}(\mu):\mu\in A_{0}\right\} $, we can extend this result
to convergence in distribution of the process 
\[
B_{n}\stackrel{h}{\rightsquigarrow}B_{\infty}\quad\text{as }n\to\infty\text{ on }\ell^{\infty}(A_{0}),
\]
by \citet[Theorem 1.5.4]{vdvW1996}. Then applying the continuous
mapping theorem yields
\begin{align*}
\max_{\mu\in A_{0}}B_{n}(\mu) & \stackrel{h}{\rightsquigarrow}\max_{\mu\in A_{0}}B_{\infty}(\mu)\quad\text{as }n\to\infty,
\end{align*}
which completes the argument.

It is evident that the solutions of RHS of (\ref{eq:max_plugin})
need not to solve (\ref{eq:max_optimal}), the maximization problem
in the limit experiment that the matched rules of optimal rules must
solve. Thus the plug-in rules might not be average optimal in general
when $w$ is directionally differentiable. 

\begin{rmk} To investigate the asymptotic properties of the plug-in
rule formally, we need to handle the non-uniqueness issue. This can
be done by the penalization used for the construction of the Bayesian
rule. \end{rmk}

\subsubsection{When $w$ is fully differentiable}

If we strengthen the directional differentiability in Assumption \ref{assu:w-dot}
to the full differentiability, then the plug-in rules become average
optimal. To see this, notice that $\dot{w}_{\theta_{0}}(x,t;s)=\dot{w}_{\theta_{0}}(x,t)^{\top}s$
for some $\dot{w}_{\theta_{0}}(x,t)\in\mathbb{R}^{k}$ from the linearity
of the directional derivative. Then the maximization problem (\ref{eq:max_optimal})
is rewritten as 
\[
\max_{\mu\in A_{0}}\int\dot{w}_{\theta_{0}}(x,t)^{\top}\Delta\mathrm{d}\mu,
\]
which is the same as (\ref{eq:max_plugin}), the maximization problem
that the matched rules of the plug-in rules solve in the limit experiment. 

This pattern is consistent with findings from the existing literature.
As discussed earlier, \citet{hirano2009asymptotics} show that the
plug-in rule is average optimal in point-identified models when the
utility function is fully differentiable. In contrast, \citet{christensen2023optimal}
demonstrate that the Bayesian rule is average optimal in partially
identified models when the utility function is only directionally
differentiable, and the plug-in rule fails to be optimal unless the
full differentiability holds. In partially identified settings, directional
differentiability is a natural and often unavoidable assumption, as
full differentiability typically does not hold.

\section{Simulation \label{sec:Simulation}}

We conduct a simulation study to evaluate the performance of the Bayesian
rule and the plug-in rule under the following conditions: (i) the
welfare function is either smooth or only directionally differentiable,
and (ii) the sample size is relatively small ($n=200$) and large
($n=500$).

We closely follows the data generating process described in Example
\ref{eg:ABK}. For the training population, the latent variable is
generated by 
\[
Y_{i}^{*}=X_{i}^{\top}\beta+\alpha T_{i}+u_{i},
\]
where $X_{i}\in\mathbb{R}^{2}$ denotes the observable covariates
and $T_{i}$ is the binary treatment that is randomly assigned. The
first coordinate of $X_{i}$, interpreted as age, follows a truncated
normal distribution with mean 4, standard deviation of 2, and is bounded
on $[1,10]$. The second coordinate, interpreted as sex, is a binary
variable assigned with equal probability. The observed outcome is
\[
Y_{i}=\max\{0,Y_{i}^{*}\}.
\]
We set $\beta_{0}=(-2,-3)$, $\alpha_{0}=4$, and $u_{i}\sim N(0,\sigma_{0}^{2})$
with $\sigma_{0}=10$. The observed data is an i.i.d. sample $Z^{n}=\{(Y_{i},X_{i},T_{i})\}_{i=1}^{n}$.
The parameters $\theta_{0}=(\beta_{0},\alpha_{0},\sigma_{0})$ can
be estimated by the maximum likelihood using $Z^{n}$. 

In this Tobit model, the conditional mean of the potential outcomes
in the training population is given by 
\[
w(\theta,x,t)=(x^{\top}\beta+\alpha t)-(x^{\top}\beta+\alpha t)\Phi\left(\frac{-x^{\top}\beta-\alpha t}{\sigma}\right)+\sigma\phi\left(\frac{-x^{\top}\beta-\alpha t}{\sigma}\right),
\]
where $\Phi$ and $\phi$ denote the standard normal cdf and pdf,
respectively. Following Examples \ref{eg:bayesian} and \ref{eg:maxmin},
we specify the planner's utility function as 
\[
w_{R}(\theta,x,t,\varepsilon,\lambda)=\lambda w(\theta,x,t)+(1-\lambda)\max\left\{ w(\theta,x,t)-\varepsilon,0\right\} ,
\]
where $\lambda\in[0,1]$ and $\varepsilon>0$. We note that $w_{R}$
is differentiable when $\lambda=1$, but only directionally differentiable
otherwise. In what follows, we focus on the cases $\lambda=0,1$ and
$\varepsilon=0.8$. 

Figure \ref{fig:contrast} plots the welfare contrasts $w_{R}(\theta,x,1,\varepsilon,\lambda)-w_{R}(\theta,x,0,\varepsilon,\lambda)$
for both males and females at $\theta_{0}$.
\begin{figure}
\caption{Welfare contrasts under smooth and directionally differentiable welfare
at $\theta_{0}$}
\label{fig:contrast}

\includegraphics[scale=0.35]{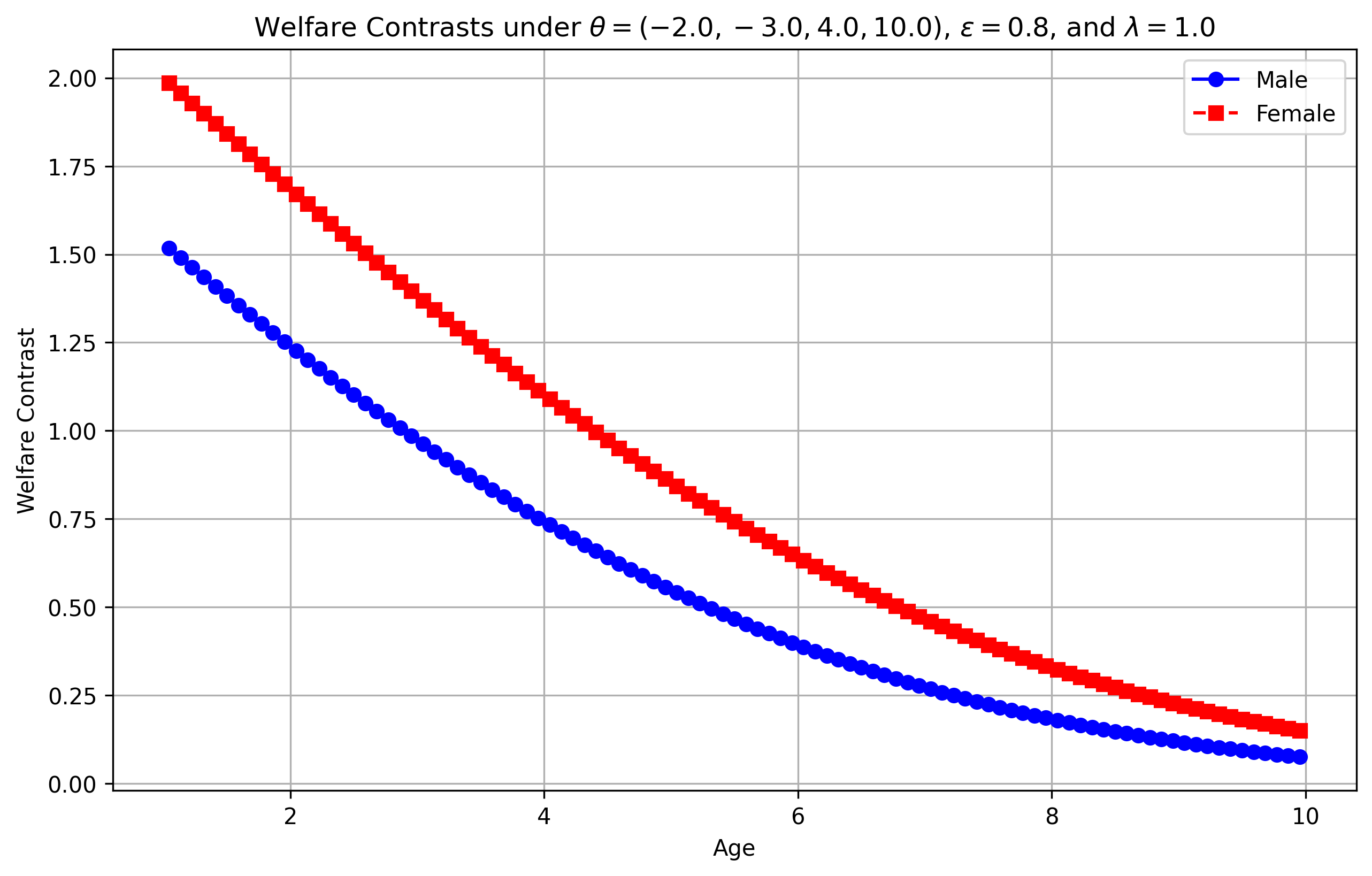}\includegraphics[scale=0.35]{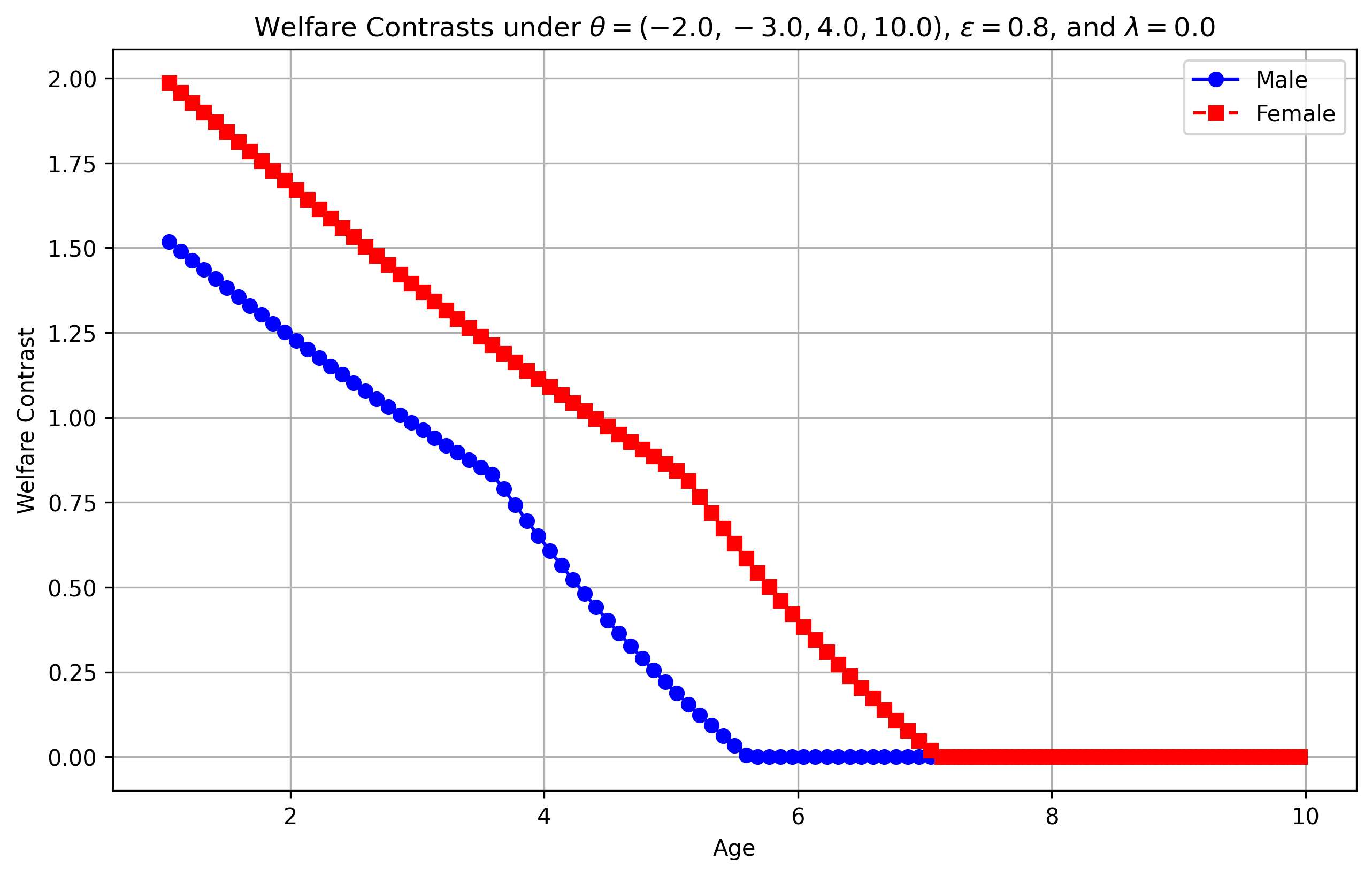}
\end{figure}
 Under $\lambda=0$, kinks appear when $w(\theta,x,t)-\varepsilon<0$.
For a given covariate $x$, the contrast is zero if both $w(\theta,x,1)<\varepsilon$
and $w(\theta,x,0)<\varepsilon$; that is, individuals with sufficiently
low welfare are regarded as deriving no benefit from treatment. 

The upper panels of Figure \ref{fig:assignment_n=00003D200} show
the oracle (infeasible optimal) rule at $\theta_{0}$. The rule assigns
treatment to females younger than approximately 6.5 and to males younger
than approximately 5. Notably, this oracle rule remains unchanged
across $\lambda=0$ and $\lambda=1$. It also remains the same at
$\theta_{0}+h/\sqrt{n}$, for the range of local deviation parameters
$h$ specified below.

We assume that the true distribution of covariates $X$ in the training
population is known and that the target population shares the same
distribution. Specifically, we define $F_{X}$ as the joint distribution
of (a) the truncated normal distribution for the age variable and
(b) the binary distribution for the sex variable. For computational
purposes, we discretize $F_{X}$ into 99 bins, each corresponding
to a distinct combination of age and sex. Each bin is assigned a probability
mass according to $F_{X}$, representing the proportion of individuals
falling into that bin.

Suppose that the planner has resources to allocate to 75\% of the
target population. Let
\[
W(\theta,\mu)=\int w_{R}(\theta,x,t,\varepsilon,\lambda)\mathrm{d}\mu(x,t).
\]

\subsection{Average optimality}

We evaluate performance under a sequence of perturbed DGPs:
\[
\theta_{nh}=\theta_{0}+h/\sqrt{n},\quad\text{for }h\in H:=\{-2,-1.6,\dots,2\},
\]
where $h/\sqrt{n}$ is added to $\theta_{0}$ element-wisely. Our
goal is to compare the average risk 
\[
\int R(\mu_{n}^{Q},\theta_{nh})\mathrm{d}h=\int\mathbb{E}_{P_{\theta_{nh}}^{n}}\left[W_{\mathcal{M}}^{*}(\theta_{nh})-W(\theta_{nh},\mu_{n}(Z^{n}))\right]\mathrm{d}h
\]
for $Q=P,B$. The simulation proceeds as follows:
\begin{enumerate}
\item For each $h$, draw $J$ independent samples of data $\{Z^{n,j}\}_{j=1}^{J}$
from $P_{\theta_{nh}}^{n}$ where $Z^{n,j}=\{Z_{i}^{j}\}_{i=1}^{n}$. 
\item For each $j$: 
\begin{enumerate}
\item Obtain the MLE estimates of parameter $\hat{\theta}_{nh}^{j}$ and
Fisher information matrix $\hat{I}_{nh}^{j}$.
\item Compute the plug-in rule $\mu_{nh}^{P,j}$ by
\[
\mu_{nh}^{P,j}\in\arg\max_{\mu\in\mathcal{M}}W(\hat{\theta}_{nh}^{j},\mu).
\]
 
\item Draw $L$ samples $\{\theta_{\ell}\}_{\ell=1}^{L}$ from $N(\hat{\theta}_{nh}^{j},(n\hat{I}_{nh}^{j})^{-1})$,
which can be interpreted as the quasi-posterior using the quasi-likelihood
$N(\hat{\theta}_{nh}^{j},(n\hat{I}_{nh}^{j})^{-1})$ with the uniform
prior \citep{kim2002limited,christensen2023optimal}. Then compute
the Bayesian rule $\mu_{nh}^{B,j}$ by
\[
\mu_{nh}^{B,j}\in\arg\max_{\mu\in\mathcal{M}}\frac{1}{L}\sum_{\ell=1}^{L}W(\theta_{\ell},\mu).
\]
\end{enumerate}
\item Compute the oracle welfare $W_{\mathcal{M}}^{*}(\theta_{nh})=\max_{\mu\in\mathcal{M}}W(\theta_{nh},\mu)$,
and estimate $R(\mu_{n}^{Q}(Z^{n}),\theta_{nh})$ by 
\[
R(Q,h):=\frac{1}{J}\sum_{j=1}^{J}\left[W_{\mathcal{M}}^{*}(\theta_{nh})-W(\theta_{nh},\mu_{nh}^{Q,j})\right]
\]
for $Q=P,B$. Store $R(Q,h)$ for each $h$. 
\item Taking the average of $R(Q,h)$ over $h$ gives an estimate of $\int R(\mu_{n}^{Q},\theta_{nh})\mathrm{d}h$
for $Q=P,B$. 
\end{enumerate}
We use \textsf{POT}, an open-source Python library developed by \citet{JMLR:v22:20-451},
to compute the plug-in rule and the Bayesian rule. 

\subsection{Results}

We study the cases of $n=200,500$, $J=2000$, and $L=2000$ for both
$\lambda=0$ and $\lambda=1$. We first report the estimated risks,
followed by comparisons of the resulting treatment allocations.

Figure \ref{fig:risk_n=00003D200} shows the results for $n=200$.
While our theory predicts the plug-in and Bayesian rules are asymptotically
optimal under smooth welfare ($\lambda=1$), the simulation shows
that the Bayesian rule performs better in small samples. We also observe
that the Bayesian rule outperforms the plug-in rule under directionally
differentiable welfare ($\lambda=0$).
\begin{figure}
\caption{Comparisons of estimated risks: $n=200$ (left: smooth welfare, right:
directionally differentiable welfare)}
\label{fig:risk_n=00003D200}

\includegraphics[scale=0.45]{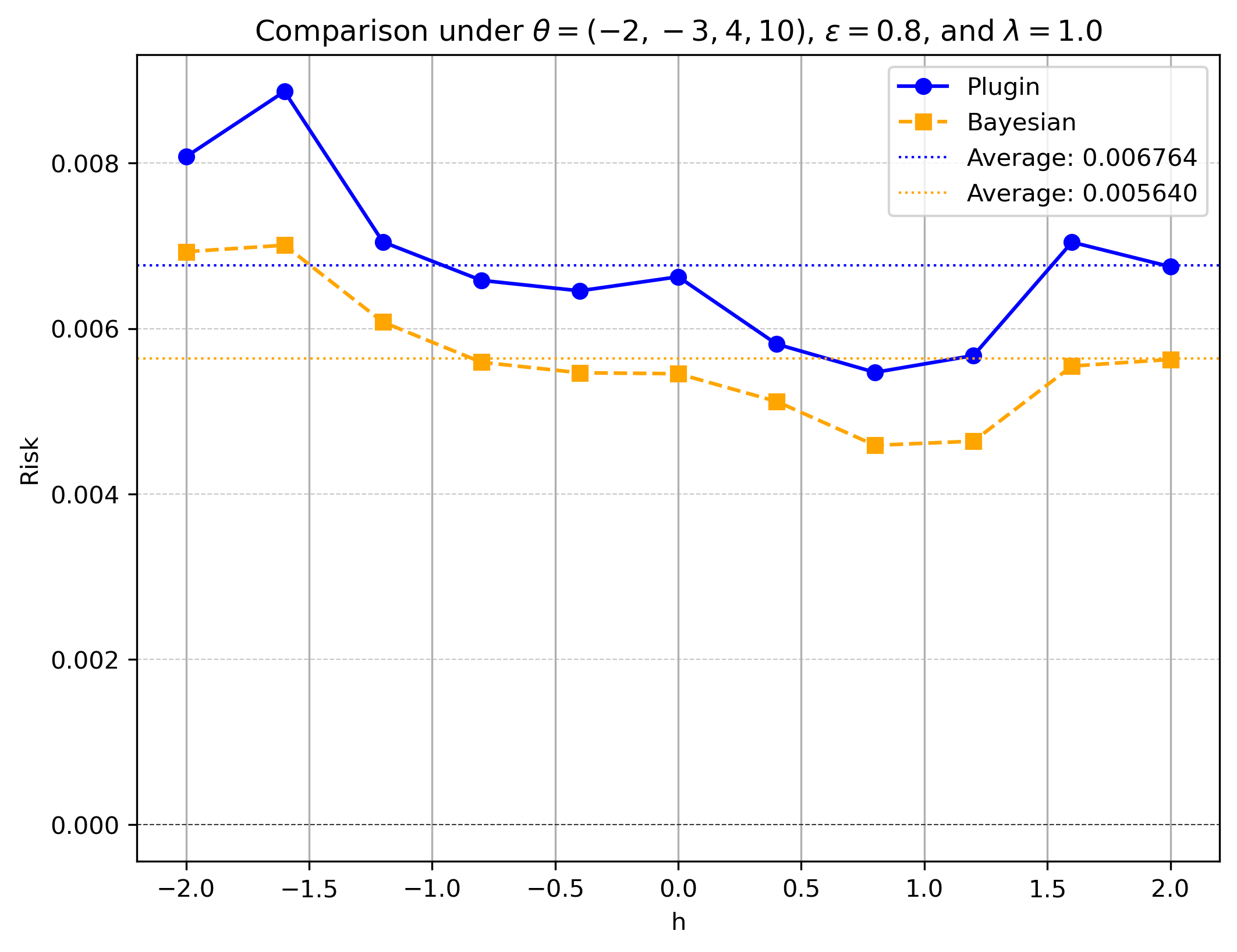}\includegraphics[scale=0.45]{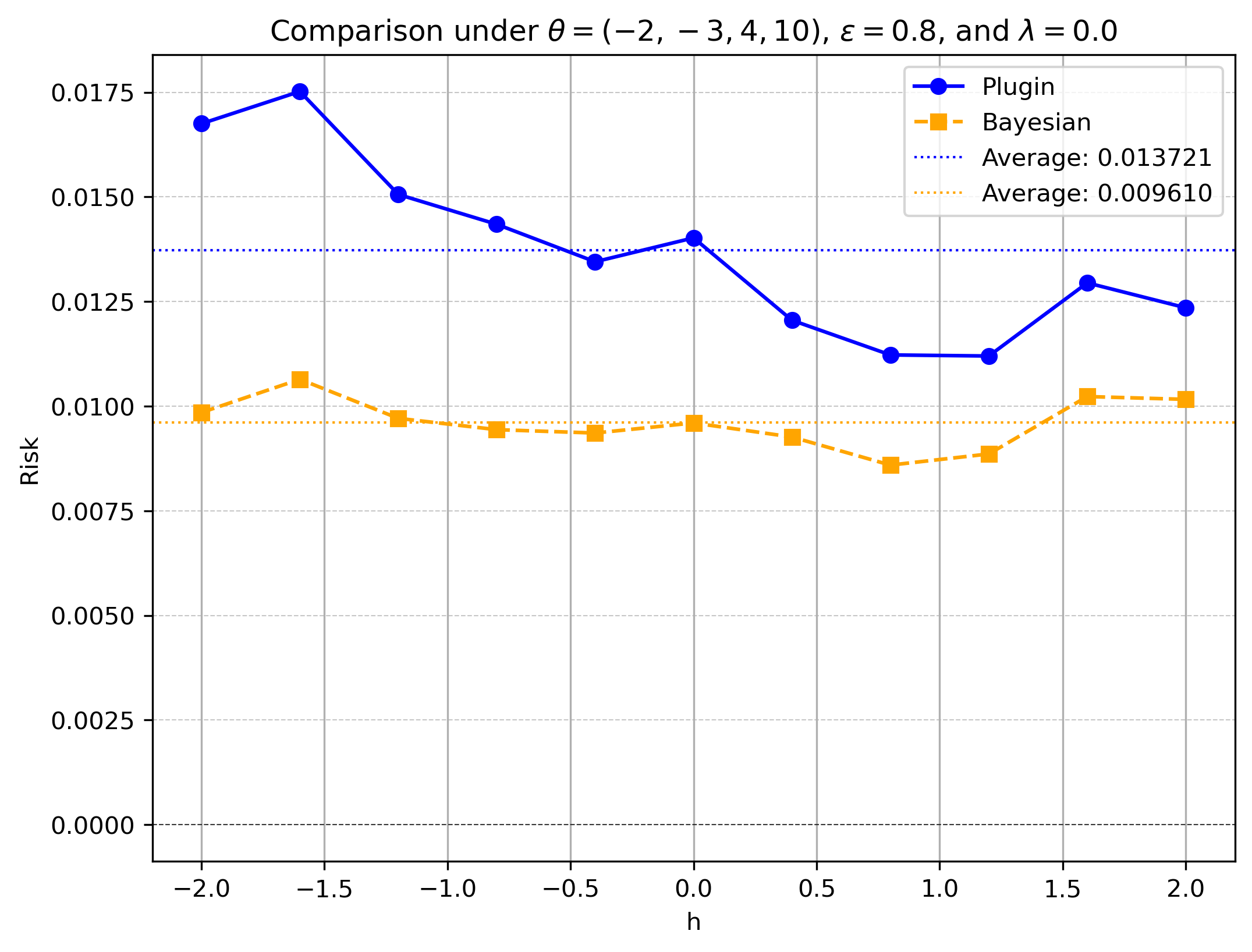}
\end{figure}

Figure \ref{fig:risk_n=00003D500} shows the results for $n=500$.
The Bayesian rule still performs slightly better when $\lambda=1$,
but the overall risk levels are substantially reduced, and the performance
gap between the two rules narrows. This indicates that both rules
are approaching optimality as the sample size increases from 200 to
500. When $\lambda=0$, the Bayesian rule continues to outperform
the plug-in rule, which is consistent with our theoretical predictions:
under the directionally differentiable welfare the Bayesian rule is
optimal, but the plug-in rule may not be. Notably, the Bayesian rule
performs particularly well when the values of $h$ are negative. In
these cases, the welfare contrasts become smaller, making the assignment
problem more challenging. This highlights the robustness of the Bayesian
rule to local perturbations that make treatment decisions harder.
\begin{figure}
\caption{Comparisons of estimated risks: $n=500$ (left: smooth welfare, right:
directionally differentiable welfare)}
\label{fig:risk_n=00003D500}

\includegraphics[scale=0.45]{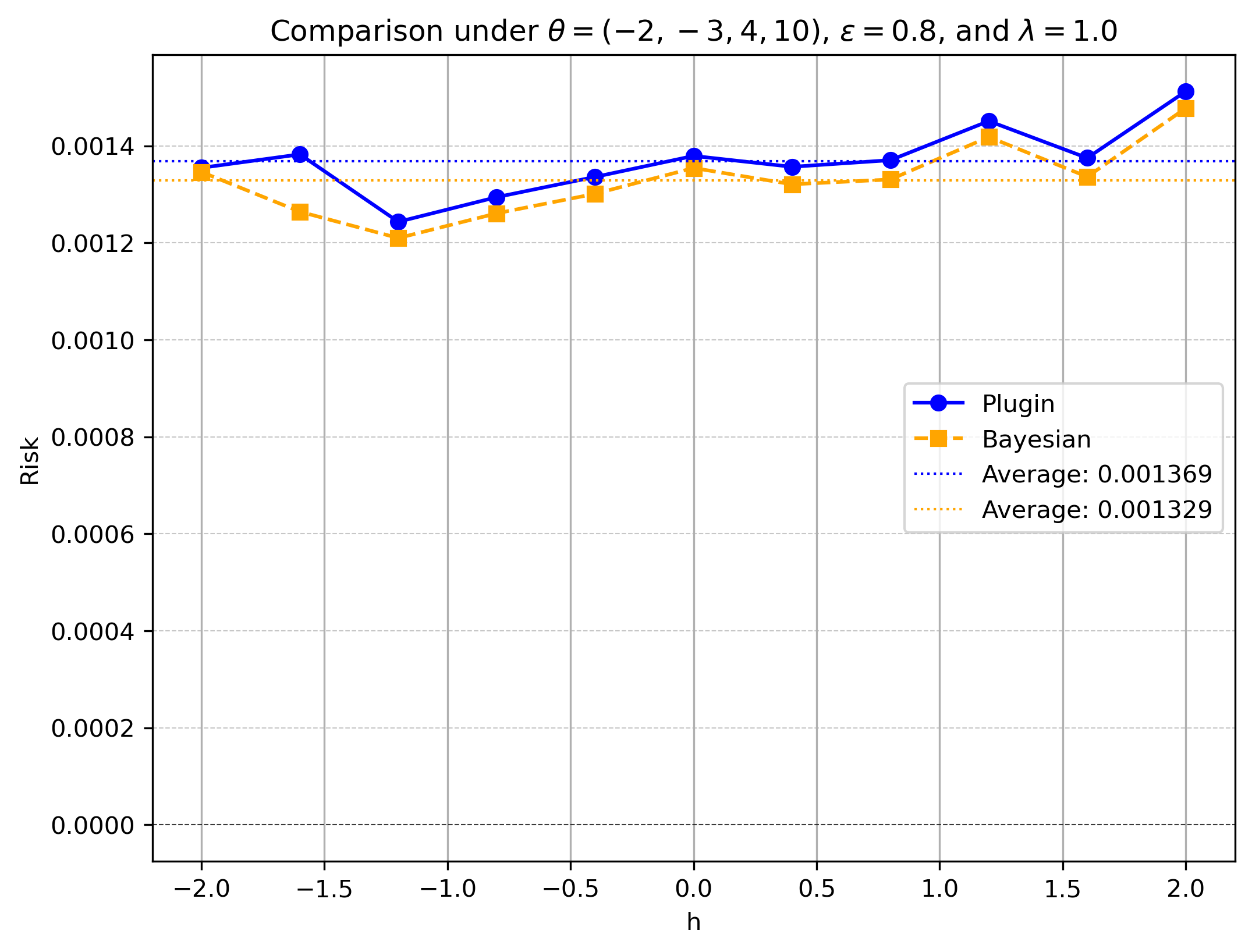}\includegraphics[scale=0.45]{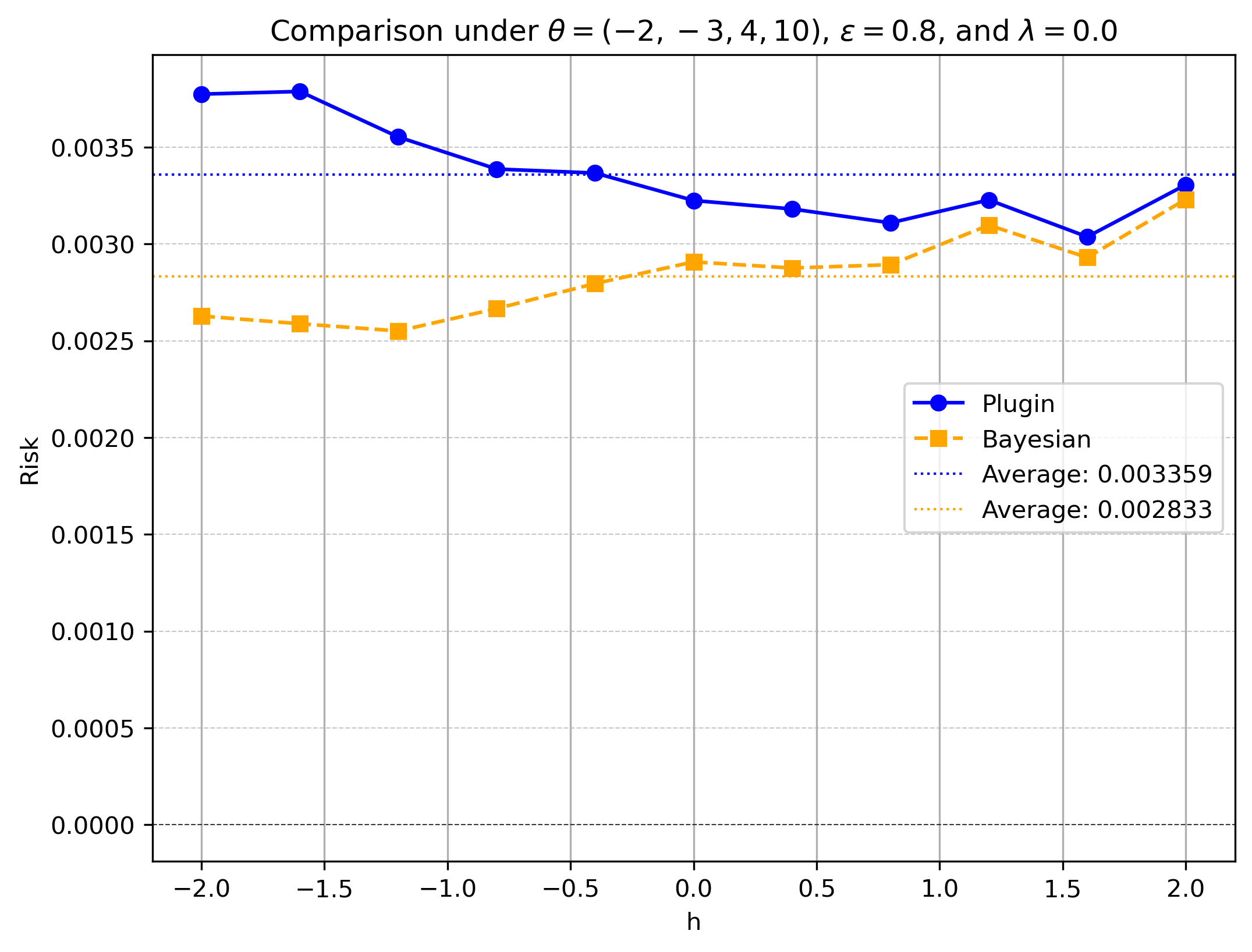}
\end{figure}

To gain further insight into the behavior of the two rules, we visualize
the average allocations, $J^{-1}\sum_{j}\mu_{nh}^{Q,j}$, for $Q=P,E$,
under $\theta_{0}$, $n=200$, and $\lambda=0$. Figure \ref{fig:assignment_n=00003D200}
shows that the Bayesian rule deviates from the oracle rule only near
the decision boundary, while the plug-in rule exhibits substantial
deviations even away from it. This relative stability of the Bayesian
rule contributes to a sizable risk reduction. A similar, albeit weaker,
pattern is observed for $n=500$. 
\begin{figure}
\caption{Comparison of (average) treatment assignment under directionally differentiable
welfare}
\medskip{}

{\small\textbf{Note}}{\small : The upper panels show the oracle rule,
the middle show the Bayesian rule, and the lower show the plug-in
rule under $\theta_{0}$ and $n=200$. }\label{fig:assignment_n=00003D200}

\includegraphics[scale=0.55]{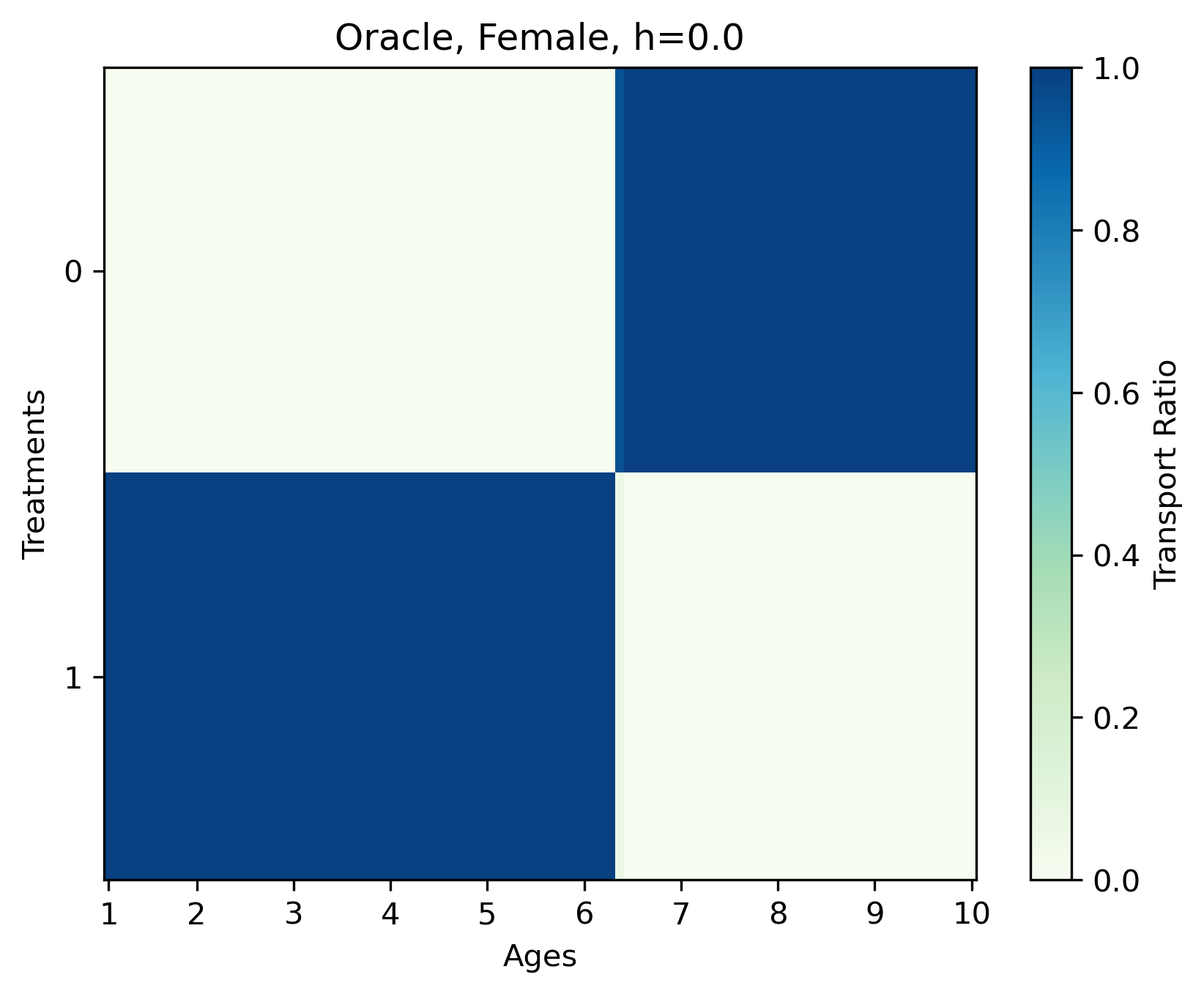}\includegraphics[scale=0.55]{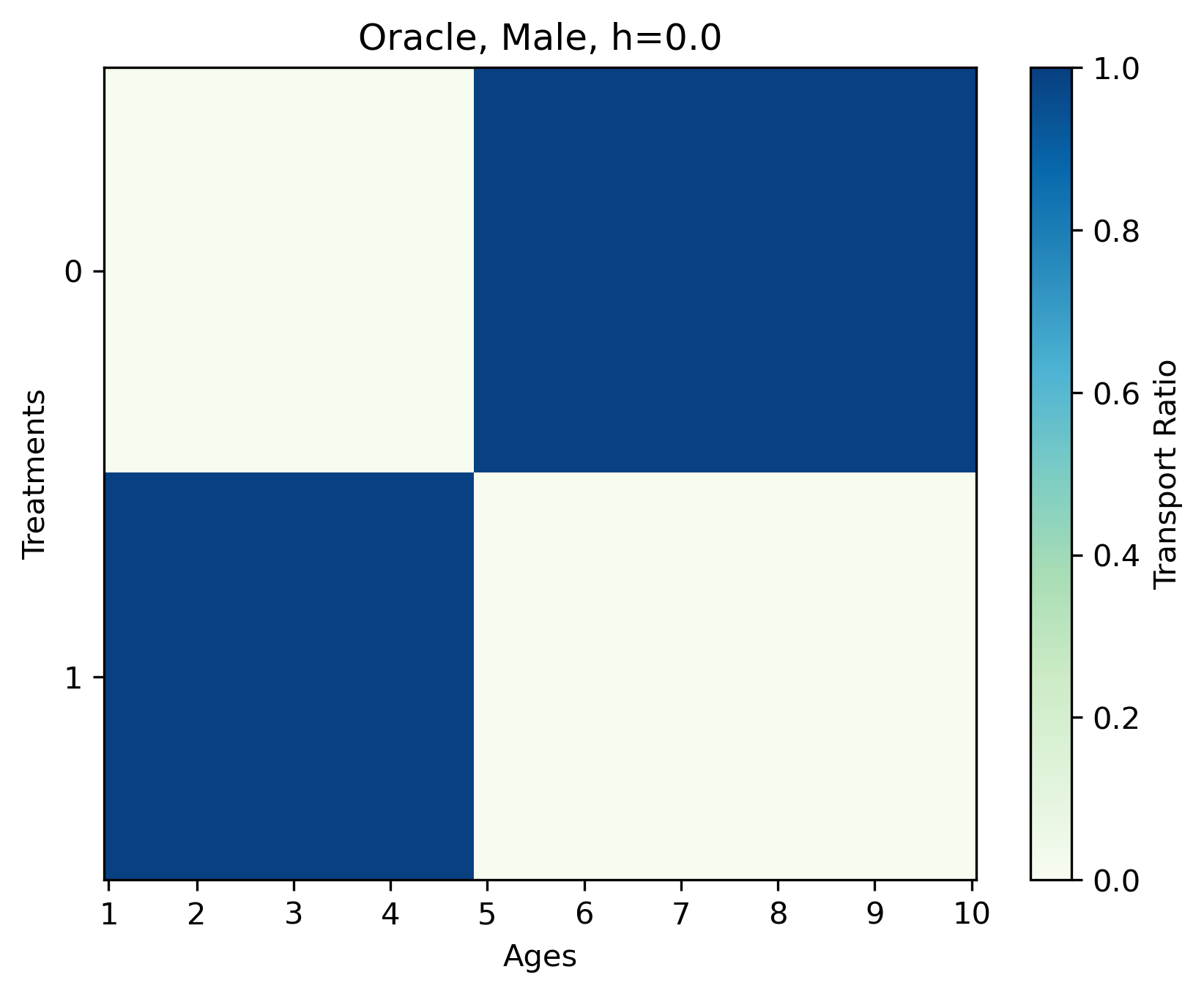}\linebreak{}
\includegraphics[scale=0.55]{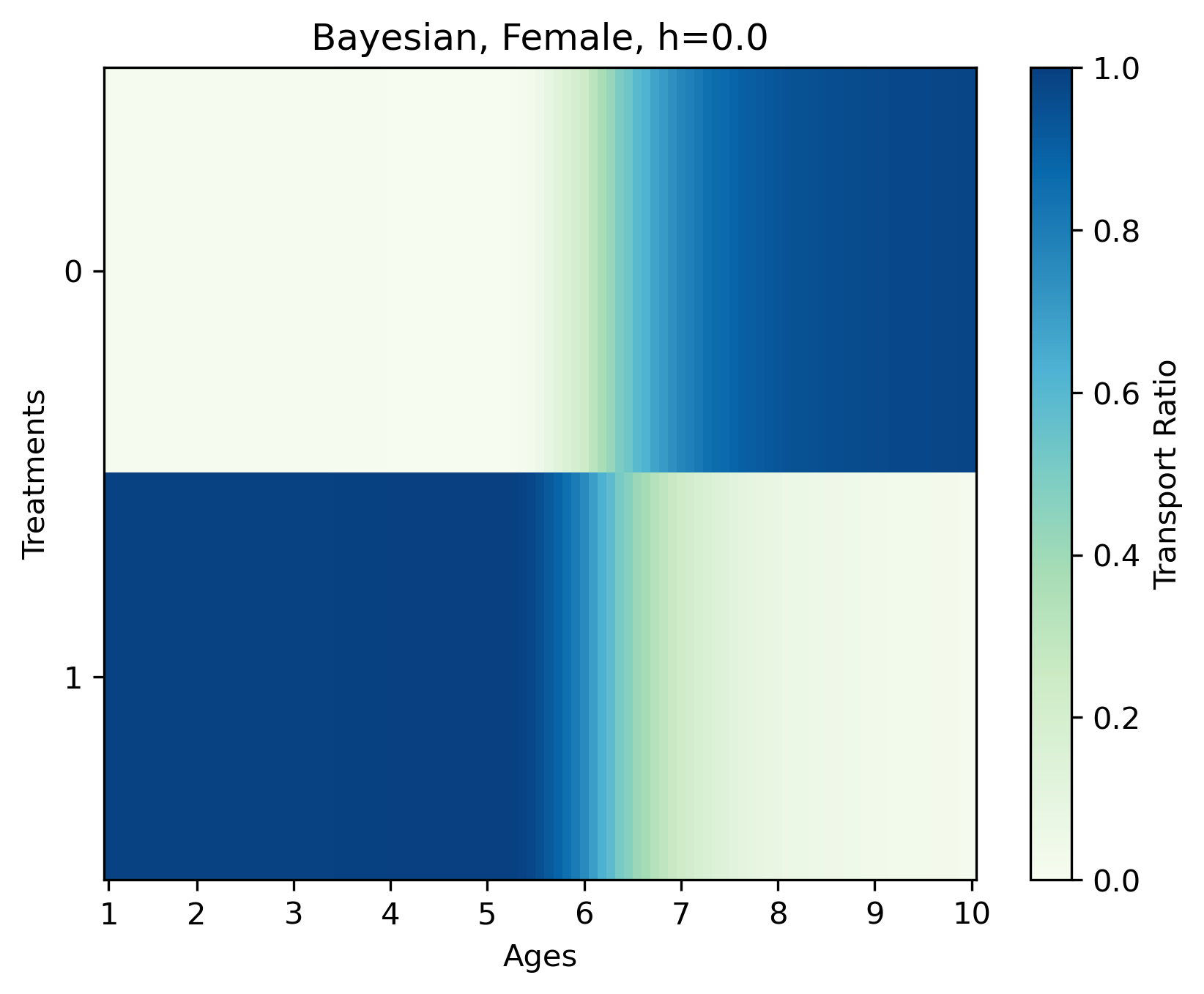}\includegraphics[scale=0.55]{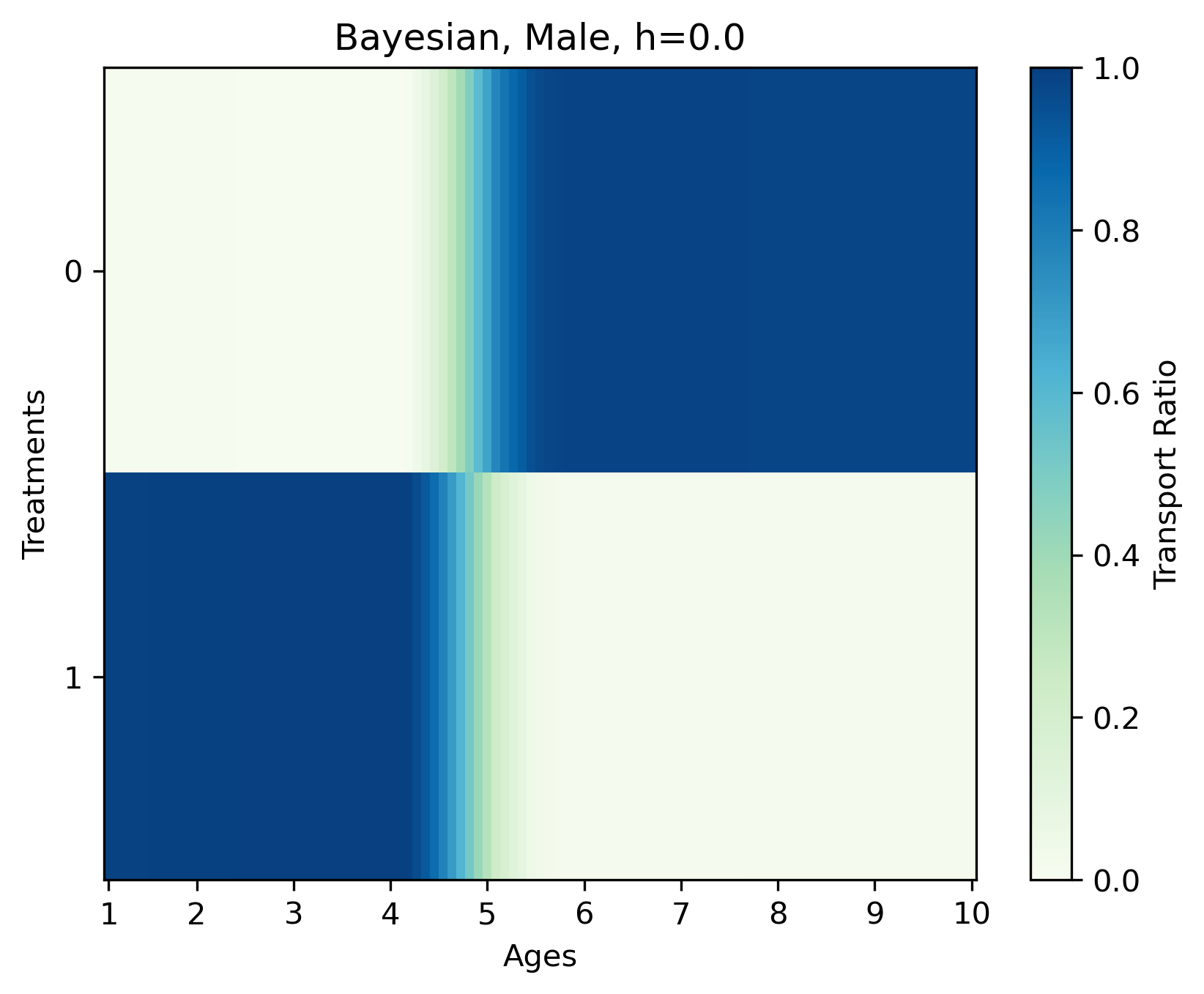}\linebreak{}
\includegraphics[scale=0.55]{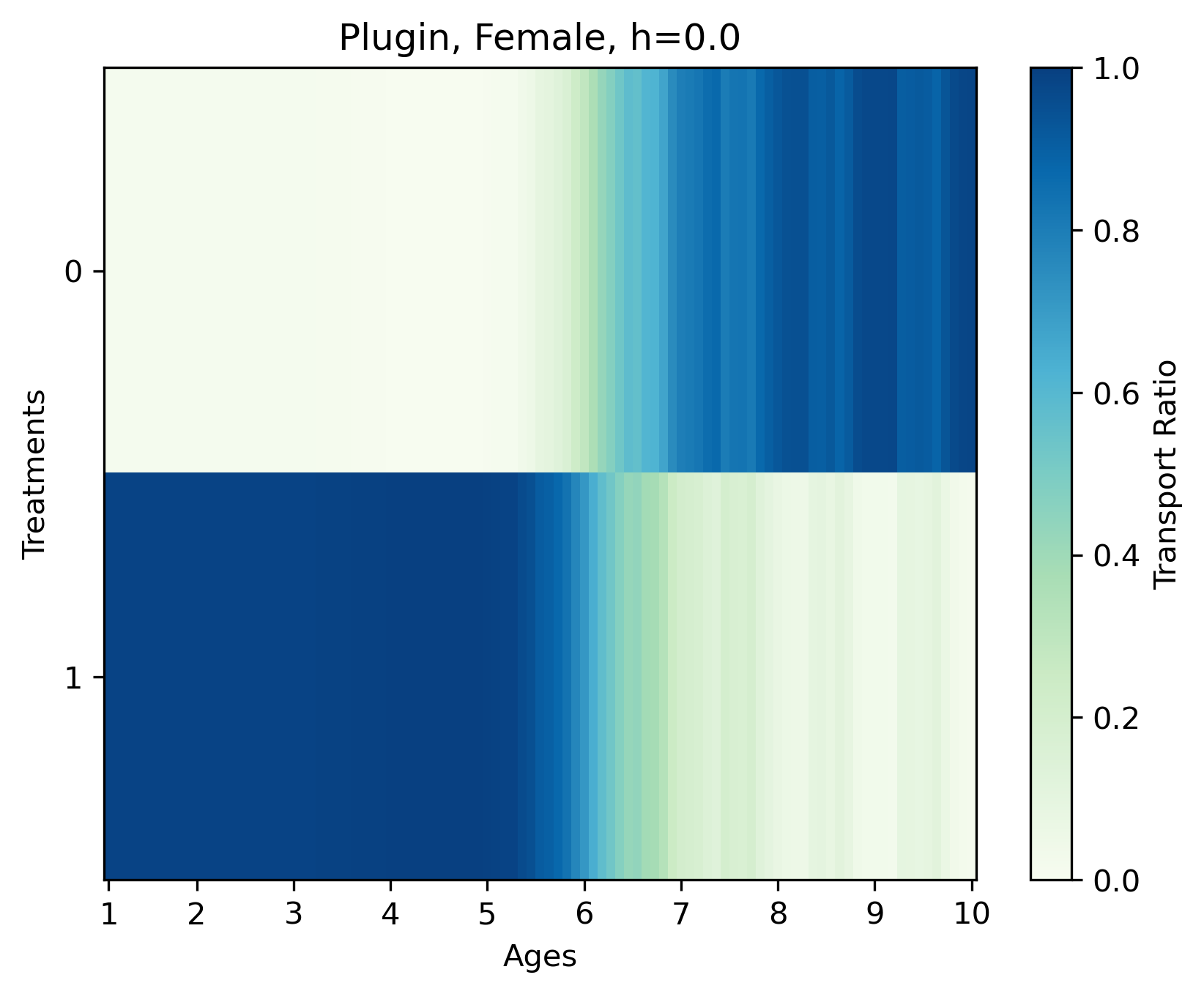}\includegraphics[scale=0.55]{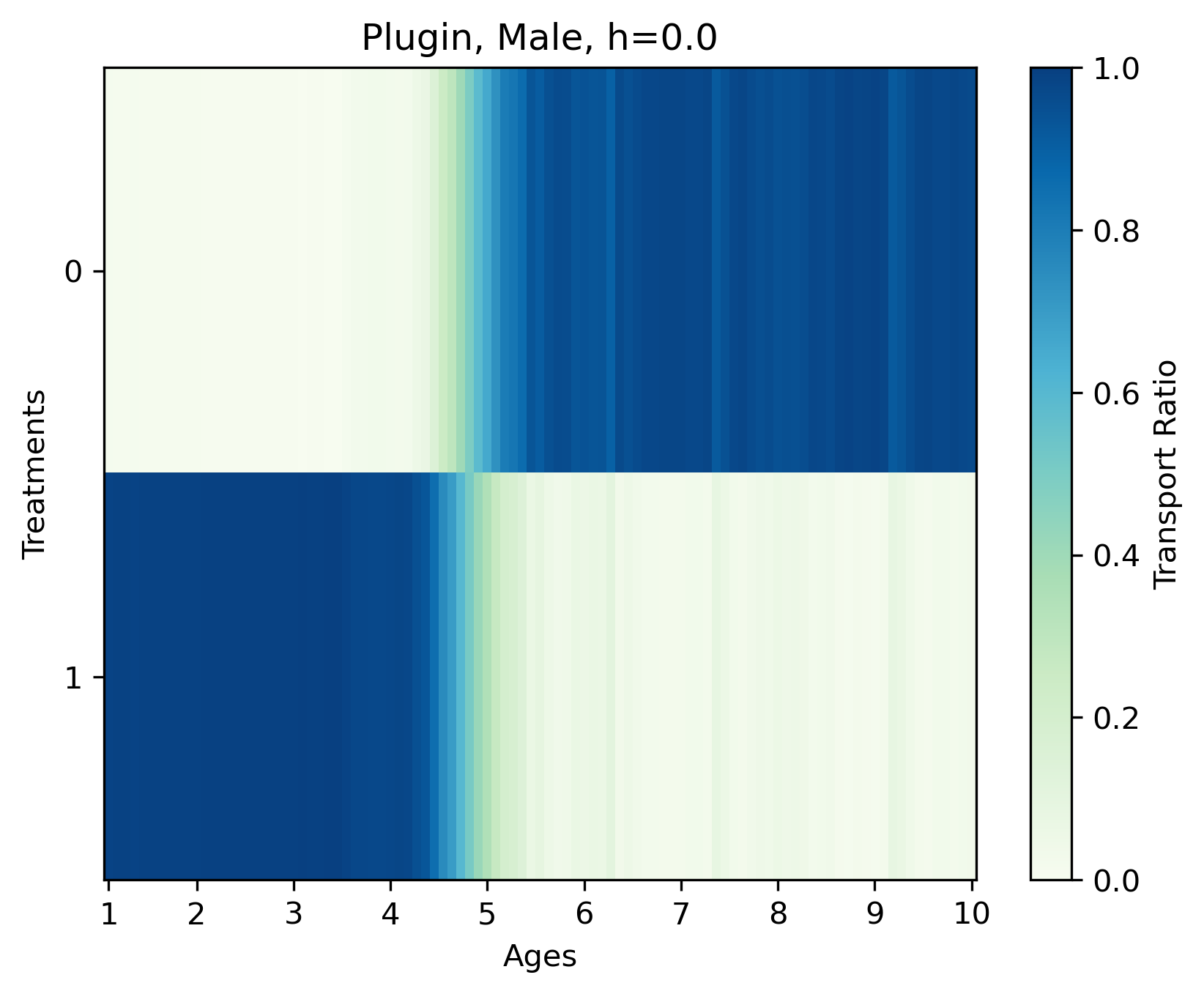}
\end{figure}

\begin{rmk} Under the current simulation setup, when the total amount
of resources is sufficiently small, the kink points of the directionally
differentiable welfare do not affect the assignment decision, as the
available resources are exhausted before the assignment rule encounters
the kink points. In such cases, the behavior of the two rules under
directionally differentiable welfare resembles their behavior under
the smooth welfare. We set the resource level to 75\% to allow interaction
between the decision boundary under the oracle rule and the kink points.
\end{rmk}

\section{Empirical application\label{sec:Empirical-application}}

We illustrate our methods using data from \citet{angrist2006long}
continued from Example \ref{eg:ABK}.\footnote{For the replication dataset of the original article, see \citet{angristReplicationDataLongterm2019Am.Econ.Assoc.}.}
As outcome variables, \citet{angrist2006long} use the test scores
in language and math. Since the estimation results are similar between
these two, we use math scores as the outcome variable for illustration.
We focus on the case where the observed test scores are censored at
the tenth percentiles of the test score distribution among test-takers
(denoted by $\tau$), in line with the original article, to address
selection issues. In addition to the test scores, we observe treatment
status, as well as age and sex as covariates. The sample includes
3,541 individuals overall, with 1,788 girls and 1,753 boys. Ages range
from 10 to 17 with mean 12.7 and standard deviation 1.3. The maximum
likelihood estimates are summarized in Table \ref{table:MLE}. 
\begin{table}
\caption{Maximum likelihood estimates for math scores}
\label{table:MLE}

\begin{tabular}{cccccc}
\hline 
Variables & voucher & age & gender & const & $\sigma^{2}$\tabularnewline
\hline 
Estimates & 2.06 & -5.5 & -0.72 & 102.77 & 104.31\tabularnewline
\hline 
Standard error & 0.46 & 0.24 & 0.44 & 2.87 & 5.17\tabularnewline
\hline 
\end{tabular}
\end{table}

We then hypothetically treat the marginal distribution of the covariates
in the observed sample as that of the target population, and compute
both the plug-in and the Bayesian rules as described in the previous
section. In this example, the planner's utility function is given
by: 
\begin{align*}
w_{R}(\theta,x,t,\varepsilon,\lambda) & =\lambda w(\theta,x,t)+(1-\lambda)\max\left\{ w(\theta,x,t)-\varepsilon,\tau\right\} ,
\end{align*}
where 
\[
w(\theta,x,t)=(x^{\top}\beta+\alpha t)+(\tau-x^{\top}\beta-\alpha t)\Phi\left(\frac{\tau-x^{\top}\beta-\alpha t}{\sigma}\right)+\sigma\phi\left(\frac{\tau-x^{\top}\beta-\alpha t}{\sigma}\right).
\]
Note that this is slightly different from the utility function in
the previous section as the outcome variable is censored at $\tau\neq0$.
In what follows, we focus on $\varepsilon=3.5$ and $\lambda=0,1$.
We consider the case where we can assign vouchers for 50\% of the
target population.

Figure \ref{fig:bayesian} shows the allocations under smooth welfare
($\lambda=1$). As Table \ref{table:MLE} shows, age has a negative
effect on outcomes. Accordingly, the plug-in rule allocates vouchers
to younger individuals. Since the effect of sex is slightly negative,
the plug-in rule prioritizes females over males, resulting in the
allocation where vouchers are fully allocated to females aged 10-12,
while not fully allocated to males at age 12 as the resource is exhausted
due to the capacity constraints. In this setting, the Bayesian rule
yields exactly the same allocation, which is natural since both rules
are optimal under smooth welfare. This also aligns with the simulation
result in the previous section: both rules perform similarly when
the sample size is large enough. 
\begin{figure}
\caption{Voucher allocations under smooth welfare}
\medskip{}

{\small\textbf{Note}}{\small : The upper panels show the plug-in rule,
and the lower panels show the Bayesian rule. The color intensity represents
the density of each cell in $F_{X}$, with darker shades indicating
higher density.}{\small\par}

\label{fig:bayesian} 

\includegraphics[scale=0.55]{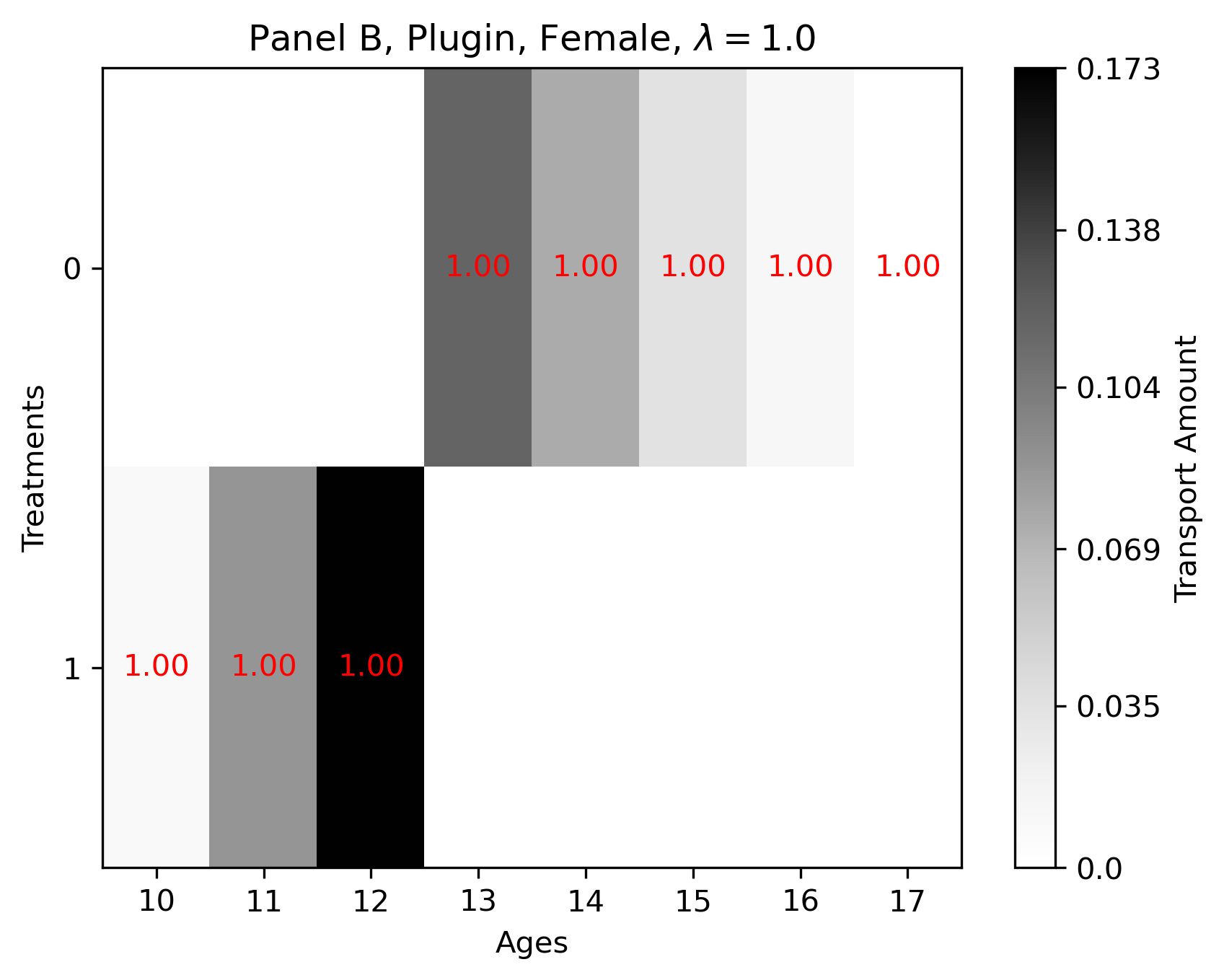}\includegraphics[scale=0.55]{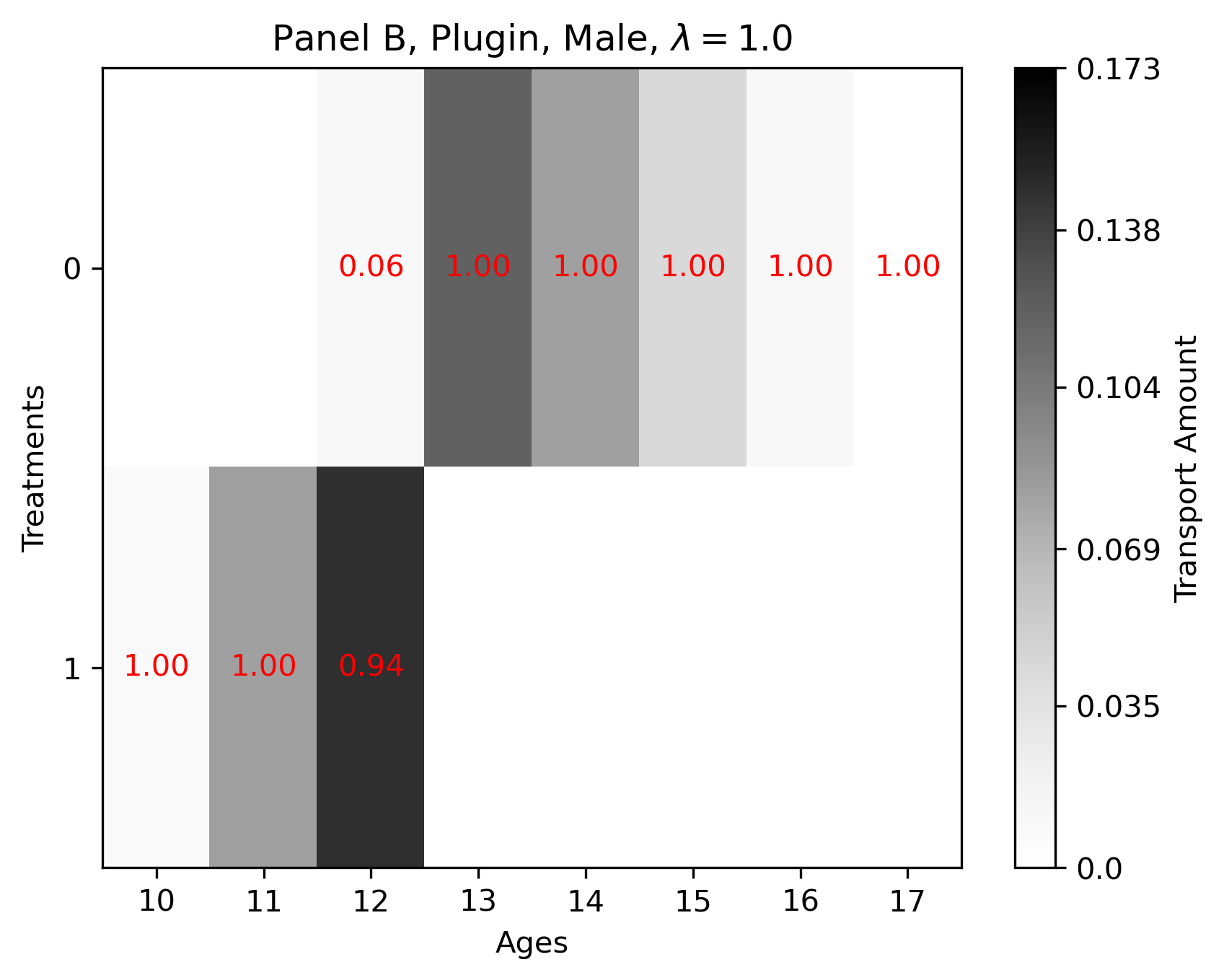}\linebreak{}

\includegraphics[scale=0.55]{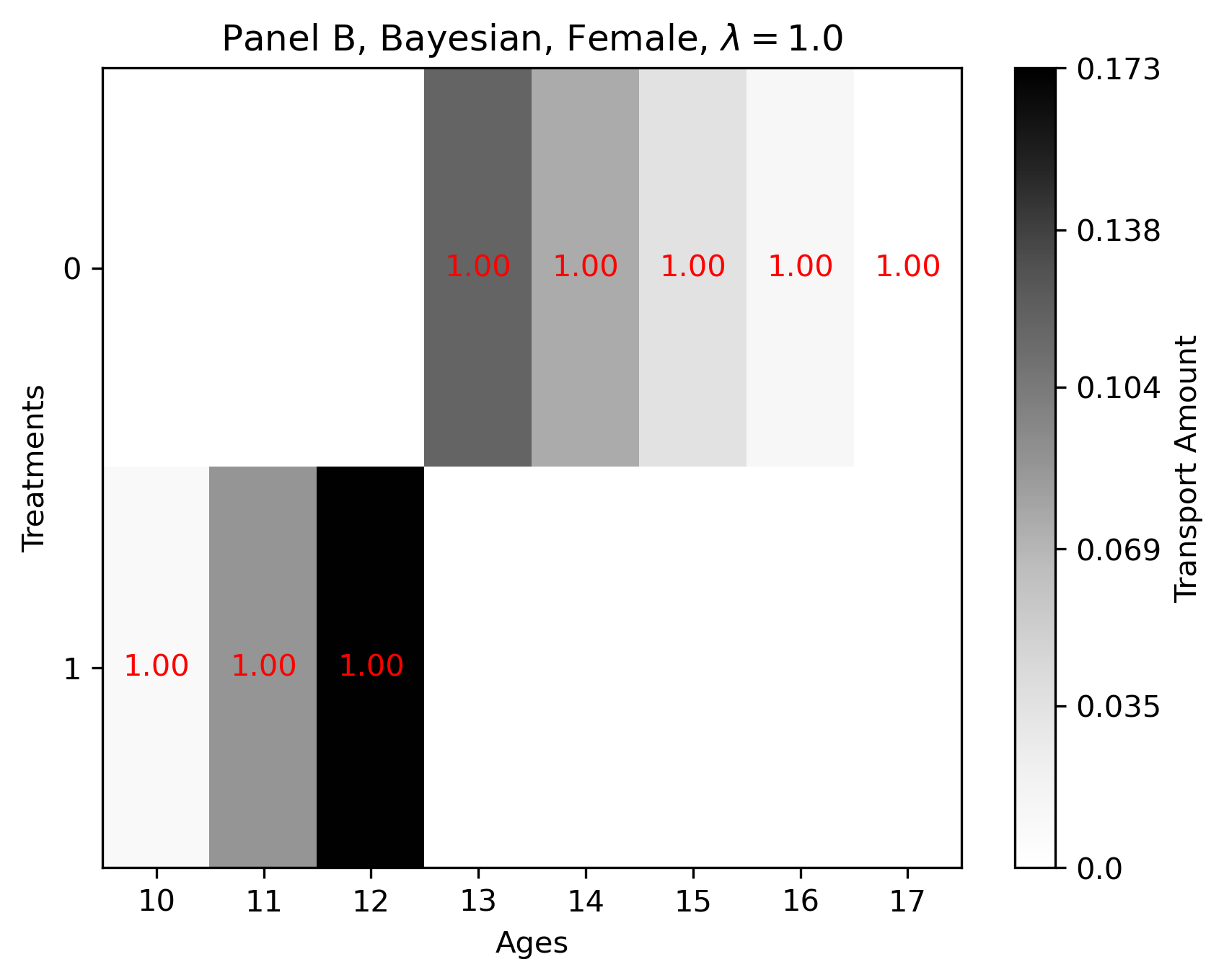}\includegraphics[scale=0.55]{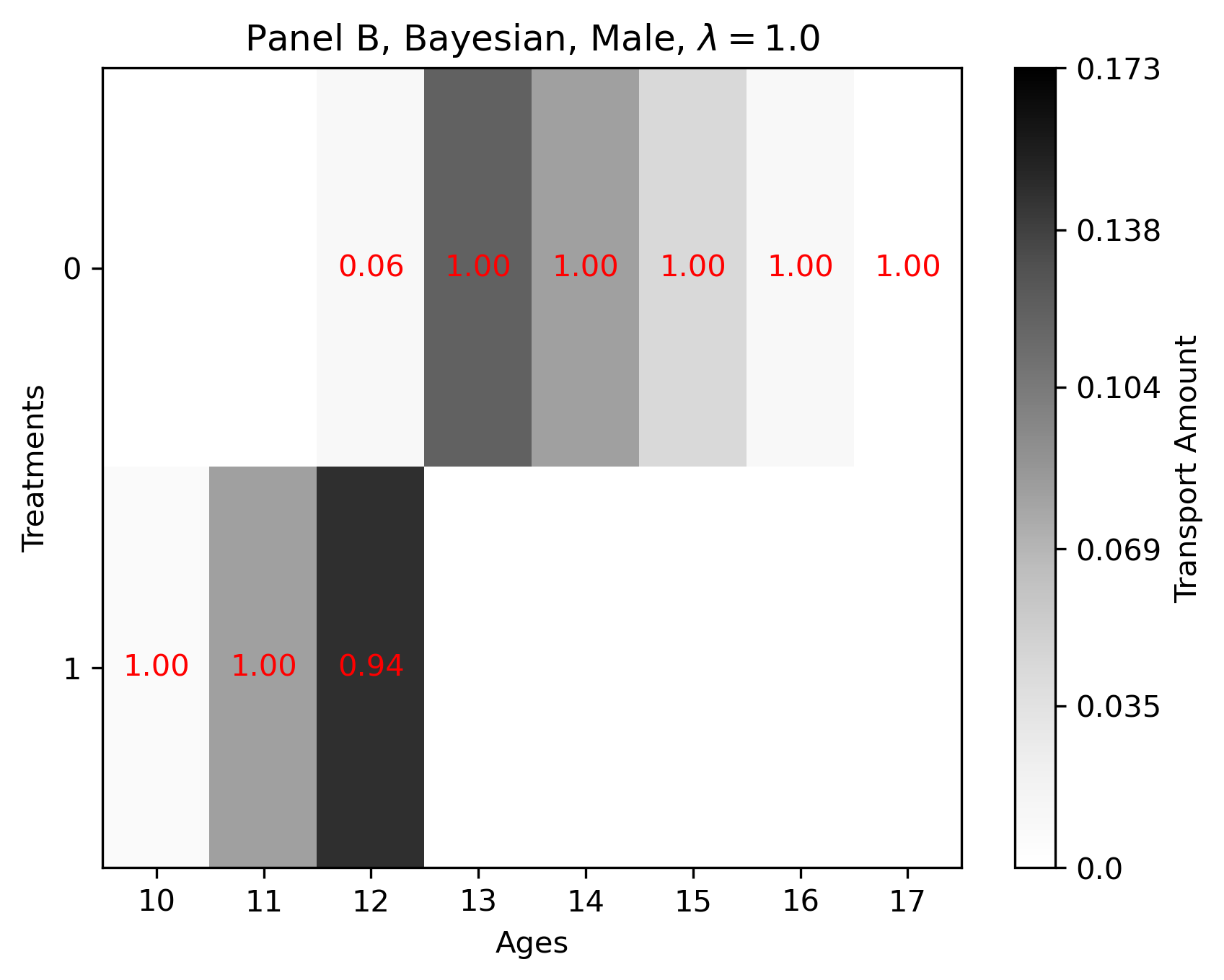}
\end{figure}

Next, Figure \ref{fig:ambig} shows the allocations under directionally
differentiable welfare ($\lambda=0$). For the plug-in rule, the value
of $w_{R}$ is censored by $\tau$ at age 13 for females and at 12
for males in this setting. As in the previous case, vouchers are fully
allocated to females aged 10--12 and males aged 10--11. However,
the remaining vouchers are randomly assigned, as the value of $w_{R}$
is just equal to $\tau$ for the rest. For the Bayesian rule, after
integrating with respect to the posterior distribution, the value
of $w_{R}$ is censored at $\tau$ at age 13 for both females and
males. This leads to allocation to males at age 12 until the resource
is exhausted, resulting in the same allocation as seen in Figure \ref{fig:bayesian}.
This illustrates that the plug-in and the Bayesian rules could generate
different allocations under $\lambda=0$. 
\begin{figure}
\caption{Voucher allocations under directionally differentiable welfare}
\medskip{}

{\small\textbf{Note}}{\small : The upper panels show the plug-in rule,
and the lower panels show the Bayesian rule. The color intensity represents
the density of each cell in $F_{X}$, with darker shades indicating
higher density.}\label{fig:ambig}

\includegraphics[scale=0.55]{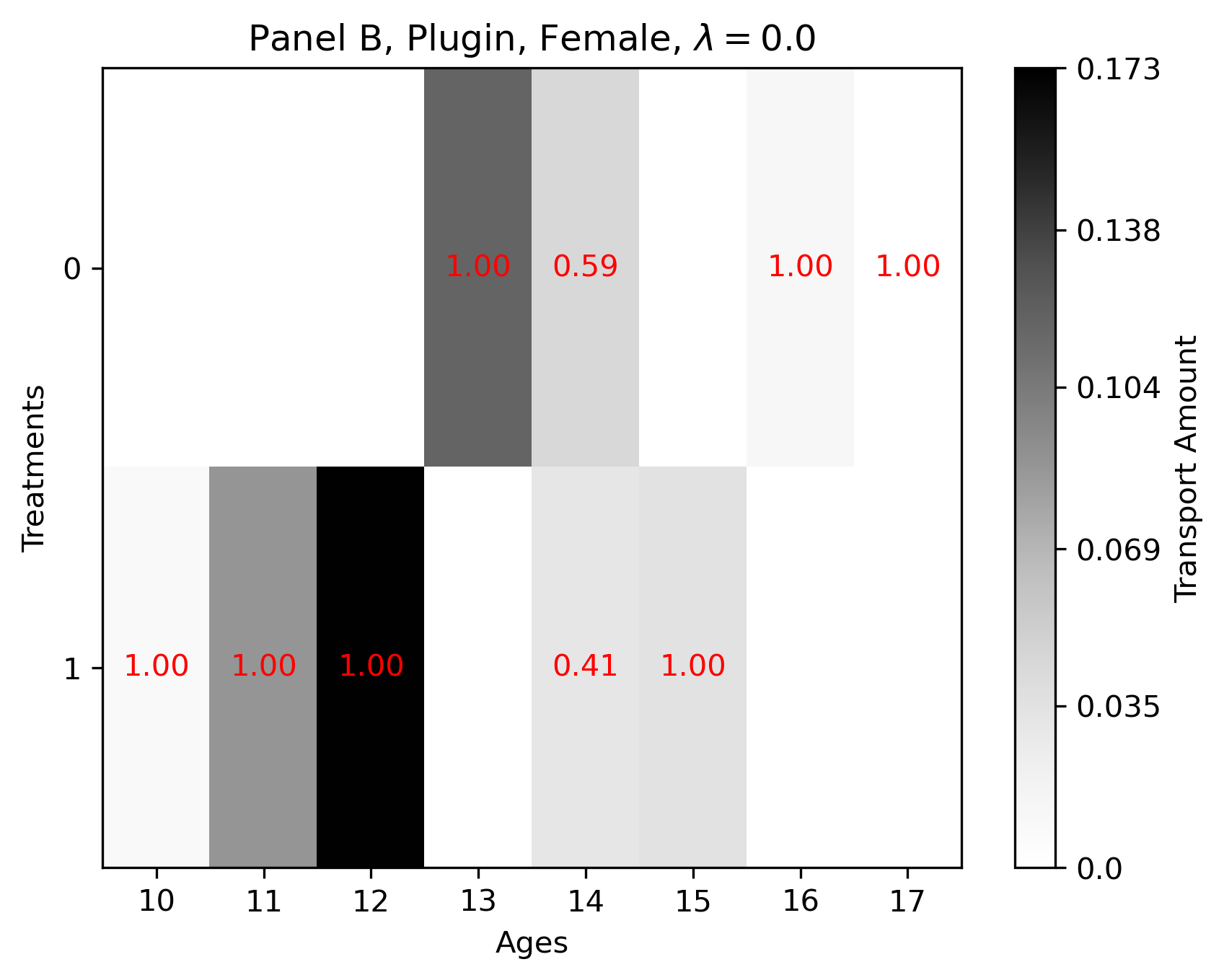}\includegraphics[scale=0.55]{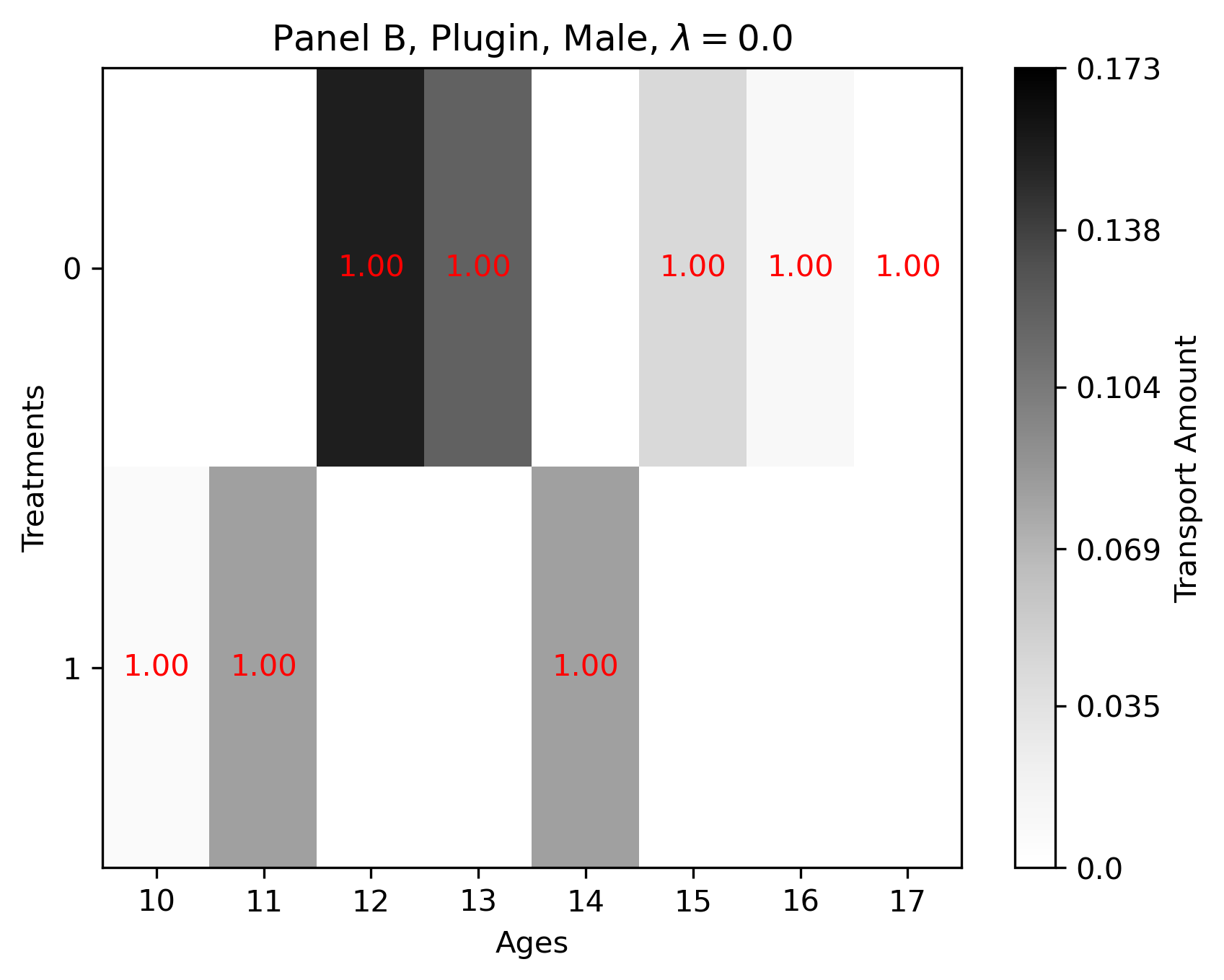}\linebreak{}

\includegraphics[scale=0.55]{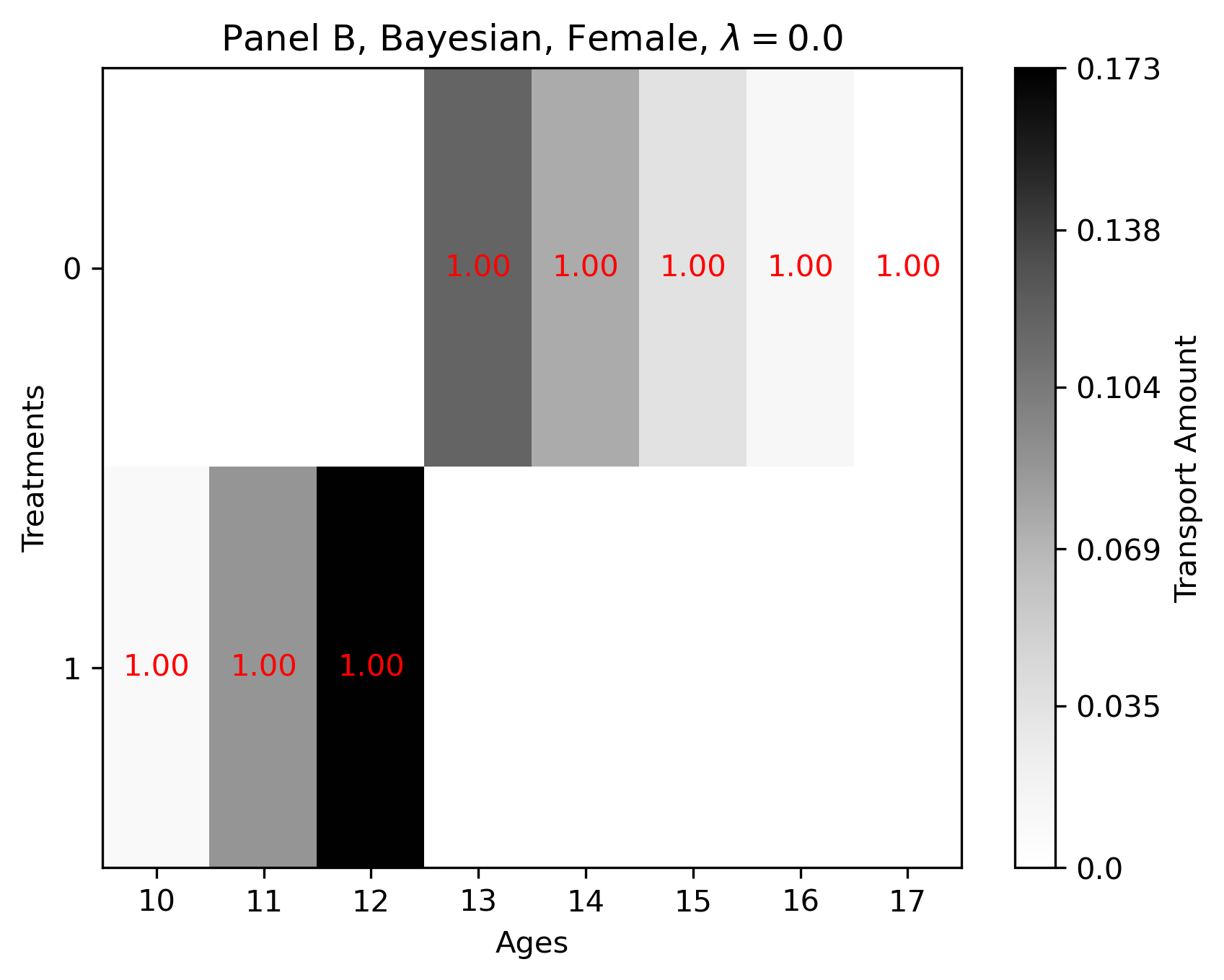}\includegraphics[scale=0.55]{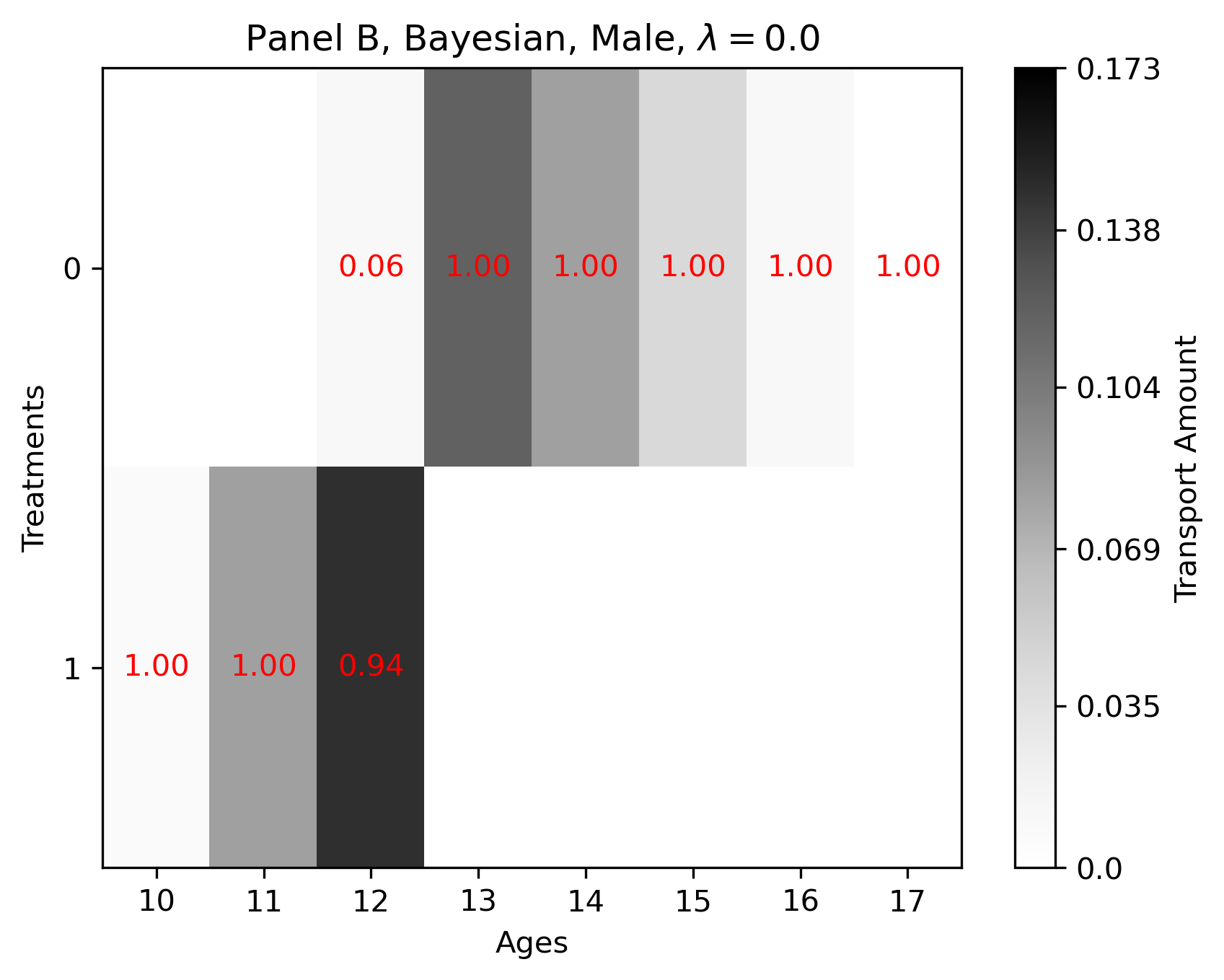}
\end{figure}

\section{Conclusion\label{sec:Conclusion}}

We studied the decision-theoretic optimality of treatment assignment
rules under capacity constraints on available treatments. Since such
constraints complicate the analysis of optimal rules, we transformed
the planner's constrained maximization problem into the unconstrained
one using tools from optimal transport theory. This reformulation
allows us to search for optimal rules in terms of couplings that automatically
satisfy the capacity constraints. We investigated two rules previously
studied in the literature---the plug-in rule and the Bayesian rule.
Both are average optimal when the planner's utility function is smooth;
however, the plug-in rule may no longer be optimal when the planner's
utility function is only directionally differentiable. A simulation
study supports our theoretical predictions. We demonstrated our methods
with a voucher assignment problems for private secondary school attendance
using data from \citet{angrist2006long}.

While we focused on average optimality as the optimality criterion,
asymptotic minimax optimality is also a widely used benchmark in local
asymptotics frameworks. \citet{kido2023locally} provides an asymptotic
minimax optimality result when the ATE is partially identified and
there are no constraints on available treatments. In that setting,
the plug-in rule becomes optimal only when the oracle rule is (Hadamard)
differentiable. Given that \citet{hirano2009asymptotics} show the
minimax optimality of the plug-in rule under the full differentiability
conditions, we conjecture that the modes of differentiability of the
planner's utility function $w$ plays a key role in the minimax optimality
of the plug-in rule in our setting. 

\newpage{}

\appendix

\section{\label{sec:Proof-of-Theorem}Proof of Theorem \ref{thm:optimality}}

\subsection{Preliminaries}

Our proof of Theorem \ref{thm:optimality} proceeds as follows. Lemma
\ref{lem:lower_bound} provides sufficient conditions for a rule $\left\{ \mu_{n}\right\} $
in $\mathcal{D}$ to be average optimal. Lemma \ref{lem:The-ex-post-Bayes}
shows that the Bayesian rule $\{\mu_{n}^{B}\}$ satisfies these conditions.
Lemmas \ref{lem:conclusion_from_conditionK}--\ref{lem:epb_Qn} are
used to establish Lemma \ref{lem:The-ex-post-Bayes}. Additional auxiliary
lemmas are relegated to Appendix \ref{sec:auxiliary-lemmas}. 

In what follows, the expectation $\mathbb{E}_{P_{\theta_{nh}}^{n}}$
can be understood as the outer expectation when $\mu_{n}$ is not
measurable. Also, $P_{h}$ denotes the joint law of $\Delta\sim N(h,I_{0}^{-1})$
and $U\sim\mathrm{Unif}[0,1]$ in the limit experiment. 

The following is a known result and can be found at \citet[Theorem 10.8]{van2000asymptotic}. 

\begin{prop}\label{prop:subpoly-vdv} Suppose that model is DQM at
$\theta_{0}$. Let $C_{n}$ be the ball of radius $M_{n}$ for a given,
arbitrary sequence $M_{n}\to\infty$. Further, suppose $\int\left\lVert \theta\right\rVert ^{p}\mathrm{d}\pi(\theta)<\infty$.
Then, for every measurable function $f$ that grows subpolynomially
of order $p$, 
\[
\int f(h)\boldsymbol{1}_{C_{n}^{c}}(h)\pi(\theta_{nh}|Z^{n})\mathrm{d}h=o_{P_{\theta_{0}}^{n}}(1)\quad\text{as }n\to\infty.
\]
\end{prop}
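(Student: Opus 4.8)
The plan is to run the standard Bernstein--von Mises tail argument: represent the localized posterior as a likelihood-ratio integral, bound its denominator below, and kill the numerator on $\{\|h\|>M_n\}$ using exponentially powerful tests. First I would reduce the claim. Since $f$ grows subpolynomially of order $p$ we may take $|f(h)|\le 1+\|h\|^{p}$, and because $M_{n}\uparrow\infty$ forces $M_n\ge 1$ eventually, on $\{\|h\|>M_n\}$ we have $1+\|h\|^{p}\le 2\|h\|^{p}$; hence it suffices to show $\int_{\|h\|>M_{n}}\|h\|^{p}\,\mathrm{d}\pi(\theta_{nh}|Z^{n})(h)=o_{P_{\theta_{0}}^{n}}(1)$. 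Writing $V_{n}(h):=\prod_{i=1}^{n}\frac{p_{\theta_{nh}}}{p_{\theta_{0}}}(Z_{i})$ and cancelling the common factor $\prod_i p_{\theta_0}(Z_i)$ together with the constant Jacobian of the reparametrization $\theta=\theta_{nh}$, this integral equals $N_{n}/D_{n}$, where $N_{n}:=\int_{\|h\|>M_{n}}\|h\|^{p}V_{n}(h)\pi(\theta_{nh})\,\mathrm{d}h$ and $D_{n}:=\int V_{n}(g)\pi(\theta_{ng})\,\mathrm{d}g$.

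For the denominator I would restrict to a fixed ball and invoke the LAN expansion implied by DQM together with the continuity and positivity of $\pi$ at $\theta_{0}$: $\int_{\|g\|\le 1}V_{n}(g)\pi(\theta_{ng})\,\mathrm{d}g$ converges in distribution to $\pi(\theta_{0})\int_{\|g\|\le 1}\exp(g^{\top}\Delta-\tfrac12 g^{\top}I_{0}g)\,\mathrm{d}g$, a strictly positive limit. Since $D_n$ is bounded below by this quantity, for every $\eta>0$ one can choose $c>0$ so that $\limsup_{n}P_{\theta_{0}}^{n}(D_{n}<c)$ is arbitrarily small, i.e. $1/D_{n}=O_{P_{\theta_{0}}^{n}}(1)$.

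The heart is the numerator. I would introduce $\{0,1\}$-valued tests $\phi_{n}$ with $\mathbb{E}_{P_{\theta_{0}}^{n}}\phi_{n}\to 0$, with local type-II control $\mathbb{E}_{P_{\theta_{nh}}^{n}}(1-\phi_{n})\le e^{-c\|h\|^{2}}$ for $\|h\|\le\delta\sqrt{n}$, and with global control $\mathbb{E}_{P_{\theta}^{n}}(1-\phi_{n})\le e^{-cn}$ for $\|\theta-\theta_{0}\|>\delta$. The change-of-measure identity $\mathbb{E}_{P_{\theta_{0}}^{n}}[(1-\phi_{n})V_{n}(h)]=\mathbb{E}_{P_{\theta_{nh}}^{n}}(1-\phi_{n})$ and Fubini give $\mathbb{E}_{P_{\theta_{0}}^{n}}[(1-\phi_{n})N_{n}]=\int_{\|h\|>M_{n}}\|h\|^{p}\,\mathbb{E}_{P_{\theta_{nh}}^{n}}(1-\phi_{n})\,\pi(\theta_{nh})\,\mathrm{d}h$. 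Splitting at $\|h\|=\delta\sqrt{n}$ and using that $\pi$ is bounded by some $C$ on the neighborhood $\|\theta-\theta_0\|\le\delta$, the local piece is at most $C\int_{\|h\|>M_{n}}\|h\|^{p}e^{-c\|h\|^{2}}\,\mathrm{d}h\to 0$ (the tail of an integrable Gaussian moment, since $M_{n}\to\infty$), while the global piece equals $e^{-cn}n^{(p+k)/2}\int_{\|\theta-\theta_{0}\|>\delta}\|\theta-\theta_{0}\|^{p}\pi(\theta)\,\mathrm{d}\theta\to 0$, the integral being finite by $\int\|\theta\|^{p}\,\mathrm{d}\pi<\infty$. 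Hence $(1-\phi_{n})N_{n}=o_{P_{\theta_{0}}^{n}}(1)$. To assemble, for $\eta>0$ and $c$ as above I would use $P_{\theta_{0}}^{n}(N_{n}/D_{n}>\eta)\le \mathbb{E}_{P_{\theta_{0}}^{n}}\phi_{n}+P_{\theta_{0}}^{n}(D_{n}<c)+\frac{1}{\eta c}\mathbb{E}_{P_{\theta_{0}}^{n}}[(1-\phi_{n})N_{n}]$, and let $n\to\infty$ and then $c\downarrow 0$.

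I expect the construction of the tests $\phi_{n}$ with the stated quadratic-exponential type-II decay to be the main obstacle, since DQM is a purely local hypothesis: the local bound $e^{-c\|h\|^2}$ rests on the nonsingularity of $I_{0}$ and the Hellinger/affinity estimates that DQM delivers, whereas the global bound for $\|\theta-\theta_{0}\|>\delta$ additionally requires identifiability of the model, both furnished by the classical Le Cam testing lemmas. The secondary delicate point is making the LAN/affinity estimates uniform over the intermediate range $M_{n}<\|h\|\le\delta\sqrt{n}$. Everything else—the denominator bound via LAN and the Fubini/Markov assembly—is routine once these two inputs are secured.
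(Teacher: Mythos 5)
The paper does not supply a proof of this proposition at all---it simply cites it as \citet[Theorem 10.8]{van2000asymptotic}---and your argument is precisely the standard proof of that cited result (reduce to the $\|h\|^{p}$-tail, write the posterior integral as $N_{n}/D_{n}$, bound $D_{n}$ below via the LAN expansion and positivity of $\pi$ at $\theta_{0}$, kill $N_{n}$ with Le Cam's tests having type-II error $e^{-c\|h\|^{2}}$ locally and $e^{-cn}$ globally, then assemble by Markov's inequality), so it is correct and essentially the same approach as the source the paper relies on. Your closing observation is also apt: the test construction requires nonsingular $I_{0}$ and uniformly consistent tests of $\theta_{0}$ against $\{\|\theta-\theta_{0}\|>\delta\}$, hypotheses of van der Vaart's theorem that the proposition as restated here leaves implicit.
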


\subsection{Proof}

\selectlanguage{english}%
Let $A_{0}:=\arg\max_{\mu\in\mathcal{M}}W(\theta_{0},\mu)$ be the
set of couplings that maximize the welfare at $\theta_{0}$. Note
that $A_{0}$ is compact by \citet[Corollary 5.21]{villani2009optimal}.
Define the loss function in the limit experiment by 
\[
L_{\infty}(\mu,\Delta):=\int\left[\dot{W}_{\mathcal{M},0}^{*}[s]-\int\dot{w}_{\theta_{0}}(x,t;s)\mathrm{d}\mu\right]\mathrm{d}N(\Delta,I_{0}^{-1})(s),
\]
where $\dot{W}_{\mathcal{M},0}^{*}$ is Hadamard directional derivative
of $W_{\mathcal{M}}^{*}:\Theta\to\mathbb{R}$ at $\theta_{0}$, which
is defined in Lemma \ref{lem:HDD}. 

\begin{lem}\label{lem:lower_bound} Let $\{\mu_{n}\}\in\mathcal{D}$
be any sequence of decision rules that will be matched by $\mu_{\infty}$
in the limit experiment. It holds (i)
\[
\liminf_{n\to\infty}\int\sqrt{n}R(\mu_{n},\theta_{nh})\pi(\theta_{nh})\mathrm{d}h\ge\pi(\theta_{0})\int_{0}^{1}\int L_{\infty}(\mu_{\infty}(\Delta,u),\Delta)\mathrm{d}\Delta\mathrm{d}u,
\]
(ii) 
\[
\limsup_{n\to\infty}\int\sqrt{n}R(\mu_{n},\theta_{nh})\pi(\theta_{nh})\mathrm{d}h\le\pi(\theta_{0})\int_{0}^{1}\int L_{\infty}(\mu_{\infty}(\Delta,u),\Delta)\mathrm{d}\Delta\mathrm{d}u,
\]
and (iii) $\left\{ \mu_{n}\right\} $ is average optimal if the matched
rule $\mu_{\infty}$ of $\mu_{n}$ satisfies $\mu_{\infty}(\Delta)\in\arg\min_{\mu\in A_{0}}L_{\infty}(\mu,\Delta)$
where $\Delta\sim N(h,I_{0}^{-1})$. \end{lem}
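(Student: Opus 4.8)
The plan is to derive (iii) from the two-sided bound in (i) and (ii), so the heart of the argument is to identify the exact limit-experiment value of the integrated local risk. First I would decompose the scaled regret into three pieces,
\[
\sqrt{n}\left[W_{\mathcal{M}}^{*}(\theta_{nh})-W(\theta_{nh},\mu_{n})\right]=\sqrt{n}\left[W_{\mathcal{M}}^{*}(\theta_{nh})-W_{\mathcal{M}}^{*}(\theta_{0})\right]-\sqrt{n}\left[W(\theta_{nh},\mu_{n})-W(\theta_{0},\mu_{n})\right]+\sqrt{n}\left[W_{\mathcal{M}}^{*}(\theta_{0})-W(\theta_{0},\mu_{n})\right].
\]
The first piece converges to $\dot{W}_{\mathcal{M},0}^{*}[h]$ by the Hadamard directional differentiability of $W_{\mathcal{M}}^{*}$ (Lemma \ref{lem:HDD}); the second converges to $\int\dot{w}_{\theta_{0}}(x,t;h)\mathrm{d}\mu_{n}$, where the uniform expansion in Assumption \ref{assu:w-dot}(iv) is exactly what lets me carry the $\sqrt{n}$-difference inside the integral against $\mu_{n}$ uniformly over $\mathcal{X}\times\mathcal{T}$; and the third piece is nonnegative and vanishes precisely when $\mu_{n}\in A_{0}$.

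Next I would pass to the limit experiment. By the representation theorem (Proposition \ref{prop:ART}) together with Le Cam's third lemma, under $P_{\theta_{nh}}^{n}$ the rule matches $\mu_{\infty}(\Delta,U)$ with $\Delta\sim N(h,I_{0}^{-1})$, so the first two pieces converge in distribution to $\dot{W}_{\mathcal{M},0}^{*}[h]-\int\dot{w}_{\theta_{0}}(x,t;h)\mathrm{d}\mu_{\infty}(\Delta,U)$; here the map $\mu\mapsto\int\dot{w}_{\theta_{0}}(x,t;h)\mathrm{d}\mu$ is $d_{W}$-continuous under Assumption \ref{assu:w-dot}(ii), so the portmanteau/continuous-mapping theorem applies. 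For the lower bound (i) I would invoke Fatou's lemma in $n$ and use $\pi(\theta_{nh})\to\pi(\theta_{0})$. The integration over $h$ against Lebesgue measure is then recast, via Gaussian conjugacy under a flat local prior and Fubini, as integration of the posterior $N(\Delta,I_{0}^{-1})$; this is precisely the step that manufactures the inner integral against $N(\Delta,I_{0}^{-1})(s)$ in the definition of $L_{\infty}$, yielding $\pi(\theta_{0})\int_{0}^{1}\int L_{\infty}(\mu_{\infty}(\Delta,u),\Delta)\mathrm{d}\Delta\mathrm{d}u$ as a lower bound. For the upper bound (ii) I would replace Fatou by reverse Fatou, licensed by the integrable envelope of Assumption \ref{assu:reverse-Fatou}, together with the subpolynomial tail control of the posterior from Proposition \ref{prop:subpoly-vdv} to discard the contribution of $h$ outside the growing balls $C_{M_{n}}$; the matching bounds show the limit exists and equals the stated value.

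Finally, for (iii) I would argue by pointwise minimization in the limit experiment. Any candidate that fails to concentrate on $A_{0}$ keeps the third piece of the decomposition bounded away from zero on a positive-probability event, so its $\sqrt{n}$-scaling forces the liminf risk to diverge; hence only rules with $\mu_{\infty}\in A_{0}$ almost everywhere are optimality candidates, and for these the third piece drops out. For such rules $L_{\infty}(\mu_{\infty}(\Delta,u),\Delta)\ge\min_{\mu\in A_{0}}L_{\infty}(\mu,\Delta)$ pointwise, so (i) delivers the universal lower bound $\pi(\theta_{0})\int\min_{\mu\in A_{0}}L_{\infty}(\mu,\Delta)\mathrm{d}\Delta$ over all of $\mathcal{D}$. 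The hypothesized rule has $\mu_{n}\in A_{0}$ with probability approaching one and $\mu_{\infty}(\Delta)\in\arg\min_{\mu\in A_{0}}L_{\infty}(\mu,\Delta)$, so by (ii) its risk attains this bound, establishing average optimality.

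The hard part will be making the exchange of limit and integration rigorous while the decision variable is an $\mathcal{M}$-valued random element converging only weakly: the Fatou and reverse-Fatou steps must be reconciled with weak convergence of couplings for a loss that is nonlinear in the score direction $s$, and the $h$-tails must be dominated through the subpolynomial growth bound. The $d_{W}$-continuity of $\mu\mapsto\int\dot{w}_{\theta_{0}}(x,t;\cdot)\mathrm{d}\mu$ secured by Assumptions \ref{assu:w} and \ref{assu:w-dot}, the compactness of $A_{0}$, and the envelope in Assumption \ref{assu:reverse-Fatou} are the ingredients I expect to do the real work at this step.
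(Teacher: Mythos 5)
Your proposal is correct and follows essentially the same route as the paper's proof: the identical three-term decomposition of the scaled regret, Lemma \ref{lem:HDD} for the first term, Assumption \ref{assu:w-dot}(iv) for the uniform linearization of the second, divergence of the third term off $A_{0}$, Fatou and reverse Fatou (via Assumption \ref{assu:reverse-Fatou}) for the two bounds, the Tonelli/Gaussian-symmetry exchange that produces the inner $N(\Delta,I_{0}^{-1})$ integral in $L_{\infty}$, and the same chain of inequalities for (iii).
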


\begin{proof} (i). By Fatou's lemma, 
\begin{align}
 & \liminf_{n\to\infty}\int R(\mu_{n},\theta_{nh})\pi(\theta_{nh})\mathrm{d}h\nonumber \\
 & \ge\pi(\theta_{0})\int\left\{ \liminf_{n\to\infty}\mathbb{E}_{P_{\theta_{nh}}^{n}}\sqrt{n}\left[W_{\mathcal{M}}^{*}(\theta_{nh})-W(\theta_{nh},\mu_{n}(Z^{n}))\right]\right\} \mathrm{d}h.\label{eq:Fatou}
\end{align}
For any $h$, 
\begin{align}
 & \mathbb{E}_{P_{\theta_{nh}}^{n}}\sqrt{n}\left[W_{\mathcal{M}}^{*}(\theta_{nh})-W(\theta_{nh},\mu_{n}(Z^{n}))\right]\label{eq:decomp}\\
 & =\int\sqrt{n}\left(\max_{\mu\in\mathcal{M}}\int w(\theta_{nh},x,t)\mathrm{d}\mu-\max_{\mu\in\mathcal{M}}\int w(\theta_{0},x,t)\mathrm{d}\mu\right)\mathrm{d}P_{\theta_{nh}}^{n}(z)\nonumber \\
 & \quad+\int\sqrt{n}\left(\max_{\mu\in\mathcal{M}}\int w(\theta_{0},x,t)\mathrm{d}\mu-\int w(\theta_{0},x,t)\mathrm{d}\mu_{n}(z)\right)\mathrm{d}P_{\theta_{nh}}^{n}(z)\nonumber \\
 & \quad-\int\sqrt{n}\left(\int w(\theta_{nh},x,t)\mathrm{d}\mu_{n}(z)-\int w(\theta_{0},x,t)\mathrm{d}\mu_{n}(z)\right)\mathrm{d}P_{\theta_{nh}}^{n}(z).\nonumber 
\end{align}
For the first term of the RHS of (\ref{eq:decomp}),
\begin{align*}
 & \int\sqrt{n}\left(\max_{\mu\in\mathcal{M}}\int w(\theta_{nh},x,t)\mathrm{d}\mu-\max_{\mu\in\mathcal{M}}\int w(\theta_{0},x,t)\mathrm{d}\mu\right)\mathrm{d}P_{\theta_{nh}}^{n}(z)\\
 & =\sqrt{n}\left(\max_{\mu\in\mathcal{M}}\int w(\theta_{nh},x,t)\mathrm{d}\mu-\max_{\mu\in\mathcal{M}}\int w(\theta_{0},x,t)\mathrm{d}\mu\right)\to\dot{W}_{\mathcal{M},0}^{*}[h],
\end{align*}
where the convergence follows from Lemma \ref{lem:HDD}. For the second
term of the RHS of (\ref{eq:decomp}), it is clear that
\[
\int\sqrt{n}\left(\max_{\mu\in\mathcal{M}}\int w(\theta_{0},x,t)\mathrm{d}\mu-\int w(\theta_{0},x,t)\mathrm{d}\mu_{n}(z)\right)\mathrm{d}P_{\theta_{nh}}^{n}(z)\ge0.
\]
For the third term of the RHS of (\ref{eq:decomp}),
\begin{align*}
 & \int\sqrt{n}\left(\int w(\theta_{nh},x,t)\mathrm{d}\mu_{n}(z)-\int w(\theta_{0},x,t)\mathrm{d}\mu_{n}(z)\right)\mathrm{d}P_{\theta_{nh}}^{n}(z)\\
 & =\int\int\dot{w}_{\theta_{0}}(x,t;h)\mathrm{d}\mu_{n}(z)\mathrm{d}P_{\theta_{nh}}^{n}(z)\\
 & \quad+\int\int\left\{ \text{\ensuremath{\sqrt{n}\left[w(\theta_{nh},x,t)-w(\theta_{0},x,t)\right]}}-\dot{w}_{\theta_{0}}(x,t;h)\right\} \mathrm{d}\mu_{n}(z)\mathrm{d}P_{\theta_{nh}}^{n}(z)
\end{align*}
where $\dot{w}_{\theta_{0}}(x,t;h)$ is the directional derivative
of $w(\theta,x,t)$ at $\theta_{0}$. By Assumption \ref{assu:w-dot}
(i),
\begin{align*}
 & \left|\int\int\left\{ \text{\ensuremath{\sqrt{n}\left[w(\theta_{nh},x,t)-w(\theta_{0},x,t)\right]}}-\dot{w}_{\theta_{0}}(x,t;h)\right\} \mathrm{d}\mu_{n}(z)\mathrm{d}P_{\theta_{nh}}^{n}(z)\right|\\
\le & \max_{(x,t)\in\mathcal{X}\times\mathcal{T}}\left|\text{\ensuremath{\sqrt{n}\left[w(\theta_{nh},x,t)-w(\theta_{0},x,t)\right]}}-\dot{w}_{\theta_{0}}(x,t;h)\right|\to0.
\end{align*}
Therefore, 
\begin{align*}
 & \liminf_{n\to\infty}\mathbb{E}_{P_{\theta_{nh}}^{n}}\sqrt{n}\left[W_{\mathcal{M}}^{*}(\theta_{nh})-W(\theta_{nh},\mu_{n}(Z^{n}))\right]\\
 & \ge\dot{W}_{\mathcal{M},0}^{*}[h]-\limsup_{n\to\infty}\int\int\dot{w}_{\theta_{0}}(x,t;h)\mathrm{d}\mu_{n}(z)\mathrm{d}P_{\theta_{nh}}^{n}(z)\\
 & \equiv\dot{W}_{\mathcal{M},0}^{*}[h]-\limsup_{n\to\infty}\int\varphi(\mu_{n}(z))\mathrm{d}P_{\theta_{nh}}^{n}(z)\\
 & \ge\dot{W}_{\mathcal{M},0}^{*}[h]-\mathbb{E}_{(\Delta,U)\sim P_{h}}\varphi(\mu_{\infty}(\Delta,U))
\end{align*}
from the portmanteau theorem in metric spaces since the map $\varphi:\mu\mapsto\int\dot{w}_{\theta_{0}}(x,t;h)\mathrm{d}\mu$
is bounded continuous on $\mathcal{M}$. Thus, the RHS of (\ref{eq:Fatou})
is bounded below by 
\begin{align*}
 & \pi(\theta_{0})\int\liminf_{n\to\infty}\mathbb{E}_{P_{\theta_{nh}}^{n}}\sqrt{n}\left[W_{\mathcal{M}}^{*}(\theta_{nh})-W(\theta_{nh},\mu_{n}(Z^{n}))\right]\mathrm{d}h\\
 & \ge\pi(\theta_{0})\int\mathbb{E}_{P_{h}}\left[\dot{W}_{\mathcal{M},0}^{*}[h]-\varphi(\mu_{\infty})\right]\mathrm{d}h\\
 & =\pi(\theta_{0})\int\int_{0}^{1}\int\left[\dot{W}_{\mathcal{M},0}^{*}[h]-\varphi(\mu_{\infty}(\Delta,u))\right]\mathrm{d}N(h,I_{0}^{-1})(\Delta)\mathrm{d}u\mathrm{d}h\\
 & =\pi(\theta_{0})\int_{0}^{1}\int\int\left[\dot{W}_{\mathcal{M},0}^{*}[s]-\int\dot{w}_{\theta_{0}}(x,t;s)\mathrm{d}\mu_{\infty}(\Delta,u)\right]\mathrm{d}N(\Delta,I_{0}^{-1})(s)\mathrm{d}\Delta\mathrm{d}u,
\end{align*}
where the last equality follows by Tonelli's theorem since the integrand
is nonnegative. By the definition of $L_{\infty}$, the last display
is equal to
\[
\int_{0}^{1}\int L_{\infty}(\mu_{\infty}(\Delta,u),\Delta)\mathrm{d}\Delta\mathrm{d}u.
\]
It should be noted that $L_{\infty}(\mu_{\infty}(\Delta,u),\Delta)$
can depend on $u$ only through $\mu_{\infty}$. 

(ii). If we admit applying the reverse Fatou lemma, we obtain 
\[
\limsup_{n\to\infty}\int\sqrt{n}R(\mu_{n},\theta_{nh})\pi(\theta_{nh})\mathrm{d}h\le\pi(\theta_{0})\int\left\{ \limsup_{n\to\infty}\mathbb{E}_{P_{\theta_{nh}}^{n}}\sqrt{n}\left[W_{\mathcal{M}}^{*}(\theta_{nh})-W(\theta_{nh},\mu_{n}(Z^{n}))\right]\right\} \mathrm{d}h.
\]
The argument can be carried out analogously to (i). However, for the
second term of the RHS of (\ref{eq:decomp}),
\begin{align*}
 & \int\sqrt{n}\left(\max_{\mu\in\mathcal{M}}\int w(\theta_{0},x,t)\mathrm{d}\mu-\int w(\theta_{0},x,t)\mathrm{d}\mu_{n}(z)\right)\mathrm{d}P_{\theta_{nh}}^{n}(z)\\
 & =\int_{\{\mu_{n}\in A_{0}\}}\sqrt{n}\left(\max_{\mu\in\mathcal{M}}\int w(\theta_{0},x,t)\mathrm{d}\mu-\int w(\theta_{0},x,t)\mathrm{d}\mu_{n}(z)\right)\mathrm{d}P_{\theta_{nh}}^{n}(z)\\
 & \quad+\int_{\{\mu_{n}\notin A_{0}\}}\sqrt{n}\left(\max_{\mu\in\mathcal{M}}\int w(\theta_{0},x,t)\mathrm{d}\mu-\int w(\theta_{0},x,t)\mathrm{d}\mu_{n}(z)\right)\mathrm{d}P_{\theta_{nh}}^{n}(z)\\
 & =\int_{\{\mu_{n}\notin A_{0}\}}\sqrt{n}\left(\max_{\mu\in\mathcal{M}}\int w(\theta_{0},x,t)\mathrm{d}\mu-\int w(\theta_{0},x,t)\mathrm{d}\mu_{n}(z)\right)\mathrm{d}P_{\theta_{nh}}^{n}(z),
\end{align*}
where the second equality follows from $\max_{\mu\in\mathcal{M}}\int w(\theta_{0},x,t)\mathrm{d}\mu=\max_{\mu\in A_{0}}\int w(\theta_{0},x,t)\mathrm{d}\mu$.
Further,
\begin{align*}
 & \int_{\{\mu_{n}\notin A_{0}\}}\sqrt{n}\left(\max_{\mu\in\mathcal{M}}\int w(\theta_{0},x,t)\mathrm{d}\mu-\int w(\theta_{0},x,t)\mathrm{d}\mu_{n}(z)\right)\mathrm{d}P_{\theta_{nh}}^{n}(z)\\
 & \le\sqrt{n}P_{\theta_{nh}}^{n}\left(\mu_{n}\notin A_{0}\right)\times2\max_{(x,t)\in\mathcal{X}\times\mathcal{T}}\left|w(\theta_{0},x,t)\right|\to0,
\end{align*}
as $n\to\infty$ from the definition of $\{\mu_{n}\}\in\mathcal{D}$.
Thus, applying the portmanteau theorem yields
\begin{equation}
\limsup_{n\to\infty}\mathbb{E}_{P_{\theta_{nh}}^{n}}\sqrt{n}\left[W_{\mathcal{M}}^{*}(\theta_{nh})-W(\theta_{nh},\mu_{n}(Z^{n}))\right]\le\dot{W}_{\mathcal{M},0}^{*}[h]-\mathbb{E}_{P_{h}}\left[\int\dot{w}_{\theta_{0}}(x,t;h)\mathrm{d}\mu_{\infty}\right].\label{eq:limsup_integrand}
\end{equation}
Then we can conclude similarly as in (i).

Finally, we check the validity of applying the reverse Fatou lemma.
Let 
\[
g_{n}(h):=\mathbb{E}_{P_{\theta_{nh}}^{n}}\sqrt{n}\left[W_{\mathcal{M}}^{*}(\theta_{nh})-W(\theta_{nh},\mu_{n}(Z^{n}))\right].
\]
From (\ref{eq:limsup_integrand}), there exists $N$ such that for
all $n>N$, 
\[
g_{n}(h)\le\dot{W}_{\mathcal{M},0}^{*}[h]-\mathbb{E}_{P_{h}}\left[\int\dot{w}_{\theta_{0}}(x,t;h)\mathrm{d}\mu_{\infty}\right]+\varepsilon.
\]
Note that $g_{n}(h)$ is bounded for all $n\le N$. Then define
\[
g(h):=\max\left\{ g_{1}(h),\dots,g_{N}(h),\dot{W}_{\mathcal{M},0}^{*}[h]-\mathbb{E}_{P_{h}}\left[\int\dot{w}_{\theta_{0}}(x,t;h)\mathrm{d}\mu_{\infty}\right]+\varepsilon\right\} .
\]
Thus we have $g_{n}(h)\pi(\theta_{nh})\le\sup_{\theta\in\Theta}\pi(\theta)\times g(h)$
for all $n\in\mathbb{N}$, with $\sup_{\theta\in\Theta}\pi(\theta)<\infty$
from Assumption \ref{assu:prior} (i). 

(iii). Fix any sequence of rules $\{\mu_{n}^{\prime}\}\in\mathcal{D}$,
and let $\mu_{\infty}^{\prime}$ be the matched rule of $\mu_{n}^{\prime}$.
Combining (i) and (ii) yields 
\begin{align*}
\liminf_{n\to\infty}\int\sqrt{n}R(\mu_{n},\theta_{nh})\pi(\theta_{nh})\mathrm{d}h & \le\limsup_{n\to\infty}\int\sqrt{n}R(\mu_{n},\theta_{nh})\pi(\theta_{nh})\mathrm{d}h\\
 & \le\pi(\theta_{0})\int_{0}^{1}\int L_{\infty}(\mu_{\infty}(\Delta,u),\Delta)\mathrm{d}\Delta\mathrm{d}u\\
 & \le\pi(\theta_{0})\int_{0}^{1}\int L_{\infty}(\mu_{\infty}^{\prime}(\Delta,u),\Delta)\mathrm{d}\Delta\mathrm{d}u\\
 & \le\liminf_{n\to\infty}\int\sqrt{n}R(\mu_{n}^{\prime},\theta_{nh})\pi(\theta_{nh})\mathrm{d}h,
\end{align*}
which completes the proof. \end{proof}\selectlanguage{american}%

Let $\left(\mathbb{D},\left\lVert \cdot\right\rVert _{\mathbb{D}}\right)$
be the product metric space induced by $\left(\mathcal{M},d_{W}\right)$
and $\left([0,1].\left|\cdot\right|\right)$. Let $\ell^{\infty}(\mathbb{D}):=\{f:\mathbb{D}\to\mathbb{R}:\sup_{(\mu,\varepsilon)\in\mathbb{D}}|f(\mu,\varepsilon)|<\infty\}$.
Define 
\begin{align*}
(\mu,\varepsilon)\mapsto & \mathcal{Q}_{n}(\mu,\varepsilon;z):=\int\left[\int\sqrt{n}\left(w(\theta_{nh},x,t)-w(\theta_{0},x,t)\right)\mathrm{d}\mu\right]\pi_{n}(\theta_{nh}|z)\mathrm{d}h-\varepsilon H(\mu),\\
(\mu,\varepsilon)\mapsto & Q_{\infty}(\mu,\varepsilon;\Delta):=\int\left[\int\dot{w}_{\theta_{0}}(x,t;h)\mathrm{d}\mu\right]\mathrm{d}N(\Delta,I_{0}^{-1})(h)-\varepsilon H(\mu).
\end{align*}
In the same manner as the definition of $\mu_{n}^{B}(Z^{n})$, we
define $\mu_{\infty}^{*}(\Delta)$ as the limit of $\mu_{\infty,\varepsilon}^{*}(\Delta)$
where 
\[
\mu_{\infty,\varepsilon}^{*}(\Delta):=\arg\max_{\mu\in A_{0}}\int\int\dot{w}_{\theta_{0}}(x,t;s)\mathrm{d}N(\Delta,I_{0}^{-1})(s)\mathrm{d}\mu-\varepsilon H(\mu).
\]
It is easy to see that $\mu_{\infty}^{*}(\Delta)\in\arg\min_{\mu\in A_{0}}L_{\infty}(\mu,\Delta)$. 

\begin{lem}\label{lem:The-ex-post-Bayes} The Bayesian rule $\{\mu_{n}^{B}(Z^{n})\}$
satisfies $\sqrt{n}P_{\theta_{nh}}^{n}\left(\mu_{n}^{B}(Z^{n})\notin A_{0}\right)\to0$
and (ii) $\mu_{n}^{B}\stackrel{h}{\rightsquigarrow}\mu_{\infty}^{*}$
as $n\to\infty$. \end{lem}

\begin{proof} (i). We claim that for any $\theta_{0}\in\Theta$,
there are $\overline{n}$ and $\varepsilon_{n}^{\prime}(\overline{n})$
(which is at the order of $n^{\alpha+1}$ for some $\alpha\ge1$)
such that for all $n\ge\overline{n}$,
\[
P_{\theta_{nh}}^{n}\left(\mu_{n}^{B}(z)\notin A_{0}\right)\le P_{\theta_{nh}}^{n}\left(\pi_{n}\left(N_{1/n}(\theta_{0})^{c}|z\right)>2\varepsilon_{n}^{\prime}\right),
\]
where $N_{\varepsilon}(\theta_{0}):=\left\{ \theta:\rVert\theta-\theta_{0}\rVert<\varepsilon\right\} $
for $\varepsilon>0$, and $\pi_{n}\left(A|z\right):=\int_{A}\pi_{n}(\theta|z)\mathrm{d}\theta$.\footnote{This statement is an adaptation of \citet[Lemma 8]{christensen2023optimal}.
Their proof cannot directly apply to our setting because their arguments
could fail when the set of actions, $\mathcal{M}$ in our notation,
is not finite. } Then the conclusion follows since \citet[Lemmas 9--11]{christensen2023optimal}
imply that $\sqrt{n}P_{\theta_{nh}}^{n}\left(\pi_{n}\left(N_{1/n}(\theta_{0})^{c}|z\right)>2\varepsilon_{n}^{\prime}\right)\to0$
as $n\to\infty$.

It is sufficient to show that for any $z$, 
\begin{equation}
\pi_{n}\left(N_{1/n}(\theta_{0})^{c}|z\right)\le2\varepsilon_{n}^{\prime}\implies\mu_{n}^{B}(z)\in A_{0}.\label{eq:promise}
\end{equation}
It is trivial if $A_{0}=\mathcal{M}$, so suppose $A_{0}\subsetneq\mathcal{M}$.
By Lemma \ref{lem:conclusion_from_conditionK} below, there exists
$\overline{n}$ such that $n\ge\overline{n}$ implies 
\[
\min_{\mu\in A_{0}}\int_{N_{1/n}(\theta_{0})}W(\theta,\mu)\pi_{n}(\theta|z)\mathrm{d}\theta>\sup_{\nu\not\in A_{0}}\int_{N_{1/n}(\theta_{0})}W(\theta,\nu)\pi_{n}(\theta|z)\mathrm{d}\theta
\]
Then there exists $\alpha=\alpha(\overline{n})\ge1$ such that 
\begin{align*}
\min_{\mu\in A_{0}}\int_{N_{1/n}(\theta_{0})}W(\theta,\mu)\pi_{n}(\theta|z)\mathrm{d}\theta & >\sup_{\nu\not\in A_{0}}\int_{N_{1/n}(\theta_{0})}W(\theta,\nu)\pi_{n}(\theta|z)\mathrm{d}\theta+\frac{1}{\overline{n}^{\alpha}},
\end{align*}
which implies that for $n\ge\overline{n}$,
\begin{align*}
\min_{\mu\in A_{0}}\int_{N_{1/n}(\theta_{0})}W(\theta,\mu)\pi_{n}(\theta|z)\mathrm{d}\theta & >\sup_{\nu\not\in A_{0}}\int_{N_{1/n}(\theta_{0})}W(\theta,\nu)\pi_{n}(\theta|z)\mathrm{d}\theta+\frac{1}{n^{\alpha}}.
\end{align*}
Thus, we have 
\begin{align*}
\min_{\mu\in A_{0}}\left(\int W(\theta,\mu)\pi_{n}(\theta|z)\mathrm{d}\theta\right) & =\min_{\mu\in A_{0}}\left(\int_{N_{1/n}(\theta_{0})}W(\theta,\mu)\pi_{n}(\theta|z)\mathrm{d}\theta+\int_{N_{1/n}(\theta_{0})^{c}}W(\theta,\mu)\pi_{n}(\theta|z)\mathrm{d}\theta\right)\\
 & \ge\min_{\mu\in A_{0}}\int_{N_{1/n}(\theta_{0})}W(\theta,\mu)\pi_{n}(\theta|z)\mathrm{d}\theta+\min_{\mu\in A_{0}}\int_{N_{1/n}(\theta_{0})^{c}}W(\theta,\mu)\pi_{n}(\theta|z)\mathrm{d}\theta\\
 & \ge\min_{\mu\in A_{0}}\int_{N_{1/n}(\theta_{0})}W(\theta,\mu)\pi_{n}(\theta|z)\mathrm{d}\theta-\pi_{n}\left(N_{1/n}(\theta_{0})^{c}|z\right)M,
\end{align*}
where $\sup_{\theta,\mu}\left|W(\theta,\mu)\right|\le M<\infty$.
Also, 
\begin{align*}
\sup_{\nu\not\in A_{0}}\int W(\theta,\nu)\pi_{n}(\theta|z)\mathrm{d}\theta\le\sup_{\nu\not\in A_{0}}\int_{N_{1/n}(\theta_{0})}W(\theta,\nu)\pi_{n}(\theta|z)\mathrm{d}\theta+\pi_{n}\left(N_{1/n}(\theta_{0})^{c}|z\right)M.
\end{align*}
For the promise for (\ref{eq:promise}), we choose $\varepsilon_{n}^{\prime}>0$
that satisfies $\left(n^{\alpha+1}2M\right)^{-1}\le2\varepsilon_{n}^{\prime}<\left(n^{\alpha}2M\right)^{-1}$,
which leads to 
\[
\pi_{n}\left(N_{\varepsilon}(\theta_{0})^{c}|z\right)<\frac{1}{n^{\alpha}2M}.
\]
Then it follows that
\begin{align*}
 & \min_{\mu\in A_{0}}\int W(\theta,\mu)\pi_{n}(\theta|z)\mathrm{d}\theta-\sup_{\nu\not\in A_{0}}\int W(\theta,\nu)\pi_{n}(\theta|z)\mathrm{d}\theta\\
\ge & \min_{\mu\in A_{0}}\int_{N_{1/n}(\theta_{0})}W(\theta,\mu)\pi_{n}(\theta|z)\mathrm{d}\theta-\sup_{\nu\not\in A_{0}}\int_{N_{1/n}(\theta_{0})}W(\theta,\nu)\pi_{n}(\theta|z)\mathrm{d}\theta-2\pi_{n}\left(N_{1/n}(\theta_{0})^{c}|z\right)M\\
\ge & \frac{1}{n^{\alpha}}-2\pi_{n}\left(N_{1/n}(\theta_{0})^{c}|z\right)M>0,
\end{align*}
which implies
\[
\min_{\mu\in A_{0}}\int W(\theta,\mu)\pi_{n}(\theta|z)\mathrm{d}\theta>\sup_{\nu\not\in A_{0}}\int W(\theta,\nu)\pi_{n}(\theta|z)\mathrm{d}\theta.
\]
Thus we conclude $\mu_{n}^{B}(z)\in A_{0}$. 

(ii).  From the first statement, it follows that the asymptotic distribution
of $\mu_{n}^{B}(Z^{n})$ has the support only on $A_{0}$. Hence,
for sufficiently large $n$, $\mu_{n}^{B}$ equivalently solves 
\[
\mu_{n}^{B}(z)\in\arg\max_{\mu\in A_{0}}\mathcal{Q}_{n}(\mu,0;z).
\]
By the definition of $\mu_{n}^{B}$, 
\[
\mathcal{Q}_{n}(\mu_{n}^{B}(z),0;z)=\lim_{\varepsilon\downarrow0}\max_{\mu\in A_{0}}\mathcal{Q}_{n}(\mu,\varepsilon;z).
\]
Take any closed subset $G$ of $\mathcal{M}$. Note that this closedness
is in terms of $(\mathcal{M},d_{W})$. By the Portmanteau lemma, it
is sufficient to show that 
\[
\limsup_{n\to\infty}P_{\theta_{nh}}^{n}\left(\mu_{n}^{B}(z)\in G\right)\le P_{h}\left(\mu_{\infty}^{*}(\Delta)\in G\right)
\]
for the conclusion.

By Lemma \ref{lem:events_equivalence}, for each $n$, 
\[
\left\{ \mu_{n}^{B}(z)\in G\right\} =\left\{ \lim_{\varepsilon\downarrow0}\max_{\mu\in A_{0}\cap G}\mathcal{Q}_{n}(\mu,\varepsilon;z)=\lim_{\varepsilon\downarrow0}\max_{\mu\in A_{0}}\mathcal{Q}_{n}(\mu,\varepsilon;z)\right\} .
\]
By Lemmas \ref{lem:process_conv} and \ref{lem:epb_Qn}, it follows
that $\mathcal{Q}_{n}\stackrel{h}{\rightsquigarrow}Q_{\infty}$ in
$\mathcal{F}$ as $n\to\infty$, where 
\[
\mathcal{F}=\left\{ f\in\ell^{\infty}(\mathbb{D}):\lim_{\varepsilon\downarrow0}\max_{\mu\in A_{0}}f(\mu,\varepsilon)\text{ exists}\right\} .
\]
By Lemma \ref{lem:cont_operator}, the operator $f\mapsto\lim_{\varepsilon\downarrow0}\max_{\mu\in S}f(\mu,\varepsilon)$
is continuous for any closed $S\subset A_{0}$ at any $f\in\mathcal{F}$.
Applying the continuous mapping theorem yields 
\[
\lim_{\varepsilon\downarrow0}\max_{\mu\in S}\mathcal{Q}_{n}(\mu,\varepsilon;z)\stackrel{h}{\rightsquigarrow}\lim_{\varepsilon\downarrow0}\max_{\mu\in S}Q_{\infty}(\mu,\varepsilon;\Delta)\quad\text{as }n\to\infty.
\]
Then 
\begin{align*}
 & \limsup_{n\to\infty}P_{\theta_{nh}}^{n}\left(\mu_{n}^{B}(z)\in G\right)\\
= & \limsup_{n\to\infty}P_{\theta_{nh}}^{n}\left(\lim_{\varepsilon\downarrow0}\max_{\mu\in A_{0}\cap G}\mathcal{Q}_{n}(\mu,\varepsilon;z)=\lim_{\varepsilon\downarrow0}\max_{\mu\in A_{0}}\mathcal{Q}_{n}(\mu,\varepsilon;z)\right)\\
\le & P_{h}\left(\lim_{\varepsilon\downarrow0}\max_{\mu\in A_{0}\cap G}Q_{\infty}(\mu,\varepsilon;\Delta)=\lim_{\varepsilon\downarrow0}\max_{\mu\in A_{0}}Q_{\infty}(\mu,\varepsilon;\Delta)\right),
\end{align*}
where the inequality follows from the Portmanteau lemma. Since we
can derive the equivalence of the events 
\[
\left\{ \mu_{\infty}^{*}(\Delta)\in G\right\} =\left\{ \lim_{\varepsilon\downarrow0}\max_{\mu\in A_{0}\cap G}Q_{\infty}(\mu,\varepsilon;\Delta)=\lim_{\varepsilon\downarrow0}\max_{\mu\in A_{0}}Q_{\infty}(\mu,\varepsilon;\Delta)\right\} 
\]
in the same manner, applying the Portmanteau lemma again yields $\mu_{n}^{B}\stackrel{h}{\rightsquigarrow}\mu_{\infty}^{*}$
as $n\to\infty$.  \end{proof}

\begin{lem} \label{lem:conclusion_from_conditionK} There exists
$\overline{n}>0$ such that for all $n\ge\overline{n}$, 
\begin{equation}
\min_{\mu\in A_{0}}\int_{N_{1/n}(\theta_{0})}W(\theta,\mu)\pi_{n}(\theta|z)\mathrm{d}\theta>\sup_{\nu\not\in A_{0}}\int_{N_{1/n}(\theta_{0})}W(\theta,\nu)\pi_{n}(\theta|z)\mathrm{d}\theta.\label{eq:conclusion_from_conditionK}
\end{equation}

\end{lem}

\begin{proof} Let $g_{n}(\mu):=\int_{N_{1/n}(\theta_{0})}W(\theta,\mu)\pi_{n}(\theta)\mathrm{d}\theta$
and $V_{1/n}:=\int_{N_{1/n}(\theta_{0})}\pi_{n}(\theta)\mathrm{d}\theta$.
First, we claim that $V_{1/n}^{-1}g_{n}(\mu)$ converges to $W(\theta_{0},\mu)$
uniformly over $\mathcal{M}$ as $n\to\infty$; i.e., for all $\eta>0$,
there exists $n_{\eta}$ such that for all $\mu\in\mathcal{M}$,
\[
n\ge n_{\eta}\implies\left|V_{1/n}^{-1}g_{n}(\mu)-W(\theta_{0},\mu)\right|<\eta.
\]
To show this claim, we argue that (i) for each $\mu\in\mathcal{M}$,
$V_{1/n}^{-1}g_{n}(\mu)$ converges to $W(\theta_{0},\mu)$ in pointwise,
and (ii) $\left\{ g_{n}\right\} _{n\in\mathbb{N}}$ is equicontinuous;
i.e., for all $\eta>0$ and all $\mu\in\mathcal{M}$, there exists
$\delta_{(\eta,\mu)}>0$ such that for all $n\in\mathbb{N}$ and all
$\nu\in\mathcal{M}$, 
\[
d_{W}(\mu,\nu)<\delta_{(\eta,\mu)}\implies\left|g_{n}(\mu)-g_{n}(\nu)\right|<\eta.
\]
 Combining with the compactness of $\mathcal{M}$, the uniform convergence
follows from these two.

To see (i), note that 
\begin{equation}
\left|V_{1/n}^{-1}g_{n}(\mu)-W(\theta_{0},\mu)\right|\le V_{1/n}^{-1}\int_{N_{1/n}(\theta_{0})}\left|W(\theta,\mu)-W(\theta_{0},\mu)\right|\pi_{n}(\theta|z)\mathrm{d}\theta.\label{eq:pointwise}
\end{equation}
Fix $\eta>0$. Since the map $\theta\mapsto W(\theta,\mu)$ is continuous
at $\theta_{0}$, there exists $\delta>0$ such that 
\[
\theta\in N_{\delta}(\theta_{0})\implies\left|W(\theta,\mu)-W(\theta_{0},\mu)\right|<\eta.
\]
Then for all $n$ with $n^{-1}<\delta$, RHS of (\ref{eq:pointwise})
is bounded above by $\eta$. 

To see (ii), fix $\eta>0$ and $\mu\in\mathcal{M}.$ First, note that
$g_{n}(\mu)$ can be written as 
\[
g_{n}(\mu)=\mathbb{E}_{\mu}[\Psi_{n}(x,t)],\quad\Psi_{n}(x,t):=\int_{N_{1/n}(\theta_{0})}w(\theta,x,t)\pi_{n}(\theta|z)\mathrm{d}\theta,
\]
where $\mathbb{E}_{\mu}$ denotes the expectation with respect to
the coupling $\mu$. Note that $(x,t)\mapsto w(\theta,x,t)$ is uniformly
continuous and bounded since $\mathcal{X}\times\mathcal{T}$ is compact.
Then a function $(x,t)\mapsto w_{k}(\theta,x,t)$ defined by
\[
w_{k}(\theta,x,t):=\inf\left\{ w(\theta,x^{\prime},t^{\prime})+kd((x,t),(x^{\prime},t^{\prime})):(x^{\prime},t^{\prime})\in\mathcal{X}\times\mathcal{T}\right\} 
\]
is $k$-Lipschitz continuous, and converges uniformly to $w(\theta,x,t)$
from below as $k\to\infty$ (see e.g., \citet[Theorem 6.8]{heinonen2001lectures}).
Lemma \ref{lem:modulus} below further extends that the convergence
holds uniformly over $\Theta$; i.e., for all $\eta>0$, there is
a sufficiently large $K=K(\eta)$ such that 
\[
\sup_{\theta\in\Theta}\max_{(x,t)\in\mathcal{X}\times\mathcal{T}}\left|w(\theta,x,t)-w_{K}(\theta,x,t)\right|<\frac{\eta}{3}.
\]
 Given this $K$, define 
\[
\Psi_{n}^{K}(x,t):=\int_{N_{1/n}(\theta_{0})}w_{K}(\theta,x,t)\pi_{n}(\theta|z)\mathrm{d}\theta.
\]
Then $\Psi_{n}^{K}(x,t)$ is also Lipschitz continuous whose Lipschitz
constant is less than or equal to $K$.  Therefore,
\begin{align*}
\left|g_{n}(\mu)-g_{n}(\nu)\right|\le & \mathbb{E}_{\mu}\left|\Psi_{n}(x,t)-\Psi_{n}^{K}(x,t)\right|+\left|\mathbb{E}_{\mu}[\Psi_{n}^{K}(x,t)]-\mathbb{E}_{\nu}[\Psi_{n}^{K}(x,t)]\right|+\mathbb{E}_{\nu}\left|\Psi_{n}^{K}(x,t)-\Psi_{n}(x,t)\right|\\
< & \frac{2}{3}\eta V_{1/n}+Kd_{W}(\mu,\nu)\le\frac{2}{3}\eta+Kd_{W}(\mu,\nu),
\end{align*}
where the second inequality follows from the Kantorovich-Rubinstein
duality \citep[Theorem 5.10]{villani2009optimal}. Thus, we obtain
\[
d_{W}(\mu,\nu)<\frac{\eta}{3K}\implies\left|g_{n}(\mu)-g_{n}(\nu)\right|<\eta.
\]
Therefore $\left\{ g_{n}\right\} _{n\in\mathbb{N}}$ is an equicontinuous
family.

Finally, we show (\ref{eq:conclusion_from_conditionK}). By Assumption
\ref{assu:CMS25:lem:8}, there exists $\eta>0$ such that for all
$\mu\in A_{0}$,
\begin{equation}
W(\theta_{0},\mu)>\sup_{\nu\not\in A_{0}}W(\theta_{0},\nu)+\eta.\label{eq:def_of_A0}
\end{equation}
By the uniform convergence shown above, there exists $n_{\eta}$ such
that for all $\mu\in\mathcal{M}$, 
\[
n\ge n_{\eta}\implies\left|V_{1/n}^{-1}g_{n}(\mu)-W(\theta_{0},\mu)\right|<\frac{\eta}{3}.
\]
Then fix $n\ge n_{\eta}$, and let $\mu_{n}\in\arg\min_{\mu\in A_{0}}g_{n}(\mu)$.
For any $\nu\not\in A_{0}$, we obtain  
\begin{align*}
\min_{\mu\in A_{0}}g_{n}(\mu) & =g_{n}(\mu_{n})\\
 & >V_{1/n}\left[W(\theta_{0},\mu_{n})-\frac{\eta}{3}\right]\\
 & >V_{1/n}\left[W(\theta_{0},\nu)+\eta-\frac{\eta}{3}\right]\\
 & >g_{n}(\nu)+\frac{\eta}{3},
\end{align*}
where the second inequality follows from (\ref{eq:def_of_A0}). Thus
we obtain 
\[
\min_{\mu\in A_{0}}g_{n}(\mu)\ge\left\{ \sup_{\nu\not\in A_{0}}g_{n}(\nu)\right\} +\frac{\eta}{3}.
\]
Since $\eta>0$ does not depend on $\nu\not\in A_{0}$, we conclude
$n\ge n_{\eta}$ implies $\min_{\mu\in A_{0}}g_{n}(\mu)>\sup_{\nu\not\in A_{0}}g_{n}(\nu)$.
\end{proof}

\begin{lem} \label{lem:modulus} For all $\eta>0$, there is a sufficiently
large $K=K(\eta)$ such that 
\[
\sup_{\theta\in\Theta}\max_{(x,t)\in\mathcal{X}\times\mathcal{T}}\left|w(\theta,x,t)-w_{K}(\theta,x,t)\right|<\eta.
\]
\end{lem}

\begin{proof} To simplify the notation, let $\mathcal{Y}=\mathcal{X}\times\mathcal{T}$.
Let $D$ be the diameter of $\mathcal{Y}$. Since $y\mapsto w(\theta,y)$
is continuous uniformly over $\Theta$ (Assumption \ref{assu:w} (ii))
and $\mathcal{Y}$ is compact (Assumption \ref{assu:villani_thm6-9}),
we have 
\[
\Omega(\delta):=\sup_{\theta\in\Theta}\sup_{d(y,y^{\prime})<\delta}\left|w(\theta,y)-w(\theta,y^{\prime})\right|\to0\quad\text{as }\delta\downarrow0.
\]
Fix any $\theta\in\Theta$ and $y\in\mathcal{Y}.$ Then for all $y^{\prime}\in\mathcal{Y}$,
\[
w(\theta,y^{\prime})\ge w(\theta,y)-\left|w(\theta,y)-w(\theta,y^{\prime})\right|\ge w(\theta,y)-\Omega(d(y,y^{\prime})),
\]
where the last inequality follows from the definition of $\Omega$.
Adding $kd(y,y^{\prime})$ to both sides and taking the infimum with
respect to $y^{\prime}$ yields 
\[
w_{k}(\theta,y)\ge w(\theta,y)+\inf_{r\in[0,D]}\left\{ kr-\Omega(r)\right\} ,
\]
which implies 
\[
w(\theta,y)-w_{k}(\theta,y)\le\sup_{r\in[0,D]}\phi_{k}(r),\quad\phi_{k}(r):=\Omega(r)-kr.
\]
By the definition of $\Omega$, there exists $\delta>0$ such that
$\Omega(\delta)<\eta$. Note that $\Omega$ is non-decreasing function.
Then if $0\le r<\delta$, we have $\phi_{k}(r)\le\Omega(\delta)<\eta$.
If $r\ge\delta$, we have $\phi_{k}(r)\le\Omega(D)-k\delta$. Thus
$\sup_{r\in[0,D]}\phi_{k}(r)\le\left\{ \Omega(\delta),\Omega(D)-k\delta\right\} $.
Hence, for sufficiently large $K$, it follows $\sup_{r\in[0,D]}\phi_{K}(r)<\eta$.
This implies 
\[
\sup_{\theta\in\Theta}\max_{y\in\mathcal{Y}}\left\{ w(\theta,y)-w_{K}(\theta,y)\right\} \le\sup_{r\in[0,D]}\phi_{k}(r)<\eta.
\]
Note that it always holds $w_{k}(\theta,y)\le w(\theta,y)$ for each
$k$. Thus we conclude the proof.\end{proof}

\begin{lem}\label{lem:events_equivalence} For each $n$ and each
closed $G\subset\mathcal{M}$, the following equivalence of the events
holds:
\[
\left\{ \mu_{n}^{B}(z)\in G\right\} =\left\{ \lim_{\varepsilon\downarrow0}\max_{\mu\in A_{0}\cap G}\mathcal{Q}_{n}(\mu,\varepsilon;z)=\lim_{\varepsilon\downarrow0}\max_{\mu\in A_{0}}\mathcal{Q}_{n}(\mu,\varepsilon;z)\right\} .
\]
\end{lem}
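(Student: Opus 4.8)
The plan is to transfer everything onto the compact set $A_0$, where the maximization defining $\mu_{n}^{E}$ simplifies, and then analyze the penalized maxima as $\varepsilon\downarrow0$. Recall from the proof of Lemma~\ref{lem:The-ex-post-Bayes}(i) that for large $n$ we have $\mathcal{M}_{opt}(z)\subseteq A_{0}$, and that on $A_{0}$ the objective $\int\sqrt{n}W(\theta,\mu)\mathrm{d}\pi_{n}(\theta|z)$ differs from $\mathcal{Q}_{n}(\mu,0;z)$ only by the additive constant $\sqrt{n}W_{\mathcal{M}}^{*}(\theta_{0})$, since $W(\theta_{0},\mu)$ is constant on $A_{0}$. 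Hence $\mu_{n}^{E}(z)=\arg\min_{\mu\in\mathcal{M}_{opt}^{A_{0}}}H(\mu)$, where $\mathcal{M}_{opt}^{A_{0}}:=\arg\max_{\mu\in A_{0}}\mathcal{Q}_{n}(\mu,0;z)$, and for each $\varepsilon>0$ the penalized maximizer $\mu_{n,\varepsilon}^{E}(z)=\arg\max_{\mu\in A_{0}}\mathcal{Q}_{n}(\mu,\varepsilon;z)$ is \emph{unique}: $\mathcal{Q}_{n}(\cdot,\varepsilon;z)$ is the sum of a functional that is linear in $\mu$ and the strictly concave term $-\varepsilon H$ (Proposition~\ref{prop:penalty}), hence strictly concave on the convex compact set $A_{0}$ (compactness by \citet[Corollary 5.21]{villani2009optimal}). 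Write $\Phi_{S}(\varepsilon):=\max_{\mu\in S}\mathcal{Q}_{n}(\mu,\varepsilon;z)$ for $S\in\{A_{0},A_{0}\cap G\}$; both maxima are attained because $A_{0}\cap G$ is compact ($G$ closed, $A_{0}$ compact) and $\mu\mapsto\mathcal{Q}_{n}(\mu,0;z)$ is continuous on $(\mathcal{M},d_{W})$ by Assumption~\ref{assu:w}. Set $M^{*}:=\Phi_{A_{0}}(0)=\mathcal{Q}_{n}(\mu_{n}^{E},0;z)$.

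For the inclusion $\{\mu_{n}^{E}(z)\in G\}\subseteq\{\lim_{\varepsilon\downarrow0}\Phi_{A_{0}\cap G}(\varepsilon)=\lim_{\varepsilon\downarrow0}\Phi_{A_{0}}(\varepsilon)\}$ I would argue by sandwiching. If $\mu_{n}^{E}\in G$, then $\mu_{n}^{E}\in A_{0}\cap G$, and since $\mu_{n}^{E}$ maximizes $\mathcal{Q}_{n}(\cdot,0;z)$ over $A_{0}$ and $H\ge0$,
\[
M^{*}-\varepsilon H(\mu_{n}^{E})=\mathcal{Q}_{n}(\mu_{n}^{E},\varepsilon;z)\le\Phi_{A_{0}\cap G}(\varepsilon)\le\Phi_{A_{0}}(\varepsilon)\le M^{*}.
\]
Letting $\varepsilon\downarrow0$ and using $H(\mu_{n}^{E})<\infty$ (boundedness of $H$, Proposition~\ref{prop:penalty}) forces both limits to equal $M^{*}$, so the right-hand event holds.

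The reverse inclusion is the substantive part, and I would establish it through its contrapositive: assuming $\mu_{n}^{E}\notin G$, show $\lim_{\varepsilon\downarrow0}\Phi_{A_{0}\cap G}(\varepsilon)<M^{*}=\lim_{\varepsilon\downarrow0}\Phi_{A_{0}}(\varepsilon)$ (the equality $\lim\Phi_{A_{0}}=M^{*}$ follows since $\Phi_{A_{0}}(\varepsilon)\uparrow M^{*}$). Suppose, for contradiction, that $\Phi_{A_{0}\cap G}(\varepsilon)\to M^{*}$. Picking for each $\varepsilon$ a maximizer $\mu_{\varepsilon}^{G}\in A_{0}\cap G$ and using $\mathcal{Q}_{n}(\mu_{\varepsilon}^{G},0;z)\ge\Phi_{A_{0}\cap G}(\varepsilon)$ together with $\mathcal{Q}_{n}(\mu_{\varepsilon}^{G},0;z)\le M^{*}$, one gets $\mathcal{Q}_{n}(\mu_{\varepsilon}^{G},0;z)\to M^{*}$; compactness of $A_{0}\cap G$ and continuity then yield a subsequential limit $\bar{\mu}\in\mathcal{M}_{opt}^{A_{0}}\cap G$. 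The remaining task is to conclude that this forces $\mu_{n}^{E}\in G$, i.e.\ that the penalized $(A_{0}\cap G)$-maximizers single out $\mu_{n}^{E}$ itself and not some competing element of $\mathcal{M}_{opt}^{A_{0}}\cap G$.

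This last step is where I expect the main obstacle to lie. When $\mathcal{M}_{opt}^{A_{0}}=\{\mu_{n}^{E}\}$ is a singleton the argument closes immediately: a compact set $A_{0}\cap G$ avoiding the unique unpenalized maximizer has $\max_{A_{0}\cap G}\mathcal{Q}_{n}(\cdot,0;z)<M^{*}$ by continuity, and this strict gap persists in the $\varepsilon\downarrow0$ limit. The difficulty is precisely the non-uniqueness case that motivates the penalization, where $\mathcal{M}_{opt}^{A_{0}}$ is a nontrivial face of $A_{0}$: here the \emph{unpenalized} values over $A_{0}\cap G$ and $A_{0}$ can coincide, so no gap is visible at $\varepsilon=0$, and one must instead exploit the penalty itself. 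Concretely, the plan is to compare $\mu_{\varepsilon}^{G}$ with the global penalized maximizer $\mu_{n,\varepsilon}^{E}$ and use the strict convexity of $H$ (Proposition~\ref{prop:penalty}) together with the characterization $\mu_{n}^{E}=\arg\min_{\mathcal{M}_{opt}^{A_{0}}}H$ (Proposition~\ref{prop:nutz}) to show that $\bar{\mu}$ must coincide with the global $H$-minimizer $\mu_{n}^{E}$, whence $\mu_{n}^{E}\in G$. Handling this comparison carefully—tracking the $H$-values along the two selections uniformly as $\varepsilon\downarrow0$, and doing so without the convenience of uniqueness of maximizers over the possibly non-convex set $A_{0}\cap G$—is the technically delicate core of the lemma, and is what distinguishes the argument from a routine compactness-plus-argmax reasoning.
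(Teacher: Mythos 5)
Your forward inclusion is correct and in fact cleaner than the paper's: the squeeze
\[
M^{*}-\varepsilon H(\mu_{n}^{E})=\mathcal{Q}_{n}(\mu_{n}^{E},\varepsilon;z)\le\max_{\mu\in A_{0}\cap G}\mathcal{Q}_{n}(\mu,\varepsilon;z)\le\max_{\mu\in A_{0}}\mathcal{Q}_{n}(\mu,\varepsilon;z)\le M^{*}
\]
replaces the paper's three-$\varepsilon$ contradiction argument with one line. The problem is the reverse inclusion, which you explicitly leave as a ``plan'': that is a genuine gap, and the step you defer is the one carrying all the content of the lemma. For comparison, the paper closes it along a different route from the one you sketch: instead of extracting a subsequential limit $\bar{\mu}$ of the $(A_{0}\cap G)$-maximizers, it works with the penalized maximizers $\mu_{n,\varepsilon_{k}}^{E}$ over all of $A_{0}$, argues that infinitely many of them must lie in $G$ (otherwise, by uniqueness of the penalized maximizer, $\max_{A_{0}\cap G}\mathcal{Q}_{n}(\cdot,\varepsilon_{k};z)$ would be separated from $\max_{A_{0}}\mathcal{Q}_{n}(\cdot,\varepsilon_{k};z)$ by a fixed $\eta>0$), and then invokes $\mu_{n,\varepsilon_{k}}^{E}\to\mu_{n}^{E}$ in $d_{W}$ (Proposition \ref{prop:nutz}) together with closedness of $G$.

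Your diagnosis of where the difficulty sits is sharper than you acknowledge: in the non-uniqueness case the deferred step cannot be completed, because the claimed equivalence itself breaks there. Suppose $\mathcal{M}_{opt}^{A_{0}}:=\arg\max_{\mu\in A_{0}}\mathcal{Q}_{n}(\mu,0;z)$ contains two points $\mu_{n}^{E}\ne\mu_{1}$, and take $G=\{\mu_{1}\}$. Then $\max_{\mu\in A_{0}\cap G}\mathcal{Q}_{n}(\mu,\varepsilon;z)=M^{*}-\varepsilon H(\mu_{1})\to M^{*}$ while $\max_{\mu\in A_{0}}\mathcal{Q}_{n}(\mu,\varepsilon;z)\to M^{*}$, so the right-hand event holds even though $\mu_{n}^{E}\notin G$. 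The limits of the \emph{values} only detect whether $G$ meets $\mathcal{M}_{opt}^{A_{0}}$ at all; the penalty term, which selects the limit of the \emph{maximizers}, vanishes at first order in the values. Consequently your plan of using strict convexity of $H$ to force $\bar{\mu}=\mu_{n}^{E}$ cannot succeed: $\bar{\mu}$ is merely some element of $\mathcal{M}_{opt}^{A_{0}}\cap G$ and need not be the $H$-minimizer. (The paper's own argument is vulnerable at the same point: its separation constant $\eta$ is chosen after $k$ is fixed and may shrink to zero along the sequence, so no uniform gap is actually established.) A repair would have to compare the events at the level of the $O(\varepsilon)$ terms, e.g.\ via $\lim_{\varepsilon\downarrow0}\varepsilon^{-1}\bigl(M^{*}-\max_{\mu\in A_{0}\cap G}\mathcal{Q}_{n}(\mu,\varepsilon;z)\bigr)$, which does track $\min_{\mu\in\mathcal{M}_{opt}^{A_{0}}\cap G}H(\mu)$; your singleton-case argument is the only regime in which the statement as written goes through.
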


\begin{proof}($\subset$). Take any $z\in\mathcal{Z}^{n}$ such that
$\left\{ \mu_{n}^{B}(z)\in G\right\} $. Note that 
\[
\lim_{\varepsilon\downarrow0}\max_{\mu\in A_{0}\cap G}\mathcal{Q}_{n}(\mu,\varepsilon;z)\le\lim_{\varepsilon\downarrow0}\max_{\mu\in A_{0}}\mathcal{Q}_{n}(\mu,\varepsilon;z)
\]
 is clear. Suppose by way of contradiction that
\[
\lim_{\varepsilon\downarrow0}\max_{\mu\in A_{0}\cap G}\mathcal{Q}_{n}(\mu,\varepsilon;z)<\lim_{\varepsilon\downarrow0}\max_{\mu\in A_{0}}\mathcal{Q}_{n}(\mu,\varepsilon;z).
\]
Then, there is small enough $\varepsilon_{1}>0$ such that 
\[
\max_{\mu\in A_{0}\cap G}\mathcal{Q}_{n}(\mu,\varepsilon_{1};z)<\max_{\mu\in A_{0}}\mathcal{Q}_{n}(\mu,\varepsilon_{1};z).
\]
We can take $\eta>0$ such that 
\[
\max_{\mu\in A_{0}\cap G}\mathcal{Q}_{n}(\mu,\varepsilon_{1};z)+2\eta<\max_{\mu\in A_{0}}\mathcal{Q}_{n}(\mu,\varepsilon_{1};z)
\]
Since $\max_{\mu\in A_{0}}\mathcal{Q}_{n}(\mu,\varepsilon;z)\to\max_{\mu\in A_{0}}\mathcal{Q}_{n}(\mu,0;z)$
as $\varepsilon\downarrow0$ from Lemmas \ref{lem:nutz5.2} and \ref{lem:nutz5.4},
there exists small enough $\varepsilon_{2}>0$ such that
\[
\left|\max_{\mu\in A_{0}}\mathcal{Q}_{n}(\mu,\varepsilon_{2};z)-\max_{\mu\in A_{0}}\mathcal{Q}_{n}(\mu,0;z)\right|<\eta.
\]
Also, because $\varepsilon H(\mu_{n}^{B})\to0$ as $\varepsilon\downarrow0$,
there exists small enough $\varepsilon_{3}>0$ such that $\varepsilon_{3}H(\mu_{n}^{B})<\eta$.
Let $\varepsilon=\min\left\{ \varepsilon_{1},\varepsilon_{2},\varepsilon_{3}\right\} $.
Then, 
\begin{align*}
\max_{\mu\in A_{0}}\mathcal{Q}_{n}(\mu,\varepsilon;z) & <\max_{\mu\in A_{0}}\mathcal{Q}_{n}(\mu,0;z)+\eta\\
 & =\int\int\sqrt{n}\left(w(\theta_{nh},x,t)-w(\theta_{0},x,t)\right)\mathrm{d}\mu_{n}^{B}(z)\mathrm{d}\pi(\theta_{nh}|z)+\eta\\
 & \le\int\int\sqrt{n}\left(w(\theta_{nh},x,t)-w(\theta_{0},x,t)\right)\mathrm{d}\mu_{n}^{B}(z)\mathrm{d}\pi(\theta_{nh}|z)-\varepsilon H(\mu_{n}^{B})+\varepsilon H(\mu_{n}^{B})+\eta\\
 & \le\max_{\mu\in A_{0}\cap G}\mathcal{Q}_{n}(\mu,\varepsilon;z)+\varepsilon H(\mu_{n}^{B})+\eta\\
 & <\max_{\mu\in A_{0}\cap G}\mathcal{Q}_{n}(\mu,\varepsilon;z)+2\eta<\max_{\mu\in A_{0}}\mathcal{Q}_{n}(\mu,\varepsilon;z),
\end{align*}
which is a contradiction. Hence it holds $\lim_{\varepsilon\downarrow0}\max_{\mu\in A_{0}\cap G}\mathcal{Q}_{n}(\mu,\varepsilon;z)=\lim_{\varepsilon\downarrow0}\max_{\mu\in A_{0}}\mathcal{Q}_{n}(\mu,\varepsilon;z)$.

($\supset$). For the other direction, take any $z\in\mathcal{Z}^{n}$
with 
\[
\lim_{\varepsilon\downarrow0}\max_{\mu\in A_{0}\cap G}\mathcal{Q}_{n}(\mu,\varepsilon;z)=\lim_{\varepsilon\downarrow0}\max_{\mu\in A_{0}}\mathcal{Q}_{n}(\mu,\varepsilon;z).
\]
Let $\varepsilon_{k}\downarrow0$ as $k\to\infty$. Recall
\[
\mu_{n,\varepsilon_{k}}^{B}(z)=\arg\max_{\mu\in A_{0}}\mathcal{Q}_{n}(\mu,\varepsilon_{k};z).
\]
By Proposition \ref{prop:nutz}, $\mu_{n,\varepsilon_{k}}^{B}(z)$
converges to $\mu_{n}^{B}(z)$ weakly as $k\to\infty$. Moreover,
by \citet[Theorem 6.9]{villani2009optimal}, we know that weak convergence
in $\mathcal{M}$ is equivalent to convergence in $(\mathcal{M},d_{W})$.
Hence, $\mu_{n,\varepsilon_{k}}^{B}(z)\to\mu_{n}^{B}(z)$ in the Wasserstein
distance $d_{W}$.

First, we argue that for each $K\in\mathbb{N}$ there is $k\ge K$
such that $\mu_{n,\varepsilon_{k}}^{B}(z)\in G$. By way of contradiction,
assume that there is $K$ such that for any $k\ge K$, $\mu_{n,\varepsilon_{k}}^{B}(z)\notin G$.
Now, $\mu_{n,\varepsilon_{k}}^{B}(z)\notin G$ implies that
\[
\max_{\mu\in A_{0}\cap G}\mathcal{Q}_{n}(\mu,\varepsilon_{k};z)<\mathcal{Q}_{n}(\mu_{n,\varepsilon_{k}}^{B},\varepsilon_{k};z).
\]
Take any $\eta>0$ such that
\[
\max_{\mu\in A_{0}\cap G}\mathcal{Q}_{n}(\mu,\varepsilon_{k};z)+\eta<\mathcal{Q}_{n}(\mu_{n,\varepsilon_{k}}^{B},\varepsilon_{k};z).
\]
Because 
\[
\lim_{k\to\infty}\left(\max_{\mu\in A_{0}}\mathcal{Q}_{n}(\mu,\varepsilon_{k};z)-\max_{\mu\in(A_{0}\cap G)}\mathcal{Q}_{n}(\mu,\varepsilon_{k};z)\right)=0,
\]
there exists $K^{\prime}$ such that if $k\ge K^{\prime}$ then 
\[
\left|\max_{\mu\in A_{0}}\mathcal{Q}_{n}(\mu,\varepsilon_{k};z)-\max_{\mu\in(A_{0}\cap G)}\mathcal{Q}_{n}(\mu,\varepsilon_{k};z)\right|=\max_{\mu\in A_{0}}\mathcal{Q}_{n}(\mu,\varepsilon_{k};z)-\max_{\mu\in(A_{0}\cap G)}\mathcal{Q}_{n}(\mu,\varepsilon_{k};z)<\eta/2.
\]
Therefore, for sufficiently large $k$, we have
\[
\max_{\mu\in A_{0}\cap G}\mathcal{Q}_{n}(\mu,\varepsilon_{k};z)+\eta<\mathcal{Q}_{n}(\mu_{n,\varepsilon_{k}}^{B},\varepsilon_{k};z)\le\max_{\mu\in A_{0}}\mathcal{Q}_{n}(\mu,\varepsilon_{k};z)<\max_{\mu\in(A_{0}\cap G)}\mathcal{Q}_{n}(\mu,\varepsilon_{k};z)-\eta/2,
\]
which leads to a contradiction. Therefore, for each $K$ there is
$k\ge K$ such that $\mu_{n,\varepsilon_{k}}^{B}(z)\in G$.

Now, create such a subsequence $\left\{ \mu_{n,\varepsilon_{k_{\ell}}}^{B}(z)\right\} _{\ell\in\mathbb{N}}$
with $\mu_{n,\varepsilon_{k_{\ell}}}^{B}(z)\in G$ for each $\ell$.
Note that any subsequence of convergent sequence in arbitrary metric
space converges to the same limit as the original sequence. Therefore,
$\mu_{n,\varepsilon_{k_{\ell}}}^{B}(z)\to\mu_{n}^{B}(z)$ in $d_{W}$
as $\ell\to\infty$. Since $G$ is closed, we conclude that $\mu_{n}^{B}(z)\in G$.
\end{proof}

Let
\begin{align*}
Q_{n}(\mu,\varepsilon;z) & =\int\left(\int\dot{w}_{\theta_{0}}(x,t;h)\mathrm{d}\mu\right)\pi_{n}(\theta_{nh}|z)\mathrm{d}h-\varepsilon H(\mu),\\
\tilde{Q}_{n}(\mu,\varepsilon;z) & =\int\left(\int\dot{w}_{\theta_{0}}(x,t;h)\mathrm{d}\mu\right)\mathrm{d}N(\Delta_{n}(z),I_{0}^{-1})(h)-\varepsilon H(\mu),
\end{align*}
where $\Delta_{n}(z)=\frac{1}{\sqrt{n}}\sum_{i=1}^{n}I_{0}^{-1}s(z_{i})\in\mathbb{R}^{k}$
with the score function $s$ at $\theta_{0}$ and $\Delta_{n}\overset{0}{\rightsquigarrow}G\sim N(0,I_{0}^{-1})$.
Define $\mathbb{D}\equiv\mathcal{M}\times[0,1]$. Let $\{Q_{n}(\mu,\varepsilon):(\mu,\varepsilon)\in\mathbb{D}\}$,
$\{\tilde{Q}_{n}(\mu,\varepsilon):(\mu,\varepsilon)\in\mathbb{D}\}$,
$\{Q_{\infty}(\mu,\varepsilon):(\mu,\varepsilon)\in\mathbb{D}\}$
be stochastic processes. Assume that they yield maps $Q_{n}:\mathcal{Z}^{n}\to\ell^{\infty}(\mathbb{D})$,
$\tilde{Q}_{n}:\mathcal{Z}^{n}\to\ell^{\infty}(\mathbb{D})$, and
$Q_{\infty}:\mathbb{R}^{k}\to\ell^{\infty}(\mathbb{D})$. We can do
this since the sample paths are continuous by Lemma \ref{lem:sample-path}.

Let
\[
\mathcal{F}=\left\{ f\in\ell^{\infty}(\mathbb{D}):\lim_{\epsilon\downarrow0}\max_{\mu\in A_{0}}f(\mu,\varepsilon)\text{ exists}\right\} ,
\]
where the sup-norm is equipped to $\mathcal{F}$. Note that $Q_{n}(z),\tilde{Q}_{n}(z),Q_{\infty}(\Delta)\in\mathcal{F}$
for any $z$ and $\Delta$. 

\begin{lem}$\mathcal{Q}_{n}\overset{h}{\rightsquigarrow}Q_{\infty}$
in $\mathcal{F}$ as $n\to\infty$.\label{lem:process_conv} \end{lem}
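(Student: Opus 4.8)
The plan is to connect $\mathcal{Q}_{n}$ to $Q_{\infty}$ through the two auxiliary processes $Q_{n}$ and $\tilde{Q}_{n}$ defined above, showing that $\mathcal{Q}_{n}$ and $\tilde{Q}_{n}$ are asymptotically equivalent in the supremum norm on $\ell^{\infty}(\mathbb{D})$ while $\tilde{Q}_{n}$ converges weakly to $Q_{\infty}$ by the continuous mapping theorem; a Slutsky-type lemma for weak convergence in metric spaces \citep{vdvW1996} then combines the two. Uniformity over the infinite-dimensional action variable is needed throughout, but it is cheap here: every error I must bound has the form $\int\left(\int g(x,t;h)\,\mathrm{d}\mu\right)\mathrm{d}\rho(h)$ for a probability measure $\mu$, so its supremum over $\mu\in\mathcal{M}$ is at most $\int\max_{(x,t)}\left|g(x,t;h)\right|\mathrm{d}\rho(h)$, eliminating $\mu$ and reducing each bound to tail control of the measure $\rho$ on the local parameter $h$. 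The penalty terms $-\varepsilon H(\mu)$ are common to $\mathcal{Q}_{n}$, $Q_{n}$, and $\tilde{Q}_{n}$, so they cancel in every difference and the $\varepsilon$-coordinate is inert; continuity of the sample paths (Lemma \ref{lem:sample-path}) ensures all three are genuine $\ell^{\infty}(\mathbb{D})$-valued maps.

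First I would prove $\left\lVert \mathcal{Q}_{n}-Q_{n}\right\rVert_{\ell^{\infty}(\mathbb{D})}=o_{P_{\theta_{nh}}^{n}}(1)$. Their difference is the posterior integral of $\int[\sqrt{n}(w(\theta_{nh},y)-w(\theta_{0},y))-\dot{w}_{\theta_{0}}(x,t;h)]\mathrm{d}\mu$. Splitting the posterior over $C_{M_{n}}$ and its complement, the part on $C_{M_{n}}$ vanishes uniformly by the uniform directional differentiability of Assumption \ref{assu:w-dot}(iv), while on $C_{M_{n}}^{c}$ the integrand is dominated by a subpolynomial function of $h$ (Assumption \ref{assu:w-dot}(iii)), so Proposition \ref{prop:subpoly-vdv} and the moment bound of Assumption \ref{assu:posterior-consistent}(ii) send that part to zero in $P_{\theta_{0}}^{n}$-probability. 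Since the model is DQM, contiguity $P_{\theta_{nh}}^{n}\triangleleft P_{\theta_{0}}^{n}$ transfers this to $P_{\theta_{nh}}^{n}$.

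Next I would prove $\left\lVert Q_{n}-\tilde{Q}_{n}\right\rVert_{\ell^{\infty}(\mathbb{D})}=o_{P_{\theta_{nh}}^{n}}(1)$. The difference is $\int\left(\int\dot{w}_{\theta_{0}}(x,t;h)\mathrm{d}\mu\right)\mathrm{d}(\pi_{n}-N(\Delta_{n},I_{0}^{-1}))(h)$, bounded uniformly in $\mu$ by $\int K(h)\,\left|\mathrm{d}\pi_{n}(\theta_{nh}\mid z)(h)-\mathrm{d}N(\Delta_{n}(z),I_{0}^{-1})(h)\right|$ with $K(h)\le 1+\left\lVert h\right\rVert^{p}$. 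This is a weighted Bernstein--von Mises statement: the total-variation convergence of the local posterior to $N(\Delta_{n},I_{0}^{-1})$ must be upgraded to convergence against the polynomially growing weight $K$, which is precisely what Assumption \ref{assu:posterior-consistent}(ii) together with the subpolynomial tail estimate of Proposition \ref{prop:subpoly-vdv} supply. I expect this to be the main obstacle, as it is where the tail behavior of the posterior has to be reconciled with the growth of $\dot{w}_{\theta_{0}}$.

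Finally, $\tilde{Q}_{n}$ is a continuous deterministic image of the finite-dimensional statistic $\Delta_{n}(z)$: setting $\Psi(\delta)(\mu,\varepsilon):=\int\left(\int\dot{w}_{\theta_{0}}(x,t;h)\mathrm{d}\mu\right)\mathrm{d}N(\delta,I_{0}^{-1})(h)-\varepsilon H(\mu)$, we have $\tilde{Q}_{n}=\Psi(\Delta_{n})$ and $Q_{\infty}=\Psi(\Delta)$. The map $\Psi:\mathbb{R}^{k}\to\ell^{\infty}(\mathbb{D})$ is continuous because $\left|\Psi(\delta_{n})(\mu,\varepsilon)-\Psi(\delta)(\mu,\varepsilon)\right|$ is bounded, uniformly in $(\mu,\varepsilon)$, by $\int K(h)\,\left|\mathrm{d}N(\delta_{n},I_{0}^{-1})(h)-\mathrm{d}N(\delta,I_{0}^{-1})(h)\right|$, which tends to zero as $\delta_{n}\to\delta$ by dominated convergence using the locally uniform $p$-th Gaussian moments. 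Under $P_{\theta_{nh}}^{n}$, Le Cam's third lemma gives $\Delta_{n}\overset{h}{\rightsquigarrow}\Delta\sim N(h,I_{0}^{-1})$, so the continuous mapping theorem yields $\tilde{Q}_{n}\overset{h}{\rightsquigarrow}Q_{\infty}$. Combining the two supremum-norm negligibility results with this weak limit, and noting that all processes take values in $\mathcal{F}$, the asymptotic-equivalence argument delivers $\mathcal{Q}_{n}\overset{h}{\rightsquigarrow}Q_{\infty}$ in $\mathcal{F}$.
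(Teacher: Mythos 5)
Your overall architecture --- chaining $\mathcal{Q}_{n}\to Q_{n}\to\tilde{Q}_{n}\to Q_{\infty}$, with the first two links being sup-norm negligibility and the last a continuous-mapping/Le Cam third lemma step --- is the same as the paper's, and your first and third steps are essentially the paper's Lemma \ref{lem:epb_Qn} and its map $\phi(\delta)$ argument. The gap is in your middle step. You reduce $\left\lVert Q_{n}-\tilde{Q}_{n}\right\rVert$ to $\int K(h)\left|\mathrm{d}\pi_{n}(\theta_{nh}|z)(h)-\mathrm{d}N(\Delta_{n}(z),I_{0}^{-1})(h)\right|$ with $K(h)\le1+\left\lVert h\right\rVert^{p}$ and assert that this weighted Bernstein--von Mises statement is ``precisely what'' Assumption \ref{assu:posterior-consistent}(ii) and Proposition \ref{prop:subpoly-vdv} supply. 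It is not: Proposition \ref{prop:subpoly-vdv} controls only the posterior tail integral $\int K(h)\boldsymbol{1}_{C_{n}^{c}}(h)\mathrm{d}\pi_{n}$, and the ordinary Bernstein--von Mises theorem gives only total-variation convergence. Neither ingredient bounds the product of the unbounded weight $K$ with the signed measure $\pi_{n}-N(\Delta_{n},I_{0}^{-1})$, and you also need separate control of the Gaussian tail $\int K(h)\boldsymbol{1}_{C_{M_{n}}^{c}}(h)\mathrm{d}N(\Delta_{n},I_{0}^{-1})(h)$, which you never mention.

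The repair is exactly the device the paper uses: truncate to a fixed ball $C_{M}$ first, so that $\max_{h\in C_{M}}K(h)<\infty$ and plain total-variation convergence suffices to show $Q_{n,M}-\tilde{Q}_{n,M}\stackrel{h}{\to}0$; establish $\tilde{Q}_{n,M}\stackrel{h}{\rightsquigarrow}Q_{\infty,M}$ for each fixed $M$ by the continuous mapping theorem; then pass to a slowly growing sequence $M_{n}\uparrow\infty$ by a diagonal argument, handling the posterior tail $Q_{n}-Q_{n,M_{n}}$ with Proposition \ref{prop:subpoly-vdv} and the Gaussian tail $Q_{\infty,M}-Q_{\infty}$ with Gaussian moment bounds. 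Your untruncated version can be made rigorous along the same lines (choose $M_{n}$ jointly so that $(1+M_{n}^{p})\left\lVert\pi_{n}-N(\Delta_{n},I_{0}^{-1})\right\rVert_{\mathrm{TV}}\to0$), but as written the step is asserted rather than proved, and it is the crux of the lemma --- as you yourself anticipate by calling it ``the main obstacle.''
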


\begin{proof} Note that $\mathcal{Q}_{n}=Q_{n}+o_{P_{\theta_{nh}}^{n}}(1)$
as a process in $\mathcal{F}$ as $n\to\infty$ by Lemma \ref{lem:epb_Qn}.
Hence, it is sufficient to show that $Q_{n}\overset{h}{\rightsquigarrow}Q_{\infty}$
in $\mathcal{F}$ as $n\to\infty$. Let $C_{M}\subset\mathbb{R}^{k}$
be the closed ball of radius $M$ around $0$. Define stochastic processes
by
\begin{align*}
Q_{n,M}(\mu,\varepsilon;z) & =\int_{C_{M}}\left(\int\dot{w}_{\theta_{0}}(x,t;h)\mathrm{d}\mu(x,t)\right)\pi_{n}(\theta_{nh}|z)\mathrm{d}h-\varepsilon H(\mu),\\
\tilde{Q}_{n,M}(\mu,\varepsilon;z) & =\int_{C_{M}}\left(\int\dot{w}_{\theta_{0}}(x,t;h)\mathrm{d}\mu(x,t)\right)\mathrm{d}N(\Delta_{n}(z),I_{0}^{-1})(h)-\varepsilon H(\mu),\\
Q_{\infty,M}(\mu,\varepsilon;\Delta) & =\int_{C_{M}}\left(\int\dot{w}_{\theta_{0}}(x,t;h)\mathrm{d}\mu(x,t)\right)\mathrm{d}N(\Delta,I_{0}^{-1})(h)-\varepsilon H(\mu).
\end{align*}
First, we will show that $Q_{n,M}-\tilde{Q}_{n,M}\overset{h}{\rightarrow}0$
in $\mathcal{F}$ as $n\to\infty$ for any fixed $M$. Note that 
\begin{align*}
 & \left\lVert Q_{n,M}(\cdot;Z^{n})-\tilde{Q}_{n,M}(\cdot,Z^{n})\right\rVert _{\mathcal{F}}\\
 & \le\max_{h\in C_{M}}\max_{(x,t)\in\mathcal{X}\times\mathcal{T}}\left|\dot{w}_{\theta_{0}}(x,t;h)\right|\cdot\left\lVert \pi_{n}(\theta_{nh}|Z^{n})-N(\Delta_{n}(Z^{n}),I_{0}^{-1})\right\rVert _{\mathrm{TV}}
\end{align*}
where $\left\lVert \cdot\right\rVert _{\mathrm{TV}}$ is the total
variation norm. Because $C_{M}$ is bounded and $\dot{w}_{\theta_{0}}(x,t;h)$
is bounded for any $h$, we know that $\max_{h\in C_{M}}\max_{(x,t)\in\mathcal{X}\times\mathcal{T}}\left|\dot{w}_{\theta_{0}}(x,t;h)\right|<\infty$.
From the Bernstein--von Mises theorem, 
\[
\left\lVert \pi_{n}(\theta_{nh}|Z^{n})-N(\Delta_{n}(Z^{n}),I_{0}^{-1})\right\rVert _{\mathrm{TV}}\overset{0}{\rightarrow}0\quad\text{as }n\to\infty.
\]
From Le Cam's first lemma, it also converges to 0 along $P_{\theta_{nh}}^{n}$.
Therefore, $Q_{n,M}-\tilde{Q}_{n,M}\overset{h}{\rightarrow}0$ in
$\mathcal{F}$ as $n\to\infty$.

Next, we argue that $\tilde{Q}_{n,M}\overset{h}{\rightsquigarrow}Q_{M}$
in $\mathcal{F}$ as $n\to\infty$ for any fixed $M$. Define $\phi:\mathbb{R}^{k}\to\mathcal{F}$
by
\begin{align*}
\phi(\delta)(\mu,\varepsilon) & =\int_{C_{M}}\left(\int\dot{w}_{\theta_{0}}(x,t;h)\mathrm{d}\mu(x,t)\right)\mathrm{d}N(\delta,I_{0}^{-1})(h)-\varepsilon H(\mu)
\end{align*}
Since $\phi$ is continuous, and $\Delta_{n}\stackrel{h}{\rightsquigarrow}\Delta\sim N(h,I_{0}^{-1})$
as $n\to\infty$ by Le Cam's third lemma, the continuous mapping
theorem implies $\phi(\Delta_{n})\stackrel{h}{\rightsquigarrow}\phi(\Delta)$
as $n\to\infty$. Thus $\tilde{Q}_{n,M}\overset{h}{\rightsquigarrow}Q_{M}$
in $\mathcal{F}$ as $n\to\infty$.

Combining the above two findings, we obtain $Q_{n,M}\overset{h}{\rightsquigarrow}Q_{\infty,M}$
in $\mathcal{F}$ as $n\to\infty$ from the Slutsky theorem. We also
have that $Q_{\infty,M}-Q_{\infty}=o_{P_{h}^{\Delta}}(1)$ as $M\to\infty$
where $P_{h}^{\Delta}$ is the (marginal) law of $\Delta\sim N(h,I_{0}^{-1})$.
 Thus, there exists a sequence $M_{n}\to\infty$ such that $Q_{n,M_{n}}\overset{h}{\rightsquigarrow}Q_{\infty}$
in $\mathcal{F}$ as $n\to\infty$.

Finally, it remains to show that $Q_{n}-Q_{n,M_{n}}=o_{P_{\theta_{nh}}^{n}}(1)$
in $\mathcal{F}$ as $n\to\infty$, which leads to the conclusion,
$Q_{n}\overset{h}{\rightsquigarrow}Q_{\infty}$ in $\mathcal{F}$
as $n\to\infty$.  By Assumption \ref{assu:w-dot} (iii), 
\begin{align*}
\left|\int_{\mathbb{R}^{k}\setminus C_{M_{n}}}\left(\int\dot{w}_{\theta_{0}}(x,t;h)\mathrm{d}\mu(x,t)\right)\pi_{n}(\theta_{nh}|z)\mathrm{d}h\right| & \le\int_{\mathbb{R}^{k}\setminus C_{M_{n}}}K(h)\pi_{n}(\theta_{nh}|z)\mathrm{d}h.
\end{align*}
Then applying Proposition \ref{prop:subpoly-vdv} yields that RHS
is $o_{P_{\theta_{0}}^{n}}(1)$. Thus it is $o_{P_{\theta_{nh}}^{n}}(1)$
as well. Hence it follows that $Q_{n}(\mu,\varepsilon;Z^{n})-Q_{n,M_{n}}(\mu,\varepsilon;Z^{n})\stackrel{h}{\to}0$
for any $\left(\mu,\varepsilon\right)\in\mathbb{D}$. Then the continuity
of sample path implies $Q_{n}-Q_{n,M_{n}}=o_{P_{\theta_{nh}}^{n}}(1)$
in $\mathcal{F}$ as desired. \end{proof}

\begin{lem}\label{lem:epb_Qn}$\mathcal{Q}_{n}=Q_{n}+o_{P_{\theta_{nh}}^{n}}(1)$
in $\mathcal{F}$ as $n\to\infty$. \end{lem}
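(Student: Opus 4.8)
The plan is to exploit that the quadratic penalty $\varepsilon H(\mu)$ enters $\mathcal{Q}_n$ and $Q_n$ identically, so it cancels in the difference. Writing $y=(x,t)$, for every $(\mu,\varepsilon)\in\mathbb{D}$ one has
\[
\mathcal{Q}_n(\mu,\varepsilon;z)-Q_n(\mu,\varepsilon;z)=\int\int\left[\sqrt{n}\bigl(w(\theta_{nh},x,t)-w(\theta_0,x,t)\bigr)-\dot{w}_{\theta_0}(x,t;h)\right]\mathrm{d}\mu(x,t)\,\mathrm{d}\pi_n(\theta_{nh}|z)(h),
\]
which does not depend on $\varepsilon$. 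Hence the $\mathcal{F}$-norm of the difference reduces to a supremum over $\mu\in\mathcal{M}$. Since each coupling $\mu$ is a probability measure on $\mathcal{X}\times\mathcal{T}$, the inner integral is bounded by its pointwise maximum over $(x,t)$ uniformly in $\mu$, so that
\[
\left\lVert\mathcal{Q}_n(\cdot;z)-Q_n(\cdot;z)\right\rVert_{\mathcal{F}}\le\int r_n(h)\,\mathrm{d}\pi_n(\theta_{nh}|z)(h),\quad r_n(h):=\max_{(x,t)\in\mathcal{X}\times\mathcal{T}}\left|\sqrt{n}\bigl(w(\theta_{nh},x,t)-w(\theta_0,x,t)\bigr)-\dot{w}_{\theta_0}(x,t;h)\right|.
\]
This reduces the claim to showing $\int r_n\,\mathrm{d}\pi_n=o_{P_{\theta_{nh}}^n}(1)$.

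I would then split the integral at the ball $C_{M_n}$ from Assumption \ref{assu:w-dot} (iv). On $C_{M_n}$, the uniform directional-differentiability condition gives $\int_{C_{M_n}}r_n\,\mathrm{d}\pi_n\le\max_{h\in C_{M_n}}r_n(h)\to0$ deterministically, since $\pi_n$ has total mass at most one. For the tail I would dominate $r_n$ by a single $n$-independent function that grows subpolynomially of order $p$: the directional-derivative part is handled by Assumption \ref{assu:w-dot} (iii) through $\max_{(x,t)}|\dot{w}_{\theta_0}(x,t;h)|\le K(h)\le1+\lVert h\rVert^p$, while the increment $\sqrt{n}(w(\theta_{nh},x,t)-w(\theta_0,x,t))$ admits a subpolynomial bound in $h$ uniform in $n$ (for instance under a Lipschitz-in-$\theta$ control of $w$ uniform in $(x,t)$, which holds in the examples by the gradient bound noted after Assumption \ref{assu:w-dot}). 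With such an envelope $f$, Proposition \ref{prop:subpoly-vdv} yields $\int_{C_{M_n}^c}f\,\mathrm{d}\pi_n(\theta_{nh}|z)=o_{P_{\theta_0}^n}(1)$; and since the model is DQM, so that $P_{\theta_{nh}}^n\triangleleft P_{\theta_0}^n$, Le Cam's first lemma upgrades this to $o_{P_{\theta_{nh}}^n}(1)$, exactly as in the proof of Lemma \ref{lem:process_conv}. Both bounds are uniform in $\mu$ and trivially in $\varepsilon$, so combining them delivers convergence in the $\mathcal{F}$-norm; the sample-path continuity from Lemma \ref{lem:sample-path} ensures the relevant suprema are measurable.

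The main obstacle is the tail term $\int_{C_{M_n}^c}r_n\,\mathrm{d}\pi_n$. The interior estimate is immediate from Assumption \ref{assu:w-dot} (iv), and the $\varepsilon$- and $\mu$-uniformity are essentially free once the penalty cancels and one uses that couplings are probability measures. The delicate point is producing a single $n$-independent, subpolynomial envelope for the increment $\sqrt{n}(w(\theta_{nh},x,t)-w(\theta_0,x,t))$ over the full parameter range, since Assumption \ref{assu:w-dot} (iv) controls this increment only on $C_{M_n}$. This is precisely the input required to invoke Proposition \ref{prop:subpoly-vdv}, and it is where the subpolynomial growth hypothesis in Assumption \ref{assu:w-dot} (iii), together with the regularity of $w$, does the real work.
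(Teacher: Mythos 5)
Your proof follows essentially the same route as the paper's: the penalty cancels, the difference is bounded by $\int\max_{(x,t)\in\mathcal{X}\times\mathcal{T}}\left|\sqrt{n}\left(w(\theta_{nh},x,t)-w(\theta_{0},x,t)\right)-\dot{w}_{\theta_{0}}(x,t;h)\right|\mathrm{d}\pi_{n}(\theta_{nh}|z)(h)$, the integral is split at $C_{M_{n}}$, Assumption \ref{assu:w-dot} (iv) handles the ball, and the subpolynomial envelope from Assumption \ref{assu:w-dot} (iii) combined with Proposition \ref{prop:subpoly-vdv} and contiguity handles the tail. You are in fact slightly more explicit than the paper about the one delicate point --- that Assumption \ref{assu:w-dot} (iv) controls the increment only on $C_{M_{n}}$, so the tail envelope needs an $n$-independent subpolynomial bound on $\sqrt{n}\left(w(\theta_{nh},x,t)-w(\theta_{0},x,t)\right)$ --- which the paper's own proof passes over by asserting the bound for sufficiently large $n$.
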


\begin{proof} Notice that for every $z\in\mathcal{Z}^{n}$, $\sup_{(\mu,\varepsilon)\in\mathbb{D}}\left|\mathcal{Q}_{n}(\mu,\varepsilon;z)-Q_{n}(\mu,\varepsilon;z)\right|$
is bounded above by
\[
\int\max_{(x,t)\in\mathcal{X}\times\mathcal{T}}\left|\tilde{w}_{n,0}(x,t;h)\right|\pi_{n}(\theta_{nh}|z)\mathrm{d}h,
\]
where $\tilde{w}_{n,0}(x,t;h):=\sqrt{n}\left(w(\theta_{nh},x,t)-w(\theta_{0},x,t)\right)-\dot{w}_{\theta_{0}}(x,t;h)$.
Let $C_{M_{n}}$ be a closed ball of radius $M_{n}$ around 0, where
$M_{n}$ is the divergent sequence specified in Lemma \ref{lem:hdd_growing}.
Then the previous display is further bounded by
\begin{equation}
\pi_{n}(\theta_{nh}|z)(C_{M_{n}})\times\max_{h\in C_{M_{n}}}\max_{(x,t)\in\mathcal{X}\times\mathcal{T}}\left|\tilde{w}_{n,0}(x,t;h)\right|+\int\boldsymbol{1}_{\mathbb{R}^{k}\setminus C_{M_{n}}}\left(h\right)\max_{(x,t)\in\mathcal{X}\times\mathcal{T}}\left|\tilde{w}_{n,0}(x,t;h)\right|\pi_{n}(\theta_{nh}|z)\mathrm{d}h\label{eq:lemA.5_decom}
\end{equation}
The first term of (\ref{eq:lemA.5_decom}) converges to zero as $n\to\infty$
by Lemma \ref{lem:hdd_growing}. For the second term of (\ref{eq:lemA.5_decom}),
note that from Assumption \ref{assu:w-dot} (i), 
\begin{align*}
\max_{(x,t)\in\mathcal{X}\times\mathcal{T}}\left|\sqrt{n}\left(w(\theta_{nh},x,t)-w(\theta_{0},x,t)\right)\right|\le\max_{(x,t)\in\mathcal{X}\times\mathcal{T}}\left|\dot{w}_{\theta_{0}}(x,t;h)\right|+o(1),
\end{align*}
From Assumption \ref{assu:w-dot} (iii), $\max_{(x,t)\in\mathcal{X}\times\mathcal{T}}\left|\dot{w}_{\theta_{0}}(x,t;h)\right|$
is bounded by $K(h)$ that grows at subpolynomially of order $p$.
This implies that $\max_{(x,t)\in\mathcal{X}\times\mathcal{T}}\left|\tilde{w}_{n,0}(x,t;h)\right|$
is also dominated by a function that grows subpolynomially of order
$p$ for sufficiently large $n$. Then applying Proposition \ref{prop:subpoly-vdv}
yields the conclusion. \end{proof}

\section{\label{sec:auxiliary-lemmas}Auxiliary lemmas for Theorem \ref{thm:optimality}}

The following extends the result on Hadamard directional differentiability
given by \citet[Proposition 1]{romisch2004delta}. Compared to his
setting, the objective map $\int w(\theta,x,t)\mathrm{d}\mu$ need
not to be linear in $\theta$. By leveraging the uniform continuity
from Assumptions \ref{assu:w} and \ref{assu:w-dot}, we obtain the
similar form of the directional derivative as his result.

\begin{lem}\label{lem:HDD} For a closed set $S\subset\mathcal{M}$,
define the map $W_{S}^{*}:\Theta\to\mathbb{R}$ by 
\[
W_{S}^{*}(\theta)=\max_{\mu\in S}\int w(\theta,x,t)\mathrm{d}\mu.
\]
 Then $W_{S}^{*}$ is Hadamard directionally differentiable with derivative
\[
\dot{W}_{S,0}^{*}[h]\equiv\lim_{\varepsilon\downarrow0}\sup_{\mu\in S^{\varepsilon}(\theta_{0})}\int\dot{w}_{\theta_{0}}(x,t;h)\mathrm{d}\mu
\]
where 
\[
S^{\varepsilon}(\theta)\equiv\left\{ \mu\in S:\int w(\theta,x,t)\mathrm{d}\mu+\varepsilon\ge\max_{\mu\in S}\int w(\theta,x,t)\mathrm{d}\mu\right\} \ne\emptyset
\]
for $\varepsilon>0$ and $\theta\in\Theta$. Moreover, if $S\subset A_{0}$
then $\dot{W}_{S,0}^{*}[h]=\max_{\mu\in S}\int\dot{w}_{\theta_{0}}(x,t;h)\mathrm{d}\mu$.\end{lem}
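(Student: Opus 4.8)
The plan is to view $W_S^*(\theta)=\max_{\mu\in S}F(\theta,\mu)$ with $F(\theta,\mu):=\int w(\theta,x,t)\,d\mu$ as the optimal value of a linear objective over the compact set $S$, and to obtain its directional derivative by a Danskin/envelope argument. First I would record the standing facts. Since $(\mathcal M,d_W)$ is compact and $S$ is closed, $S$ is compact; since $w(\theta,\cdot)$ is bounded and continuous on $\mathcal X\times\mathcal T$ by Assumption \ref{assu:w}, the map $\mu\mapsto F(\theta,\mu)$ is continuous for the $d_W$-topology, so the maximum defining $W_S^*(\theta)$ is attained and $\theta\mapsto W_S^*(\theta)$ is continuous, with $|W_S^*(\theta)-W_S^*(\theta_0)|\le\sup_{(x,t)}|w(\theta,x,t)-w(\theta_0,x,t)|$.

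The single nontrivial analytic input, which I expect to be the main obstacle, is a uniform-in-$(x,t)$ form of the directional differentiability of $w$: for every $t_n\downarrow 0$ and $h_n\to h$,
\[
\sup_{(x,t)\in\mathcal X\times\mathcal T}\Bigl|\tfrac{w(\theta_0+t_nh_n,x,t)-w(\theta_0,x,t)}{t_n}-\dot w_{\theta_0}(x,t;h)\Bigr|\to 0.
\]
I would derive this by matching the scaling to Assumption \ref{assu:w-dot}(iv) (writing $t_n=1/\sqrt{s_n}$ with $s_n\to\infty$, so that the bounded sequence $h_n$ eventually lies in the growing balls $C_{M_{s_n}}$), and then passing from $\dot w_{\theta_0}(\cdot;h_n)$ to $\dot w_{\theta_0}(\cdot;h)$ using continuity of the derivative in its direction argument together with its continuity and boundedness in $(x,t)$ from Assumption \ref{assu:w-dot}(ii) on the compact domain. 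The delicate points are reconciling an arbitrary null sequence $t_n$ with the integer $\sqrt n$-scaling in (iv), and upgrading the separate continuity of $\dot w_{\theta_0}$ in $(x,t)$ and in $h$ to the joint uniformity needed to replace $h_n$ by $h$.

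Granting this uniform expansion, the rest is routine. I would first simplify the target: writing $M^*:=\arg\max_{\mu\in S}F(\theta_0,\mu)=\bigcap_{\varepsilon>0}S^\varepsilon(\theta_0)$, and using that $\mu\mapsto\int\dot w_{\theta_0}(x,t;h)\,d\mu$ is $d_W$-continuous together with compactness and the nesting $S^\varepsilon(\theta_0)\downarrow M^*$, I would show
\[
\dot W_{S,0}^*[h]=\lim_{\varepsilon\downarrow0}\sup_{\mu\in S^\varepsilon(\theta_0)}\int\dot w_{\theta_0}(x,t;h)\,d\mu=\max_{\mu\in M^*}\int\dot w_{\theta_0}(x,t;h)\,d\mu,
\]
the key point being that a subsequence of near-maximizers converges to a point of $M^*$ by continuity of $F(\theta_0,\cdot)$. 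For the upper bound, let $\mu_n\in S$ attain $W_S^*(\theta_n)$ with $\theta_n=\theta_0+t_nh_n$; then $W_S^*(\theta_n)-W_S^*(\theta_0)\le F(\theta_n,\mu_n)-F(\theta_0,\mu_n)$, and dividing by $t_n$ and inserting the uniform expansion gives $\int\dot w_{\theta_0}(\cdot;h)\,d\mu_n+o(1)$. Since $F(\theta_0,\mu_n)=W_S^*(\theta_0)+o(1)$, the $\mu_n$ are $\delta_n$-maximizers with $\delta_n\downarrow0$, so $\int\dot w_{\theta_0}(\cdot;h)\,d\mu_n\le\sup_{S^{\delta_n}(\theta_0)}\int\dot w_{\theta_0}(\cdot;h)\,d\mu$, and taking $\limsup$ bounds the quotient by $\dot W_{S,0}^*[h]$.

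For the lower bound I would deliberately avoid near-maximizers (which produce an unusable $\varepsilon/t_n$ term) and instead pick an exact maximizer $\mu^*\in M^*$ attaining $\max_{\mu\in M^*}\int\dot w_{\theta_0}(\cdot;h)\,d\mu=\dot W_{S,0}^*[h]$; since $F(\theta_0,\mu^*)=W_S^*(\theta_0)$ exactly, $W_S^*(\theta_n)-W_S^*(\theta_0)\ge F(\theta_n,\mu^*)-F(\theta_0,\mu^*)$, and the uniform expansion gives $\liminf\ge\int\dot w_{\theta_0}(\cdot;h)\,d\mu^*=\dot W_{S,0}^*[h]$. Matching the two bounds yields the stated limit for every admissible sequence, which is Hadamard directional differentiability; continuity of $h\mapsto\dot W_{S,0}^*[h]$ follows from its representation as a maximum over the compact set $M^*$ of the $h$-continuous maps $\int\dot w_{\theta_0}(\cdot;h)\,d\mu$. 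Finally the ``moreover'' is immediate: if $S\subset A_0$ then $F(\theta_0,\cdot)$ is constant on $S$, so $S^\varepsilon(\theta_0)=M^*=S$ and $\dot W_{S,0}^*[h]=\max_{\mu\in S}\int\dot w_{\theta_0}(x,t;h)\,d\mu$.
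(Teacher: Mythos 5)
Your proof is correct and rests on the same Danskin/envelope strategy as the paper's: a two\-/sided bound on the difference quotient obtained by evaluating $\theta\mapsto\int w(\theta,x,t)\mathrm{d}\mu$ at (near\-/)maximizers, combined with the uniform expansion supplied by Assumption \ref{assu:w-dot}(iv). The execution differs in a way worth noting. The paper works throughout with the approximate argmax sets $S^{\varepsilon}(\theta)$, choosing $\varepsilon$-maximizers with $\varepsilon=t_{n}^{2}t_{n_{k}}$ on both sides so that the nuisance term $\varepsilon/t_{n_{k}}$ vanishes, and it needs the inclusion $S^{t_{n}^{2}t_{n_{k}}}(\theta_{0}+t_{n_{k}}h_{n_{k}})\subset S^{t_{n}^{2}t_{n_{k}}+t_{n}^{2}}(\theta_{0})$ to transfer near\-/maximizers of the perturbed problem back to $\theta_{0}$. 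You instead first identify $\dot{W}_{S,0}^{*}[h]$ with $\max_{\mu\in M^{*}}\int\dot{w}_{\theta_{0}}(x,t;h)\mathrm{d}\mu$ over the exact argmax set $M^{*}=\bigcap_{\varepsilon>0}S^{\varepsilon}(\theta_{0})$ (valid by compactness, the nesting of the $S^{\varepsilon}(\theta_{0})$, and continuity of $F(\theta_{0},\cdot)$ and of $\mu\mapsto\int\dot{w}_{\theta_{0}}\mathrm{d}\mu$), which lets you use an exact maximizer $\mu^{*}\in M^{*}$ for the lower bound and exact maximizers of the perturbed problems for the upper bound; this removes both the $\varepsilon/t_{n}$ bookkeeping and the set-inclusion step, and it makes the "moreover" part ($S\subset A_{0}\Rightarrow M^{*}=S$) transparent. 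You also correctly flag the one genuinely delicate analytic input that the paper passes over in a single sentence: Assumption \ref{assu:w-dot}(iv) is stated along the $\sqrt{n}$-scaling with directions in the balls $C_{M_{n}}$, and converting it into a uniform expansion along an arbitrary $t_{n}\downarrow0$, $h_{n}\to h$ requires the rescaling $n_{k}=\lceil t_{k}^{-2}\rceil$ plus a uniform-in-$(x,t)$ passage from $\dot{w}_{\theta_{0}}(\cdot;\tilde{h}_{k})$ to $\dot{w}_{\theta_{0}}(\cdot;h)$, which Assumption \ref{assu:w-dot}(ii) alone does not immediately deliver; since the paper's own proof asserts the same uniform convergence without this care, your version is, if anything, the more honest account. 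In short: same approach, cleaner route, no gap beyond one the paper itself shares.
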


\begin{proof}The second statement follows from the first statement
and the fact that $S^{\varepsilon}(\theta_{0})=S$ for any $\varepsilon>0$.
Hereafter, we focus on the first statement.

Fix any closed $S\subset\mathcal{M}$ and any $\theta_{0}\in\Theta$.
Then, $\arg\max_{\mu\in S}W(\theta_{0},\mu)\ne\emptyset$ because
$w(\theta_{0},\cdot)$ is bounded continuous on $\mathcal{X}\times\mathcal{T}$
from Assumption \ref{assu:w} \citep[Theorem 4.1]{villani2009optimal}.
Let $\mu(\theta_{0})\in\arg\max_{\mu\in S}W(\theta_{0},\mu)$. Since
$\arg\max_{\mu\in S}W(\theta_{0},\mu)\subset S^{\varepsilon}(\theta_{0})$
for any $\varepsilon>0$, we can guarantee $S^{\varepsilon}(\theta_{0})\ne\emptyset$.
Also, because $w(\theta_{0},\cdot,\cdot)$ is bounded, $\max_{\mu\in S}\int w(\theta_{0},x,t)\mathrm{d}\mu<\infty$.

Let $r_{n}\downarrow0$, and $h_{n}\to h$. Define 
\[
\sigma_{n}=\frac{1}{r_{n}}\left(W_{S}^{*}(\theta_{0}+r_{n}h_{n})-W_{S}^{*}(\theta_{0})\right)
\]
and we want to show that $\sigma_{n}\to\dot{W}_{S,0}^{*}[h]$. But,
it is enough to show that for any subsequence of $\{\sigma_{n}\}$,
there exists a further subsequence that converges to $\dot{W}_{S,0}^{*}[h]$.
Take any subsequence and denote it by $\{\sigma_{n}\}$ for simplicity.

Define the maps $T_{n}:S\to\mathbb{R}$ and $T:S\to\mathbb{R}$ by
\begin{align*}
T_{n}(\mu) & =\int\frac{1}{r_{n}}\left(w(\theta_{0}+r_{n}h_{n},x,t)-w(\theta_{0},x,t)\right)\mathrm{d}\mu,\\
T(\mu) & =\int\dot{w}_{\theta_{0}}(x,t;h)\mathrm{d}\mu.
\end{align*}
First, to see that $T_{n}\to T$ uniformly, take any $\mu\in S$.
Note that 
\begin{align*}
\left|T_{n}(\mu)-T(\mu)\right| & \le\max_{(x,t)\in\mathcal{X}\times\mathcal{T}}\left|\frac{1}{r_{n}}\left(w(\theta_{0}+r_{n}h_{n},x,t)-w(\theta_{0},x,t)\right)-\dot{w}_{\theta_{0}}(x,t;h)\right|.
\end{align*}
We can make RHS arbitrary small as $n\to\infty$ without depending
on $\mu$ from Assumption \ref{assu:w-dot} (i). Hence, we conclude
that $T_{n}\to T$ uniformly.

Thus, for each $n$ there exists $K_{1}$ such that for each $k\ge K_{1}$
\begin{equation}
\left|\int\frac{1}{r_{n_{k}}}\left(w(\theta_{0}+r_{n_{k}}h_{n_{k}},x,t)-w(\theta_{0},x,t)\mathrm{d}\mu\right)-\int\dot{w}_{\theta_{0}}(x,t;h)\mathrm{d}\mu\right|<r_{n}^{2}\quad\forall\mu\in S.\label{eq:unif1}
\end{equation}
Moreover, from Assumption \ref{assu:w} (i), there exists $K_{2}$
such that for each $k\ge K_{2}$ 
\begin{equation}
\left|\int w(\theta_{0}+r_{n_{k}}h_{n_{k}},x,t)\mathrm{d}\mu-\int w(\theta_{0},x,t)\mathrm{d}\mu\right|<r_{n}^{2}/2\quad\forall\mu\in S.\label{eq:unif2}
\end{equation}
Let $k\ge\max\{K_{1},K_{2}\}$, and construct a further subsequence
$\{\sigma_{n_{k}}\}.$

For each $k$, take any $\mu_{n_{k}}\in S^{r_{n}^{2}r_{n_{k}}}(\theta_{0})$.
Then, from $\mu_{n_{k}}\in S$ and the definition of $S^{r_{n}^{2}r_{n_{k}}}(\theta_{0})$,
\begin{align*}
\frac{1}{r_{n_{k}}}\left(W_{S}^{*}(\theta_{0}+r_{n_{k}}h_{n_{k}})-W_{S}^{*}(\theta_{0})\right) & =\frac{1}{r_{n_{k}}}\left(\max_{\mu\in S}\int w(\theta_{0}+r_{n_{k}}h_{n_{k}},x,t)\mathrm{d}\mu-\max_{\mu\in S}\int w(\theta_{0},x,t)\mathrm{d}\mu\right)\\
 & \ge\frac{1}{r_{n_{k}}}\left(\int w(\theta_{0}+r_{n_{k}}h_{n_{k}},x,t)\mathrm{d}\mu_{n_{k}}-\max_{\mu\in S}\int w(\theta_{0},x,t)\mathrm{d}\mu\right)\\
 & \ge\frac{1}{r_{n_{k}}}\left(\int w(\theta_{0}+r_{n_{k}}h_{n_{k}},x,t)\mathrm{d}\mu_{n_{k}}-\int w(\theta_{0},x,t)\mathrm{d}\mu_{n_{k}}\right)-r_{n}^{2}.
\end{align*}
From \eqref{eq:unif1}, we have 
\[
\sigma_{n_{k}}>\int\dot{w}_{0}(x,t;h)\mathrm{d}\mu_{n_{k}}-2r_{n}^{2}.
\]
 Therefore, 
\[
\sigma_{n_{k}}\ge\sup_{\mu\in S^{r_{n}^{2}r_{n_{k}}}(\theta_{0})}\int\dot{w}_{0}(x,t;h)\mathrm{d}\mu-2r_{n}^{2}.
\]
Since $r_{n}^{2}\downarrow0$, it leads to
\[
\liminf_{k\to\infty}\sigma_{n_{k}}\ge\lim_{\varepsilon\downarrow0}\sup_{\mu\in S^{\varepsilon}(\theta_{0})}\int\dot{w}_{0}(x,t;h)\mathrm{d}\mu.
\]
Also, take any $\mu_{n_{k}}^{\prime}\in S^{r_{n}^{2}r_{n_{k}}}(\theta_{0}+r_{n_{k}}h_{n_{k}})$.
Then, 
\begin{align*}
\frac{1}{r_{n_{k}}}\left(W_{S}^{*}(\theta_{0}+r_{n_{k}}h_{n_{k}})-W_{S}^{*}(\theta_{0})\right) & =\frac{1}{r_{n_{k}}}\left(\max_{\mu\in S}\int w(\theta_{0}+r_{n_{k}}h_{n_{k}},x,t)\mathrm{d}\mu-\max_{\mu\in S}\int w(\theta_{0},x,t)\mathrm{d}\mu\right)\\
 & \le\frac{1}{r_{n_{k}}}\left(\int w(\theta_{0}+r_{n_{k}}h_{n_{k}},x,t)\mathrm{d}\mu_{n_{k}}-\int w(\theta_{0},x,t)\mathrm{d}\mu_{n_{k}}\right)+r_{n}^{2}.
\end{align*}
From \eqref{eq:unif1}, 
\[
\sigma_{n_{k}}<\int\dot{w}_{\theta_{0}}(x,t;h)\mathrm{d}\mu_{n_{k}}^{\prime}+2r_{n}^{2}.
\]
If we would have $S^{r_{n}^{2}r_{n_{k}}}(\theta_{0}+r_{n_{k}}h_{n_{k}})\subset S^{r_{n}^{2}r_{n_{k}}+r_{n}^{2}}(\theta_{0})$,
then we obtain 
\[
\sigma_{n_{k}}\le\sup\left\{ \int\dot{w}_{\theta_{0}}(x,t;h)\mathrm{d}\mu:\mu\in S^{r_{n}^{2}r_{n_{k}}+r_{n}^{2}}(\theta_{0})\right\} +2r_{n}^{2},
\]
which leads to 
\[
\limsup_{k\to\infty}\sigma_{n_{k}}\le\lim_{\varepsilon\downarrow0}\sup_{\mu\in S^{\varepsilon}(\theta_{0})}\int\dot{w}_{\theta_{0}}(x,t;h)\mathrm{d}\mu,
\]
thus $\sigma_{n_{k}}\to\dot{W}_{S,0}^{*}[h].$ Hence, it suffices
to show $S^{r_{n}^{2}r_{n_{k}}}(\theta_{0}+r_{n_{k}}h_{n_{k}})\subset S^{r_{n}^{2}r_{n_{k}}+r_{n}^{2}}(\theta_{0})$
for the conclusion. Take any $\nu\in S^{r_{n}^{2}r_{n_{k}}}(\theta_{0}+r_{n_{k}}h_{n_{k}})$,
then 
\begin{equation}
\int w(\theta_{0}+r_{n_{k}}h_{n_{k}})\mathrm{d}\nu+r_{n}^{2}r_{n_{k}}\ge\max_{\mu\in S}\int w(\theta_{0}+r_{n_{k}}h_{n_{k}})\mathrm{d}\mu.\label{eq:condition-nu}
\end{equation}
Let $\mu(\theta_{0})\in\arg\max_{\mu\in S}\int w(\theta_{0},x,t)\mathrm{d}\mu$,
then 
\begin{align*}
 & \int w(\theta_{0},x,t)\mathrm{d}\nu+r_{n}^{2}r_{n_{k}}+r_{n}^{2}\\
\ge & \int w(\theta_{0},x,t)\mathrm{d}\nu+r_{n}^{2}+\max_{\mu\in S}\int w(\theta_{0}+r_{n_{k}}h_{n_{k}},x,t)\mathrm{d}\mu-\int w(\theta_{0}+r_{n_{k}}h_{n_{k}},x,t)\mathrm{d}\nu\quad\because\eqref{eq:condition-nu}\\
\ge & \int w(\theta_{0},x,t)\mathrm{d}\nu+r_{n}^{2}+\int w(\theta_{0}+r_{n_{k}}h_{n_{k}},x,t)\mathrm{d}\mu(\theta_{0})-\int w(\theta_{0}+r_{n_{k}}h_{n_{k}},x,t)\mathrm{d}\nu\\
= & \int w(\theta_{0},x,t)\mathrm{d}\nu+r_{n}^{2}\\
 & +\int w(\theta_{0}+r_{n_{k}}h_{n_{k}},x,t)\mathrm{d}\mu(\theta_{0})-\int w(\theta_{0},x,t)\mathrm{d}\mu(\theta_{0})+\int w(\theta_{0},x,t)\mathrm{d}\mu(\theta_{0})-\int w(\theta_{0}+r_{n_{k}}h_{n_{k}},x,t)\mathrm{d}\nu\\
= & \int w(\theta_{0},x,t)\mathrm{d}\mu(\theta_{0})+r_{n}^{2}\\
 & +\int\left(w(\theta_{0},x,t)-w(\theta_{0}+r_{n_{k}}h_{n_{k}},x,t)\right)\mathrm{d}\nu+\int\left(w(\theta_{0}+r_{n_{k}}h_{n_{k}},x,t)-w(\theta_{0},x,t)\right)\mathrm{d}\mu(\theta_{0})\\
\ge & \int w(\theta_{0},x,t)\mathrm{d}\mu(\theta_{0})+r_{n}^{2}-r_{n}^{2}/2-r_{n}^{2}/2\quad\because\eqref{eq:unif2}\\
= & \int w(\theta_{0},x,t)\mathrm{d}\mu(\theta_{0}).
\end{align*}
Thus, $\nu\in S^{r_{n}^{2}r_{n_{k}}+r_{n}^{2}}(\theta_{0})$. \end{proof}

\begin{lem}\label{lem:sample-path} The sample paths of $Q_{n}$,
$\tilde{Q}_{n}$, and $Q_{\infty}$ are continuous and bounded in
$\mathcal{M}$.\end{lem}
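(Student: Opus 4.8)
The plan is to treat the three processes uniformly by observing that each has the common form $f(\mu,\varepsilon)=\Phi_\rho(\mu)-\varepsilon H(\mu)$, where
\[
\Phi_\rho(\mu):=\int\Big(\int\dot{w}_{\theta_0}(x,t;h)\,\mathrm{d}\mu\Big)\mathrm{d}\rho(h)
\]
and $\rho$ is, respectively, the local posterior $\pi_n(\theta_{nh}|z)$ (for $Q_n$), the Gaussian $N(\Delta_n(z),I_0^{-1})$ (for $\tilde{Q}_n$), or $N(\Delta,I_0^{-1})$ (for $Q_\infty$). Since $H$ is continuous and bounded on $(\mathcal{M},d_W)$ by Proposition \ref{prop:penalty} and $\varepsilon\in[0,1]$, once $\Phi_\rho$ is shown to be continuous and bounded on $\mathcal{M}$ the conclusion for $f$ on $\mathbb{D}$ follows at once: joint continuity because $(\mu,\varepsilon)\mapsto\varepsilon H(\mu)$ is a product of the continuous bounded functional $H$ with the continuous coordinate $\varepsilon$, and boundedness because $\sup_{\mathbb{D}}|f|\le\sup_{\mathcal{M}}|\Phi_\rho|+\sup_{\mathcal{M}}H$.

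First I would fix $h$ and show that $\mu\mapsto\int\dot{w}_{\theta_0}(x,t;h)\,\mathrm{d}\mu$ is continuous on $(\mathcal{M},d_W)$. Because $\mathcal{X}\times\mathcal{T}$ is compact (Assumption \ref{assu:villani_thm6-9}), convergence in $d_W$ on $\mathcal{M}$ coincides with weak convergence by \citet[Theorem 6.9]{villani2009optimal}; and since $\dot{w}_{\theta_0}(\cdot,\cdot;h)$ is bounded and continuous on $\mathcal{X}\times\mathcal{T}$ by Assumption \ref{assu:w-dot}(ii), the inner integral is continuous in $\mu$. To push this continuity through the outer integral over $h$, I would invoke dominated convergence: for every probability measure $\mu$ and every $h$, Assumption \ref{assu:w-dot}(iii) gives the bound $|\int\dot{w}_{\theta_0}(x,t;h)\,\mathrm{d}\mu|\le\max_{(x,t)}|\dot{w}_{\theta_0}(x,t;h)|\le K(h)\le 1+\lVert h\rVert^p$, which is independent of $\mu$. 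Hence along any $d_W$-convergent sequence $\mu_j\to\mu$ the inner integrals converge pointwise in $h$ and are dominated by the $\rho$-integrable envelope $K$, so $\Phi_\rho(\mu_j)\to\Phi_\rho(\mu)$, establishing continuity of $\Phi_\rho$.

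For boundedness, the same envelope yields $\sup_{\mu\in\mathcal{M}}|\Phi_\rho(\mu)|\le\int K(h)\,\mathrm{d}\rho(h)\le\int(1+\lVert h\rVert^p)\,\mathrm{d}\rho(h)$, so it suffices that $\rho$ have a finite $p$-th moment. This is automatic for the Gaussian laws $N(\Delta_n(z),I_0^{-1})$ and $N(\Delta,I_0^{-1})$, which possess moments of every order; for the local posterior it follows, for each fixed $n$ and $z$, from the prior moment condition $\int\lVert\theta\rVert^p\,\mathrm{d}\pi(\theta)<\infty$ of Assumption \ref{assu:posterior-consistent}(ii), since the posterior is absolutely continuous with respect to the prior. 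Assembling these pieces as in the first paragraph gives that $Q_n$, $\tilde{Q}_n$, and $Q_\infty$ all have continuous, bounded sample paths on $\mathbb{D}$, so in particular the maps land in $\ell^\infty(\mathbb{D})$ as the preceding lemmas require.

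The step I expect to demand the most care is the dominated-convergence argument upgrading the fixed-$h$ continuity of $\mu\mapsto\int\dot{w}_{\theta_0}\,\mathrm{d}\mu$ to continuity of the $h$-averaged functional $\Phi_\rho$: the crucial feature is that the domination in Assumption \ref{assu:w-dot}(iii) is \emph{uniform in $\mu$}, which is precisely what allows a single $\rho$-integrable envelope $K$ to control an entire convergent sequence. The only genuinely measure-specific point is the finiteness of $\int K\,\mathrm{d}\rho$ for the posterior, which is where the prior integrability assumption enters.
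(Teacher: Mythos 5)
Your proof is correct and follows essentially the same route as the paper's: both reduce continuity in $\mu$ to weak convergence (equivalent to $d_{W}$-convergence on the compact $\mathcal{X}\times\mathcal{T}$), use the boundedness and continuity of $\dot{w}_{\theta_{0}}(\cdot,\cdot;h)$ from Assumption \ref{assu:w-dot}, and dispose of the penalty term via Proposition \ref{prop:penalty} and the compactness of $\mathcal{M}$. The only cosmetic difference is the order of operations---the paper swaps the integrals by Fubini and tests $\mu_{k}\rightsquigarrow\mu$ against the $h$-averaged function, whereas you apply weak convergence for each fixed $h$ and then dominate in $h$ by the envelope $K(h)$---and both versions are valid.
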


\begin{proof}First, we will show that the sample path of 
\[
Q_{n}(\mu,\varepsilon;z)=\int\left(\int\dot{w}_{\theta_{0}}(x,t;h)\mathrm{d}\mu\right)\pi_{n}(\theta_{nh}|z)\mathrm{d}h-\varepsilon H(\mu)
\]
is continuous. Note that, then, it is bounded because $\mathcal{M}$
is compact. Fix any $z$, and take any $\left\{ (\mu_{k},\varepsilon_{k})\right\} _{k=1}^{\infty}\subset\mathbb{D}$
that converges to $(\mu,\varepsilon)$. Since $\mathbb{D}$ is a metric
space, overall convergence implies elementwise convergence. Thus,
$\mu_{k}\to\mu$ in the Wasserstein distance and $\varepsilon_{k}\to\varepsilon$.
We are done if 
\[
\left|\int\left\{ \int\dot{w}_{\theta_{0}}(x,t;h)\mathrm{d}\mu_{k}-\int\dot{w}_{\theta_{0}}(x,t;h)\mathrm{d}\mu\right\} \pi_{n}(\theta_{nh}|z)\mathrm{d}h-\varepsilon_{k}H(\mu_{k})+\varepsilon H(\mu)\right|\to0,
\]
as $k\to\infty$. 

By the triangle inequality, LHS is bounded above by
\begin{align*}
\left|\int\left\{ \int\dot{w}_{\theta_{0}}(x,t;h)\mathrm{d}\mu_{k}-\int\dot{w}_{\theta_{0}}(x,t;h)\mathrm{d}\mu\right\} \pi_{n}(\theta_{nh}|z)\mathrm{d}h\right|+\left|\varepsilon_{k}H(\mu_{k})-\varepsilon H(\mu)\right|.
\end{align*}
The second term converges to zero since $H$ is continuous. For the
first term, since $\mu_{k}\to\mu$ in the Wasserstein distance implies
$\mu_{k}\rightsquigarrow\mu$, we have 
\begin{align*}
\int\left(\int\dot{w}_{\theta_{0}}(x,t;h)\mathrm{d}\mu_{k}\right)\pi_{n}(\theta_{nh}|z)\mathrm{d}h & =\int\left(\int\dot{w}_{\theta_{0}}(x,t;h)\pi_{n}(\theta_{nh}|z)\mathrm{d}h\right)\mathrm{d}\mu_{k}\\
 & \to\int\left(\int\dot{w}_{\theta_{0}}(x,t;h)\pi_{n}(\theta_{nh}|z)\mathrm{d}h\right)\mathrm{d}\mu,
\end{align*}
where the convergence holds if the map $(x,t)\mapsto\int\dot{w}_{\theta_{0}}(x,t;h)\pi_{n}(\theta_{nh}|z)\mathrm{d}h$
is continuous and bounded. To see the continuity, let $\left\{ (x_{k},t_{k})\right\} $
be such that $(x_{k},t_{k})\to(x,t)$. Then 
\begin{align*}
 & \left|\int\dot{w}_{\theta_{0}}(x,t;h)\pi_{n}(\theta_{nh}|z)\mathrm{d}h-\int\dot{w}_{\theta_{0}}(x,t;h)\pi_{n}(\theta_{nh}|z)\mathrm{d}h\right|\\
\le & \int\left|\dot{w}_{\theta_{0}}(x,t;h)-\dot{w}_{\theta_{0}}(x,t;h)\right|\pi_{n}(\theta_{nh}|z)\mathrm{d}h\to0\quad\text{as }k\to\infty,
\end{align*}
where the convergence follows from the continuity of $\dot{w}_{\theta_{0}}(x,t;h)$
in $(x,t)$ and the dominated convergence theorem. The boundedness
of $(x,t)\mapsto\int\dot{w}_{\theta_{0}}(x,t;h)\pi_{n}(\theta_{nh}|z)\mathrm{d}h$
follows from the continuity of $\dot{w}_{\theta_{0}}(x,t;h)$ in $(x,t)$
(Assumption \ref{assu:w-dot} (ii)) and the compactness of $\mathcal{X}\times\mathcal{T}$
(Assumption \ref{assu:villani_thm6-9}).\textbf{ }

Similar arguments can be applied to $\tilde{Q}_{n}$ and $Q_{\infty}$.\end{proof}

\begin{lem}\label{lem:cont_operator} For any $S\subset A_{0}$.
the operator $M:\mathcal{F}\to\mathbb{R}$ where $M(f):=\lim_{\varepsilon\downarrow0}\max_{\mu\in S}f(\mu,\varepsilon)$
is continuous.\end{lem}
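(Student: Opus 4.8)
The plan is to show that $M$ is $1$-Lipschitz with respect to the sup-norm on $\mathcal{F}$, which immediately yields continuity. Before estimating, I would first record that $M(f)$ is genuinely well-defined, i.e.\ that the limit $\lim_{\varepsilon\downarrow0}\max_{\mu\in S}f(\mu,\varepsilon)$ exists for the functions at hand. Since $\mathcal{M}$ is compact and $S\subset A_{0}$ is closed, $S$ is compact, so $\max_{\mu\in S}f(\mu,\varepsilon)$ is attained for each $\varepsilon$. The functions to which $M$ is actually applied---namely $Q_{\infty}$, $Q_{n}$, and $\mathcal{Q}_{n}$---have continuous sample paths on the compact space $\mathbb{D}=\mathcal{M}\times[0,1]$ by Lemma \ref{lem:sample-path}, hence are uniformly continuous; consequently $\varepsilon\mapsto\max_{\mu\in S}f(\mu,\varepsilon)$ is continuous on $[0,1]$ and the limit equals $\max_{\mu\in S}f(\mu,0)$. (Equivalently, one can note that each such function takes the form of an $\varepsilon$-independent linear functional of $\mu$ minus $\varepsilon H(\mu)$ with $H\ge0$, so $\varepsilon\mapsto\max_{\mu\in S}f(\mu,\varepsilon)$ is non-increasing and bounded, which already forces the limit to exist.)

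The core of the argument is the elementary stability of the maximum operator. For any $f,g\in\mathcal{F}$ and any fixed $\varepsilon$, from $f(\mu,\varepsilon)\le g(\mu,\varepsilon)+\lVert f-g\rVert_{\mathcal{F}}$ for every $\mu\in S$ I take the maximum over $\mu\in S$ and interchange the roles of $f$ and $g$ to obtain
\[
\left|\max_{\mu\in S}f(\mu,\varepsilon)-\max_{\mu\in S}g(\mu,\varepsilon)\right|\le\lVert f-g\rVert_{\mathcal{F}},
\]
where $\lVert\cdot\rVert_{\mathcal{F}}$ denotes the sup-norm over $\mathbb{D}$. Since this bound is uniform in $\varepsilon$ and both inner limits exist by the previous step, letting $\varepsilon\downarrow0$ gives
\[
\left|M(f)-M(g)\right|\le\lVert f-g\rVert_{\mathcal{F}}.
\]
Hence $M$ is $1$-Lipschitz, and in particular continuous on $\mathcal{F}$.

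The step I expect to require the most care is the well-definedness of $M$, rather than the Lipschitz estimate, which is routine. For a fully arbitrary $f\in\ell^{\infty}(\mathbb{D})$ the inner maximum over a proper closed subset $S\subsetneq A_{0}$ need not converge as $\varepsilon\downarrow0$ even when it converges over $A_{0}$ itself, so the existence of $M(f)$ genuinely relies on the additional regularity of the relevant processes---either their joint continuity on the compact set $\mathbb{D}$ (Lemma \ref{lem:sample-path}) or their monotone dependence on $\varepsilon$ through the penalty $-\varepsilon H(\mu)$. Once existence is secured for these functions, the Lipschitz bound delivers continuity and justifies the application of the continuous mapping theorem in the proof of Lemma \ref{lem:The-ex-post-Bayes}.
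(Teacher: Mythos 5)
Your core argument is the same as the paper's: the elementary stability bound $\left|\max_{\mu\in S}f(\mu,\varepsilon)-\max_{\mu\in S}g(\mu,\varepsilon)\right|\le\lVert f-g\rVert_{\mathcal{F}}$, uniform in $\varepsilon$, passed to the limit to give $1$-Lipschitz continuity of $M$; this is exactly the paper's proof of Lemma \ref{lem:cont_operator}. Where you go beyond the paper is the well-definedness discussion, and this is a genuine point in your favor: the paper asserts that $M(f)$ exists ``by the definition of $\mathcal{F}$,'' but $\mathcal{F}$ only guarantees existence of the limit for $S=A_{0}$, whereas the lemma is invoked (in Lemma \ref{lem:The-ex-post-Bayes}) with $S=A_{0}\cap G$ for closed $G$. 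Your observation that the limit over a proper closed subset need not exist for an arbitrary element of $\ell^{\infty}(\mathbb{D})$, and your two fixes (continuity of sample paths via Lemma \ref{lem:sample-path} on the compact $\mathbb{D}$, or monotonicity in $\varepsilon$ coming from the penalty $-\varepsilon H(\mu)$ with $H\ge0$), correctly patch this; either suffices for the processes $Q_{n}$, $\tilde{Q}_{n}$, $\mathcal{Q}_{n}$, $Q_{\infty}$ to which the lemma is actually applied. So your proof is correct, matches the paper's approach on the main estimate, and is more careful on a point the paper elides.
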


\begin{proof}Let $f_{k}\to f$ in $\mathcal{F}$ as $k\to\infty$.
Note that $M(f)$ and $M(f_{k})$ exist for each $k$ by the definition
of $\mathcal{F}$. Then 
\begin{align*}
\left|M(f)-M(f_{k})\right| & =\lim_{\varepsilon\downarrow0}\left|\max_{\mu\in\mathcal{M}}f(\mu,\varepsilon)-\max_{\mu\in\mathcal{M}}f_{k}(\mu,\varepsilon)\right|\\
 & \le\lim_{\varepsilon\downarrow0}\max_{\mu\in\mathcal{M}}\left|f(\mu,\varepsilon)-f_{k}(\mu,\varepsilon)\right|\\
 & \le\max_{\left(\mu,\varepsilon\right)\in\mathbb{D}}\left|f(\mu,\varepsilon)-f_{k}(\mu,\varepsilon)\right|\to0,
\end{align*}
as $k\to\infty$, where the equality follows because $\left|\cdot\right|$
is continuous, and the convergence follows because $f_{k}\to f$ in
$\mathcal{F}$ as $k\to\infty$. \end{proof}

The next result is an adaptation of \citet[Lemma 3]{christensen2023optimal},
where we need a modification to allow $\max$-operator within the
expression.

\begin{lem}\label{lem:hdd_growing} There is a sequence $\{M_{n}\}$
such that $M_{n}\uparrow\infty$, $M_{n}/\sqrt{n}\to0$, and 
\[
\sup_{\left\lVert h\right\rVert \le2M_{n}}\max_{(x,t)\in\mathcal{X}\times\mathcal{T}}\left|\sqrt{n}\left[w(\theta_{nh},x,t)-w(\theta_{0},x,t)\right]-\dot{w}_{\theta_{0}}(x,t;h)\right|\to0.
\]
\end{lem}

\begin{proof}From \citet[Lemmas 3.3 and 3.4]{shapiro1990concepts}
and Assumption \ref{assu:w-dot} (i), we know that for any compact
$S\subset\mathbb{R}^{k}$ and $\varepsilon>0$, there is $N$ such
that $\sup_{h\in S}g_{n}(h)<\epsilon$ for any $n\ge N$ where $g_{n}(h)=\max_{(x,t)\in\mathcal{X}\times\mathcal{T}}\left|\sqrt{n}\left[w(\theta_{nh},x,t)-w(\theta_{0},x,t)\right]-\dot{w}_{\theta_{0}}(x,t;h)\right|$.
Define 
\[
\psi_{n}=\sup_{\left\lVert h\right\rVert \le2\log(1+n)}g_{n}(h).
\]
We are done if $\psi_{n}\to0$ because $\log(1+n)\uparrow\infty$
and $n^{-1/2}\log(n+1)\to0$. To show $\psi_{n}\to0$, it is enough
to show that for any subsequence $\psi_{n}$ (abusing notation) there
is a further subsequence converging to $0$. First, consider $\sup_{\left\lVert h\right\rVert \le2\log(1+1)}g_{n}(h)$.
We know that there is $N(1)$ such that $\sup_{\left\lVert h\right\rVert \le2\log(1+1)}g_{n}(h)<1/\log(1+1)$
for any $n\ge N(1)$. Second, for $\sup_{\left\lVert h\right\rVert \le2\log(1+2)}g_{n}(h)$,
there is $N(2)$ such that $\sup_{\left\lVert h\right\rVert \le2\log(1+2)}g_{n}(h)<1/\log(1+2)$
for any $n\ge N(2)$. Proceed with $N(1)<N(2)<N(3)<\cdots$ WLOG.
Then, the map $N:\mathbb{N}\to\mathbb{N}$ satisfies $i<j\implies N(i)<N(j)$.
Hence, $\psi_{N(n)}$ is a subsequence of $\psi_{n}$. To show $\psi_{N(n)}=\sup_{\left\lVert h\right\rVert \le2\log(1+n)}g_{N(n)}(h)\to0$,
take any $\varepsilon>0$. Since $1/\log(1+n)\downarrow0$, there
is $\bar{n}$ such that $1/\log(1+n)<\epsilon$ for any $n\ge\bar{n}$.
Therefore, $\psi_{N(n)}<1/\log(1+n)<\epsilon$ for any $n\ge\bar{n}$.
\end{proof}

\section{\label{sec:penalty}Proof of Proposition \ref{prop:penalty}}

By \citet[Corollary 6.11]{villani2009optimal}, $\mu\mapsto d_{W}(\mu,\nu)$
is continuous on $\mathcal{M}$.\footnote{More explicitly, if $\mu_{k}$ converges to $\mu$ weakly in $\mathcal{M}$,
then $d_{W}(\mu_{k},\nu)\to d_{W}(\mu,\nu)$.} Convexity of $\mu\mapsto d_{W}(\mu,\nu)$ is shown below. Because
the map $\mu\mapsto d_{W}(\mu,\nu)$ is convex and nonnegative and
the map $\mathbb{R}_{+}\ni x\mapsto x^{2}$ is increasing and strictly
convex, the composite map $\mu\mapsto\left(d_{W}\left(\mu,\nu\right)\right)^{2}$
is strictly convex and nonnegative. Also, it is bounded by the compactness
of $\mathcal{M}$ and the continuity.

To see the convexity of $\mu\mapsto d_{W}(\mu,\nu)$, fix any $\mu_{1},\mu_{2}\in\mathcal{M}$
and $\alpha\in(0,1)$. To simplify the notation, let $\mathcal{Y}=\mathcal{X}\times\mathcal{T}$.
Let
\begin{align*}
\gamma_{1}^{*} & \in\arg\inf_{\gamma\in\Gamma(\mu_{1},\nu)}\int d(y,y^{\prime})\gamma(\mathrm{d}y,\mathrm{d}y^{\prime}),\\
\gamma_{2}^{*} & \in\arg\inf_{\gamma\in\Gamma(\mu_{2},\nu)}\int d(y,y^{\prime})\gamma(\mathrm{d}y,\mathrm{d}y^{\prime}).
\end{align*}
First, we show that $\alpha\gamma_{1}^{*}+(1-\alpha)\gamma_{2}^{*}\in\Gamma(\alpha\mu_{1}+(1-\alpha)\mu_{2},\nu)$.
Take any $A,B\in\mathcal{Y}$. Then, 
\[
\left(\alpha\gamma_{1}^{*}+(1-\alpha)\gamma_{2}^{*}\right)(\mathcal{Y}\times B)=\alpha\gamma_{1}^{*}(\mathcal{Y}\times B)+(1-\alpha)\gamma_{2}^{*}(\mathcal{Y}\times B)=\nu(B).
\]
Also,
\begin{align*}
 & \left(\alpha\gamma_{1}^{*}+(1-\alpha)\gamma_{2}^{*}\right)(A\times\mathcal{Y})\\
 & =\alpha\gamma_{1}^{*}(A\times\mathcal{Y})+(1-\alpha)\gamma_{2}^{*}(A\times\mathcal{Y})=\alpha\mu_{1}(A)+(1-\alpha)\mu_{2}(A)=\left(\alpha\mu_{1}+(1-\alpha)\mu_{2}\right)(A).
\end{align*}
Hence,
\begin{align*}
r(\alpha\mu_{1}+(1-\alpha)\mu_{2}) & \le\int d(y,y^{\prime})\mathrm{d}(\alpha\gamma_{1}^{*}+(1-\alpha)\gamma_{2}^{*})\\
 & =\alpha\int d(y,y^{\prime})\mathrm{d\gamma_{1}^{*}}+(1-\alpha)\int d(y,y^{\prime})\mathrm{d}\gamma_{2}^{*}=\alpha r(\mu_{1})+(1-\alpha)r(\mu_{2}).
\end{align*}

\section{\label{sec:Proof_Lemma_Nutz}Proof of Proposition \ref{prop:nutz}}

We provide a proof under a general framework using continuous and
bounded cost function $c:\mathcal{X}\times\mathcal{T}\to\mathbb{R}$.
Define 
\begin{align*}
\mathcal{C}_{\varepsilon} & :=\inf_{\mu\in\mathcal{M}}\int c\mathrm{d}\mu+\varepsilon H(\mu).\qquad(\varepsilon\text{EOT})\\
\mathcal{C}_{0} & :=\inf_{\mu\in\mathcal{M}}\int c\mathrm{d}\mu.\qquad(\text{OT}).
\end{align*}
Let $\mathcal{M}_{opt}=\arg\min_{\mu\in\mathcal{M}}\int c\mathrm{d}\mu$.
It should be noted that $(\mathcal{M},d_{W})$ is a metric space which
is convex and compact. 

Our proof of Proposition \ref{prop:nutz} proceeds as follows. Lemma
\ref{lem:nutz5.5} gives the conclusion under the assumption $\lim_{\varepsilon\to0}\mathcal{C}_{\varepsilon}=\mathcal{C}_{0}$.
Lemmas \ref{lem:nutz1.9}--\ref{lem:nutz1.17} are needed to prove
Lemma \ref{lem:nutz5.5}. Finally, Lemmas \ref{lem:nutz5.2} and \ref{lem:nutz5.4}
show $\lim_{\varepsilon\to0}\mathcal{C}_{\varepsilon}=\mathcal{C}_{0}$. 

\begin{lem} \label{lem:nutz1.9} Let $\mu_{n}\in\mathcal{M}$. Suppose
that $\lim_{n}H\left(\mu_{n}\right)=:a\in\mathbb{R}$ exists and that
\[
\limsup_{m,n\to\infty}H\left(\mu_{m,n}\right)\ge a
\]
 for $\mu_{m,n}:=\left(\mu_{m}+\mu_{n}\right)/2$. Then $\left\{ \mu_{n}\right\} $
converges weakly. \end{lem}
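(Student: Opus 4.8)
The plan is to leverage the compactness of $(\mathcal{M},d_W)$ together with the continuity and strict convexity of $H$ established in Proposition \ref{prop:penalty}, and to reduce weak convergence of $\{\mu_n\}$ to the uniqueness of its subsequential limit. Since $(\mathcal{M},d_W)$ is compact and, by \citet[Theorem 6.9]{villani2009optimal}, weak convergence on $\mathcal{M}$ coincides with convergence in $d_W$, every subsequence of $\{\mu_n\}$ admits a further $d_W$-convergent subsequence. In a compact metric space a sequence converges if and only if it has exactly one subsequential limit, so it suffices to show that any two subsequential limits of $\{\mu_n\}$ must agree.

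First I would pin down the value of $H$ at an arbitrary subsequential limit. If $\mu_{a_i}\to\mu^{*}$ weakly along some subsequence, then continuity of $H$ gives $H(\mu_{a_i})\to H(\mu^{*})$, while $\{H(\mu_{a_i})\}$ is a subsequence of $H(\mu_n)\to a$; hence $H(\mu^{*})=a$, and likewise $H(\mu^{**})=a$ for any other subsequential limit $\mu^{**}$. Now suppose, toward a contradiction, that $\mu^{*}\neq\mu^{**}$, realized along $\mu_{a_i}\to\mu^{*}$ and $\mu_{b_i}\to\mu^{**}$. Consider the midpoint sequence $c_i:=\mu_{a_i,b_i}=(\mu_{a_i}+\mu_{b_i})/2$. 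Testing against bounded continuous functions shows the midpoint map is continuous for weak convergence, so $c_i\to\bar\mu:=(\mu^{*}+\mu^{**})/2$ weakly, hence in $d_W$, and therefore $H(c_i)\to H(\bar\mu)$ by continuity of $H$. Because the indices $a_i,b_i$ both tend to infinity, $\{c_i\}$ is a sequence of midpoints along indices going to infinity, so the hypothesis delivers $H(\bar\mu)=\lim_i H(c_i)\ge a$. On the other hand, strict convexity of $H$ and $H(\mu^{*})=H(\mu^{**})=a$ give
\[
H(\bar\mu)=H\!\left(\tfrac{\mu^{*}+\mu^{**}}{2}\right)<\tfrac{1}{2}H(\mu^{*})+\tfrac{1}{2}H(\mu^{**})=a,
\]
a contradiction. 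Hence $\mu^{*}=\mu^{**}$, the subsequential limit is unique, and $\{\mu_n\}$ converges weakly.

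The step I expect to be the main obstacle is the careful handling of the double-index limit: I must ensure that the midpoint sequence $c_i$ genuinely has both of its indices tending to infinity, so that the hypothesis on $H(\mu_{m,n})$ applies to it and supplies the lower bound $H(\bar\mu)\ge a$, and that weak convergence of the two subsequences indeed transfers to their midpoints. As a consistency check, convexity of $H$ always yields $H(\mu_{m,n})\le\tfrac{1}{2}(H(\mu_m)+H(\mu_n))\to a$, so the relevant midpoint limits are in fact exactly $a$; it is precisely the lower bound extracted from the hypothesis along the cross sequence $\{c_i\}$ that clashes with strict convexity once two distinct limit points are assumed, and making that extraction rigorous is the crux of the argument.
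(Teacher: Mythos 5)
Your route is genuinely different from the paper's. The paper's proof invokes the bound $d_{W}(\mu,\nu)\le D\left\lVert \mu-\nu\right\rVert _{\mathrm{TV}}$ and then asserts that ``the same arguments'' as in Nutz's Lemma 1.9 yield $\left\lVert \mu_{m}-\mu_{n}\right\rVert _{\mathrm{TV}}\to0$, i.e.\ Cauchyness in total variation and hence in $d_{W}$; that original argument is tailored to $H$ being a relative entropy (it rests on the parallelogram-type identity for KL and Pinsker's inequality), so its transfer to $H(\mu)=d_{W}(\mu,\nu)^{2}$ is not literal. You instead use only what Proposition \ref{prop:penalty} actually provides---continuity and strict convexity of $H$---together with compactness of $(\mathcal{M},d_{W})$ and the equivalence of weak and $d_{W}$-convergence, reducing the claim to uniqueness of subsequential limits. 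This is cleaner and better adapted to the penalty actually used here, and the individual steps (every subsequential limit has $H$-value $a$, the midpoint map is weakly continuous, strict convexity pushes $H$ of the midpoint of two distinct limits strictly below $a$) are all correct.

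The one genuine gap is precisely the step you flagged. From $\limsup_{m,n\to\infty}H(\mu_{m,n})\ge a$ you cannot conclude $\lim_{i}H(\mu_{a_{i},b_{i}})\ge a$ along your particular diagonal: a limsup over the double index only guarantees that \emph{some} pairs with large indices have $H$-value near $a$, not the specific pairs $(a_{i},b_{i})$ you constructed. What your argument needs is $\liminf_{m,n\to\infty}H(\mu_{m,n})\ge a$. Indeed, under the limsup hypothesis the statement as literally written is false: let $\mu_{n}$ alternate between two distinct couplings $\alpha\ne\beta$ with $H(\alpha)=H(\beta)=a$; then $H(\mu_{n})\to a$ and $H(\mu_{m,n})=a$ whenever $m$ and $n$ have the same parity, so $\limsup_{m,n}H(\mu_{m,n})=a$, yet $\{\mu_{n}\}$ does not converge. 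The liminf version is what the source assumes and what the paper's applications actually supply: in Lemmas \ref{lem:nutz1.10} and \ref{lem:nutz1.17} one has $H(\mu_{m,n})\ge a$, respectively $H(\mu_{m,n})\ge H(\mu_{\min(m,n)})$, for \emph{all} relevant $m,n$ by convexity of the constraint sets, which gives the liminf bound. Read with the liminf hypothesis, your proof is complete; read against the limsup statement as printed, this step does not go through.
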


\begin{proof} Let $D<\infty$ be the diameter of $\mathcal{X}\times\mathcal{T}$.
By \citet[Theorem 6.15]{villani2009optimal}, $d_{W}\left(\mu,\nu\right)$
is bounded by $D\left\lVert \mu-\nu\right\rVert _{\mathrm{TV}}$.
By following the same arguments of \citet[Lemma 1.9]{nutzIntroductionEntropicOptimal2022},
we obtain
\[
\lim_{m,n\to\infty}\left\lVert \mu_{m}-\mu_{n}\right\rVert _{\mathrm{TV}}=0.
\]
Thus it follows that $\lim_{m,n\to\infty}d_{W}\left(\mu_{m},\mu_{n}\right)=0$.\end{proof}

The next result is an adaptation of \citet[Theorem 1.10]{nutzIntroductionEntropicOptimal2022}.
We use weak convergence as the mode of convergence, whereas the original
proof uses convergence in total variation.

\begin{lem} \label{lem:nutz1.10} Let $\mathcal{Q}\subset\mathcal{M}$
be a convex and closed subset. There exists a unique $\mu_{*}\in\mathcal{Q}$
such that 
\[
H(\mu_{*})=\inf_{\mu\in\mathcal{Q}}H(\mu)\in[0,\infty).
\]
\end{lem}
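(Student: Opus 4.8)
The plan is to obtain existence from the Weierstrass extreme value theorem and uniqueness from strict convexity, both of which are supplied by Proposition \ref{prop:penalty} together with the compactness and convexity of $\mathcal{M}$ noted at the start of this section.

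First I would establish existence. Since $\mathcal{Q}$ is a closed subset of the compact metric space $(\mathcal{M},d_{W})$, it is itself compact. By Proposition \ref{prop:penalty}, $H$ is continuous and nonnegative on $(\mathcal{M},d_{W})$, hence continuous on $\mathcal{Q}$. A continuous real-valued function on a nonempty compact set attains its infimum, so there exists $\mu_{*}\in\mathcal{Q}$ with $H(\mu_{*})=\inf_{\mu\in\mathcal{Q}}H(\mu)$. Nonnegativity and boundedness of $H$ (again Proposition \ref{prop:penalty}) guarantee that this infimum lies in $[0,\infty)$.

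Next I would prove uniqueness. Write $m:=\inf_{\mu\in\mathcal{Q}}H(\mu)$ and suppose $\mu_{1},\mu_{2}\in\mathcal{Q}$ both attain $m$. Because $\mathcal{M}$ is convex and $\mathcal{Q}\subset\mathcal{M}$ is convex by hypothesis, the midpoint $\bar{\mu}:=(\mu_{1}+\mu_{2})/2$ lies in $\mathcal{Q}$; concretely, a mixture of two couplings with marginals $F_{X}$ and $F_{T}$ is again such a coupling, so $\bar{\mu}$ remains in $\mathcal{M}$, and membership in $\mathcal{Q}$ is then ensured by its convexity. If $\mu_{1}\neq\mu_{2}$, the strict convexity of $H$ yields $H(\bar{\mu})<\frac{1}{2}H(\mu_{1})+\frac{1}{2}H(\mu_{2})=m$, contradicting the definition of $m$ as the infimum over $\mathcal{Q}$. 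Hence $\mu_{1}=\mu_{2}$, and the minimizer is unique.

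I do not anticipate a substantive obstacle beyond bookkeeping: compactness plus continuity deliver existence, and strict convexity delivers uniqueness directly. The only points requiring care are confirming that the convex structure on $\mathcal{M}$ is the correct one, so that the midpoint of two couplings is again a coupling with the prescribed marginals and thus stays in $\mathcal{M}$ (and, by assumption, in $\mathcal{Q}$), and that the strict convexity asserted in Proposition \ref{prop:penalty} is taken with respect to exactly this linear structure.
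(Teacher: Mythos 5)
Your proof is correct, and it takes a genuinely different route to existence than the paper does. The paper follows the scheme of \citet{nutzIntroductionEntropicOptimal2022}: it takes a minimizing sequence $\{\mu_{n}\}\subset\mathcal{Q}$, uses convexity of $\mathcal{Q}$ to place the midpoints $\mu_{m,n}=(\mu_{m}+\mu_{n})/2$ back in $\mathcal{Q}$ so that $H(\mu_{m,n})\ge\inf_{\mu\in\mathcal{Q}}H(\mu)$, and then invokes Lemma \ref{lem:nutz1.9} (a parallelogram-type Cauchy argument controlling $\left\lVert \mu_{m}-\mu_{n}\right\rVert _{\mathrm{TV}}$ and hence $d_{W}$) to extract a weak limit $\mu_{*}$, which is a minimizer by continuity of $H$. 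You instead observe that $\mathcal{Q}$ is a closed subset of the compact space $(\mathcal{M},d_{W})$ and apply Weierstrass directly. Your argument is shorter and uses less machinery, precisely because it exploits the compactness of $\mathcal{M}$ that the paper has already secured under Assumption \ref{assu:villani_thm6-9}; the paper's argument is the one that survives in the more general setting of Nutz, where the ambient space need not be compact and the penalty (there, KL divergence) is only lower semicontinuous, so a convexity-based Cauchy argument must substitute for compactness. The uniqueness step via strict convexity of $H$ (Proposition \ref{prop:penalty}) is identical in both proofs, and your care in checking that the midpoint of two couplings is again a coupling with the prescribed marginals --- so that the convex structure underlying the strict convexity claim matches the one used for the midpoint --- is exactly the right point to verify. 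The only shared, cosmetic caveat is that both proofs implicitly assume $\mathcal{Q}\ne\emptyset$.
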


\begin{proof} Let $\mu_{n}\in\mathcal{Q}$ be such that $H(\mu_{n})\to\inf_{\mu^{\prime}\in\mathcal{Q}}H(\mu^{\prime})$.
By convexity of $\mathcal{Q}$, we have $\mu_{m,n}:=\left(\mu_{m}+\mu_{n}\right)/2\in\mathcal{Q}$
and hence $H(\mu_{m,n})\ge\inf_{\mu\in\mathcal{Q}}H(\mu)$ for all
$m,n$. Lemma \ref{lem:nutz1.9} shows that $\left\{ \mu_{n}\right\} $
converges weakly to some $\mu_{*}$. By the continuity of $\mu\mapsto H(\mu)$,
$\mu_{*}$ is a minimizer of $\inf_{\mu^{\prime}\in\mathcal{Q}}H(\mu^{\prime})$.
Uniqueness follows from the strict convexity of $H$. \end{proof}

The next result is an adaptation of \citet[Proposition 1.17]{nutzIntroductionEntropicOptimal2022}.

\begin{lem} \label{lem:nutz1.17} Consider a decreasing sequence
of sets $\mathcal{Q}_{n}\subset\mathcal{M}$ that are convex and closed,
and let $\mathcal{Q}:=\cap_{n}\mathcal{Q}_{n}$. Let $\mu_{n}=\arg\min_{\mu\in\mathcal{Q}_{n}}H(\mu)$
be the minimizer of $\mathcal{Q}_{n}$. Then 
\[
\mu_{n}\to\mu_{*}\text{ weakly,}\quad\text{and}\quad H(\mu_{n})\to H(\mu_{*}),
\]
where $\mu_{*}=\arg\min_{\mu^{\prime}\in\mathcal{Q}}H(\mu^{\prime})$.
\end{lem}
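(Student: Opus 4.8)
The plan is to exploit the nestedness $\mathcal{Q}_{n+1}\subset\mathcal{Q}_n$ to make $\{H(\mu_n)\}$ monotone, then to invoke Lemma \ref{lem:nutz1.9} for weak convergence of the whole sequence, and finally to pin down the limit using closedness of the $\mathcal{Q}_n$ together with the uniqueness of the $H$-minimizer from Lemma \ref{lem:nutz1.10}. First I would record the monotonicity. Since $\mathcal{Q}_{n+1}\subset\mathcal{Q}_n$, taking infima over the smaller set gives $H(\mu_n)=\inf_{\mathcal{Q}_n}H\le\inf_{\mathcal{Q}_{n+1}}H=H(\mu_{n+1})$, so $\{H(\mu_n)\}$ is nondecreasing; because $H$ is bounded on $\mathcal{M}$ by Proposition \ref{prop:penalty}, it converges to some $a\in\mathbb{R}$.

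Next I would verify the hypotheses of Lemma \ref{lem:nutz1.9}. The limit $\lim_n H(\mu_n)=a$ exists by the previous step. For the midpoint condition, take $m\le n$; then $\mu_m,\mu_n\in\mathcal{Q}_m$ (using $\mathcal{Q}_n\subset\mathcal{Q}_m$), and convexity of $\mathcal{Q}_m$ gives $\mu_{m,n}=(\mu_m+\mu_n)/2\in\mathcal{Q}_m$, whence $H(\mu_{m,n})\ge\inf_{\mathcal{Q}_m}H=H(\mu_m)$. As $m,n\to\infty$ the right-hand side tends to $a$, so $\liminf_{m,n\to\infty}H(\mu_{m,n})\ge a$, and a fortiori $\limsup_{m,n\to\infty}H(\mu_{m,n})\ge a$. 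Lemma \ref{lem:nutz1.9} then yields that $\{\mu_n\}$ converges weakly, say to some $\mu_\infty\in\mathcal{M}$; by \citet[Theorem 6.9]{villani2009optimal} this is equivalent to $\mu_n\to\mu_\infty$ in $d_W$.

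Then I would locate the limit. Fix $k$; for every $n\ge k$ we have $\mu_n\in\mathcal{Q}_n\subset\mathcal{Q}_k$, and since $\mathcal{Q}_k$ is closed in $(\mathcal{M},d_W)$ it follows that $\mu_\infty\in\mathcal{Q}_k$. As this holds for all $k$, we get $\mu_\infty\in\mathcal{Q}=\cap_k\mathcal{Q}_k$, so in particular $\mathcal{Q}$ is a nonempty convex closed subset of $\mathcal{M}$ and Lemma \ref{lem:nutz1.10} supplies a unique minimizer $\mu_*=\arg\min_{\mu'\in\mathcal{Q}}H(\mu')$. To identify $\mu_\infty$ with $\mu_*$, note that $H(\mu_\infty)=a$ by continuity of $H$. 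On one hand $\mu_*\in\mathcal{Q}\subset\mathcal{Q}_n$ forces $H(\mu_n)\le H(\mu_*)$ for every $n$, hence $a\le H(\mu_*)$; on the other hand $\mu_\infty\in\mathcal{Q}$ forces $H(\mu_*)\le H(\mu_\infty)=a$. Therefore $H(\mu_\infty)=H(\mu_*)$, and the uniqueness of the minimizer (strict convexity of $H$ in Lemma \ref{lem:nutz1.10}) gives $\mu_\infty=\mu_*$; this also delivers the second claim $H(\mu_n)\to a=H(\mu_*)$.

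The only genuinely delicate point is the passage through Lemma \ref{lem:nutz1.9}: obtaining weak convergence of the \emph{entire} sequence rather than of a subsequence. This is precisely where the $\limsup$ inequality for the midpoints $\mu_{m,n}$ does the work, and it is bought exactly by the nestedness and the convexity of the $\mathcal{Q}_n$. Everything else reduces to routine bookkeeping with monotone bounded sequences and to the closedness of each $\mathcal{Q}_k$ under $d_W$, so I do not anticipate further obstacles there.
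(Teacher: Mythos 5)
Your proof is correct and follows essentially the same route as the paper's: monotonicity of $H(\mu_{n})$ from nestedness, the midpoint condition to invoke Lemma \ref{lem:nutz1.9}, and continuity plus uniqueness from Lemma \ref{lem:nutz1.10} to identify the limit. In fact you are slightly more careful than the paper at one point, since you explicitly verify $\mu_{\infty}\in\mathcal{Q}$ via the closedness of each $\mathcal{Q}_{k}$ before comparing $H(\mu_{\infty})$ with $H(\mu_{*})$, a step the paper's argument leaves implicit.
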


\begin{proof} Note that the inclusion $\mathcal{Q}_{n}\supset\mathcal{Q}_{n+1}\supset\mathcal{Q}$
implies that $H(\mu_{n})$ is increasing and $H(\mu_{n})\le\inf_{\mu^{\prime}\in\mathcal{Q}}H(\mu^{\prime})$.
Since any increasing and bounded-above sequence is convergent, we
have $\lim H(\mu_{n})\le\inf_{\mu^{\prime}\in\mathcal{Q}}H(\mu^{\prime})<\infty$.
For $m\ge n$, we have $\mu_{m,n}:=\left(\mu_{m}+\mu_{n}\right)/2\in\mathcal{Q}_{n}$
by convexity. Then $H(\mu_{m,n})\ge H(\mu_{n})$. Thus $\limsup_{m,n\to\infty}H(\mu_{m,n})\ge\lim H(\mu_{n})$.
Since $\lim H(\mu_{n})<\infty$, Lemma \ref{lem:nutz1.9} implies
that $\mu_{n}$ converges weakly to some limit $\mu$. By the continuity
of $H$ on $\mathcal{M}$, 
\[
H(\mu)=\lim_{n}H(\mu_{n})\le\inf_{\mu^{\prime}\in\mathcal{Q}}H(\mu^{\prime}).
\]
Thus we obtain $\mu\in\arg\min_{\mathcal{\mu^{\prime}\in Q}}H(\mu^{\prime})$.
By the uniqueness of the minimizer shown in Lemma \ref{lem:nutz1.10},
we have $\mu=\mu_{*}$; i.e., $\mu_{n}$ converges weakly to $\mu_{*}$.
\end{proof}

The next result is an adaptation of \citet[Theorem 5.5]{nutzIntroductionEntropicOptimal2022}.

\begin{lem} \label{lem:nutz5.5} Suppose that $\lim_{\varepsilon\to0}\mathcal{C}_{\varepsilon}=\mathcal{C}_{0}$.
Let $\mu_{\varepsilon}$ be the optimizer of ($\varepsilon$EOT).
Then, 
\[
\mu_{\varepsilon}\to\mu_{*}\text{ weakly as }\varepsilon\downarrow0,\quad\text{and}\quad H(\mu_{\varepsilon})\to H(\mu_{*}),
\]
where $\mu_{*}=\arg\min_{\mu\in\mathcal{M}_{opt}}H(\mu)$.\end{lem}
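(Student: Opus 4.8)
The plan is to combine three ingredients: well-definedness of $\mu_*$ via the strict convexity of $H$, an a priori comparison forcing $H(\mu_\varepsilon)$ to stay below $H(\mu_*)$, and a compactness-plus-uniqueness argument to pin down the limit. First I would record the structural facts about $\mathcal{M}_{opt}$. Since $c$ is bounded and continuous, the map $\mu\mapsto\int c\,\mathrm{d}\mu$ is continuous for weak convergence, equivalently for $d_W$-convergence by \citet[Theorem 6.9]{villani2009optimal}; together with compactness of $(\mathcal{M},d_W)$ this makes $\mathcal{M}_{opt}$ nonempty, and it is convex (affinity of the map) and closed. Applying Lemma \ref{lem:nutz1.10} to $\mathcal{Q}=\mathcal{M}_{opt}$ then yields the unique minimizer $\mu_*=\arg\min_{\mu\in\mathcal{M}_{opt}}H(\mu)$, so the object in the statement is well-defined.

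Second I would derive two a priori estimates. Using $\mu_*$ as a feasible competitor in $(\varepsilon\mathrm{EOT})$ and $\int c\,\mathrm{d}\mu_*=\mathcal{C}_0$ gives $\mathcal{C}_\varepsilon=\int c\,\mathrm{d}\mu_\varepsilon+\varepsilon H(\mu_\varepsilon)\le\mathcal{C}_0+\varepsilon H(\mu_*)$; rearranging and using $\int c\,\mathrm{d}\mu_\varepsilon\ge\mathcal{C}_0$ yields the key bound $H(\mu_\varepsilon)\le H(\mu_*)$ for every $\varepsilon>0$. For the cost, boundedness of $H$ (Proposition \ref{prop:penalty}) gives $0\le\varepsilon H(\mu_\varepsilon)\le\varepsilon\sup_{\mu}H(\mu)\to0$, so the hypothesis $\mathcal{C}_\varepsilon\to\mathcal{C}_0$ forces $\int c\,\mathrm{d}\mu_\varepsilon=\mathcal{C}_\varepsilon-\varepsilon H(\mu_\varepsilon)\to\mathcal{C}_0$.

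Third I would run the compactness argument. Fix any sequence $\varepsilon_n\downarrow0$; by compactness of $(\mathcal{M},d_W)$ pass to a subsequence along which $\mu_{\varepsilon_n}$ converges weakly to some $\mu$. Continuity of $\mu\mapsto\int c\,\mathrm{d}\mu$ and $\int c\,\mathrm{d}\mu_{\varepsilon_n}\to\mathcal{C}_0$ give $\int c\,\mathrm{d}\mu=\mathcal{C}_0$, i.e. $\mu\in\mathcal{M}_{opt}$; continuity of $H$ together with $H(\mu_{\varepsilon_n})\le H(\mu_*)$ gives $H(\mu)\le H(\mu_*)$. But $\mu\in\mathcal{M}_{opt}$ and minimality of $\mu_*$ force $H(\mu)\ge H(\mu_*)$, so $H(\mu)=H(\mu_*)$, whence $\mu=\mu_*$ by uniqueness of the $H$-minimizer over $\mathcal{M}_{opt}$ (Lemma \ref{lem:nutz1.10}). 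Since every subsequence admits a further subsequence with limit $\mu_*$, the whole sequence satisfies $\mu_{\varepsilon_n}\to\mu_*$ weakly, and as $\varepsilon_n\downarrow0$ was arbitrary this gives $\mu_\varepsilon\to\mu_*$; the convergence $H(\mu_\varepsilon)\to H(\mu_*)$ is then immediate from continuity of $H$.

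The proof is fairly direct once the auxiliary lemmas are in hand, and the only point requiring care is the unique identification of the subsequential limit: it is the interplay between the a priori bound $H(\mu_\varepsilon)\le H(\mu_*)$ and the closedness/convexity of $\mathcal{M}_{opt}$ that matters, with strict convexity of $H$ (via Lemma \ref{lem:nutz1.10}) upgrading ``$\mu$ minimizes $H$ over $\mathcal{M}_{opt}$'' to ``$\mu=\mu_*$'' and thereby upgrading subsequential to full convergence. A more literal route through Lemma \ref{lem:nutz1.17} is also available, noting that optimality makes $\mu_\varepsilon$ the unique $H$-minimizer over the sublevel set $\{\mu:\int c\,\mathrm{d}\mu\le\int c\,\mathrm{d}\mu_\varepsilon\}$, which shrinks to $\mathcal{M}_{opt}$; I would nonetheless prefer the direct compactness argument above, since it sidesteps having to extract a monotone family of sublevel sets in $\varepsilon$.
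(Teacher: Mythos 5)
Your proof is correct, but it takes a genuinely different route from the paper's. The paper follows \citet{nutzIntroductionEntropicOptimal2022} literally: it first notes the monotonicity $H(\mu_{\varepsilon})\le H(\mu_{\varepsilon'})$ and $\int c\,\mathrm{d}\mu_{\varepsilon}\ge\int c\,\mathrm{d}\mu_{\varepsilon'}$ for $\varepsilon\ge\varepsilon'$, identifies $\mu_{\varepsilon}$ as the unique $H$-minimizer over the closed convex sublevel set $\mathcal{Q}_{\varepsilon}=\{\mu:\int c\,\mathrm{d}\mu\le\int c\,\mathrm{d}\mu_{\varepsilon}\}$, shows these sets decrease to $\mathcal{M}_{opt}$ using the hypothesis $\mathcal{C}_{\varepsilon}\to\mathcal{C}_0$, and then invokes Lemma \ref{lem:nutz1.17}, whose engine is the midpoint-convexity argument of Lemma \ref{lem:nutz1.9}. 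This is exactly the "more literal route" you mention and set aside. Your direct argument instead leans on the compactness of $(\mathcal{M},d_W)$ (available here via Assumption \ref{assu:villani_thm6-9} and Remark 6.19 of Villani), the a priori bound $H(\mu_{\varepsilon})\le H(\mu_*)$ obtained by testing $(\varepsilon\mathrm{EOT})$ against $\mu_*$, and a subsequence-extraction-plus-uniqueness step; it bypasses Lemmas \ref{lem:nutz1.9} and \ref{lem:nutz1.17} entirely (you still need Lemma \ref{lem:nutz1.10}, or an equivalent existence-and-uniqueness statement, to define $\mu_*$). What the paper's route buys is robustness: the midpoint argument does not require compactness of the action space, which is why Nutz uses it in the entropic setting; what your route buys is brevity and transparency in the present compact setting, where the identification of every subsequential limit as $\mu_*$ is immediate from continuity of $\int c\,\mathrm{d}(\cdot)$ and $H$ together with strict convexity. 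Both arguments are sound given the paper's standing assumptions.
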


\begin{proof} The additive form of ($\varepsilon$EOT) and the optimality
of the couplings imply that 
\[
H(\mu_{\varepsilon})\le H(\mu_{\varepsilon^{\prime}})\quad\text{and}\quad\int c\mathrm{d}\mu_{\varepsilon}\ge\int c\mathrm{d}\mu_{\varepsilon^{\prime}}\quad\text{for }\varepsilon\ge\varepsilon^{\prime}>0.
\]
Denote $\mathcal{Q}:=\mathcal{M}_{opt}$ and 
\[
\mathcal{Q}_{\varepsilon}:=\left\{ \mu\in\mathcal{M}:\int c\mathrm{d}\mu\le\int c\mathrm{d}\mu_{\varepsilon}\right\} .
\]
Note that $\mathcal{Q}_{\varepsilon}$ is a closed convex set, and
$\mu_{\varepsilon}=\arg\min_{\mu\in\mathcal{Q}_{\varepsilon}}H(\mu)$.\footnote{Suppose, by contradiction, that there exists $\mu\in\mathcal{Q}_{\varepsilon}$
such that $H(\mu)<H(\mu_{\varepsilon})$. Then 
\[
\int c\mathrm{d}\mu+\varepsilon H(\mu)<\int c\mathrm{d}\mu_{\varepsilon}+\varepsilon H(\mu_{\varepsilon}),
\]
which contradicts with the optimality of $\mu_{\varepsilon}$.} Then $\int c\mathrm{d}\mu_{\varepsilon}\ge\int c\mathrm{d}\mu_{\varepsilon^{\prime}}$
implies that $\mathcal{Q}_{\varepsilon}\supset\mathcal{Q}_{\varepsilon^{\prime}}$
for $\varepsilon\ge\varepsilon^{\prime}$. Next, we claim that $\mathcal{Q}=\cap_{\varepsilon}\mathcal{Q}_{\varepsilon}$.
It is easy to see $\mathcal{Q}\subset\cap_{\varepsilon}\mathcal{Q}_{\varepsilon}$.
For the other direction, take any $\mu\in\cap_{\varepsilon}\mathcal{Q}_{\varepsilon}$.
Then we have $\mu\in\mathcal{Q}$ because
\[
\int c\mathrm{d}\mu\le\int c\mathrm{d}\mu_{\varepsilon}\le\mathcal{C}_{\varepsilon}\to\mathcal{C}_{0}.
\]
Then applying Lemma \ref{lem:nutz1.17} completes the proof.\end{proof}

Thus, it remains to show that $\lim_{\varepsilon\to0}\mathcal{C}_{\varepsilon}=\mathcal{C}_{0}$.
The next result is an adaptation of \citet[Lemma 5.2]{nutzIntroductionEntropicOptimal2022}.

\begin{lem} \label{lem:nutz5.2} Suppose that given $\eta>0$, there
exists $\mu^{\eta}\in\mathcal{M}$ with $\int c\mathrm{d}\mu^{\eta}\le\mathcal{C}_{0}+\eta$
and $H(\mu^{\eta})<\infty$. Then $\lim_{\varepsilon\to0}\mathcal{C}_{\varepsilon}=\mathcal{C}_{0}$.
\end{lem}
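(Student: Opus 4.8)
The plan is to establish $\lim_{\varepsilon\to0}\mathcal{C}_{\varepsilon}=\mathcal{C}_{0}$ by a two-sided squeeze, proving $\liminf_{\varepsilon\to0}\mathcal{C}_{\varepsilon}\ge\mathcal{C}_{0}$ and $\limsup_{\varepsilon\to0}\mathcal{C}_{\varepsilon}\le\mathcal{C}_{0}$ separately. This is the standard "vanishing regularization" argument from entropic optimal transport, and the only substantive input is the hypothesized near-optimizer $\mu^{\eta}$ with finite penalty.

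For the lower bound, I would simply use the nonnegativity of $H$ established in Proposition \ref{prop:penalty}. Since $\varepsilon H(\mu)\ge0$ for every $\mu\in\mathcal{M}$ and every $\varepsilon>0$, the objective of ($\varepsilon$EOT) dominates that of (OT) pointwise in $\mu$, so that
\[
\mathcal{C}_{\varepsilon}=\inf_{\mu\in\mathcal{M}}\left[\int c\,\mathrm{d}\mu+\varepsilon H(\mu)\right]\ge\inf_{\mu\in\mathcal{M}}\int c\,\mathrm{d}\mu=\mathcal{C}_{0}
\]
for all $\varepsilon>0$. Taking the $\liminf$ as $\varepsilon\downarrow0$ yields $\liminf_{\varepsilon\to0}\mathcal{C}_{\varepsilon}\ge\mathcal{C}_{0}$.

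For the upper bound, I would fix $\eta>0$ and feed the near-optimizer $\mu^{\eta}$ into the infimum defining $\mathcal{C}_{\varepsilon}$. Because $\mu^{\eta}$ is a feasible competitor, $\mathcal{C}_{\varepsilon}\le\int c\,\mathrm{d}\mu^{\eta}+\varepsilon H(\mu^{\eta})\le\mathcal{C}_{0}+\eta+\varepsilon H(\mu^{\eta})$, using the hypothesis $\int c\,\mathrm{d}\mu^{\eta}\le\mathcal{C}_{0}+\eta$. Since $H(\mu^{\eta})<\infty$, letting $\varepsilon\downarrow0$ gives $\limsup_{\varepsilon\to0}\mathcal{C}_{\varepsilon}\le\mathcal{C}_{0}+\eta$, and then letting $\eta\downarrow0$ gives $\limsup_{\varepsilon\to0}\mathcal{C}_{\varepsilon}\le\mathcal{C}_{0}$. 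Combining the two bounds delivers the claim.

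There is no genuine obstacle here; the argument is essentially immediate once the hypothesis is available. The only point worth flagging is that one must verify the hypothesis is actually met in our application, but this is automatic: Proposition \ref{prop:penalty} shows $H$ is bounded on $(\mathcal{M},d_{W})$, so $H(\mu)<\infty$ for every $\mu\in\mathcal{M}$, and the existence of an $\eta$-optimal $\mu^{\eta}$ follows directly from the definition of the infimum $\mathcal{C}_{0}$. Thus the finite-penalty requirement imposes no real restriction in our setting, and the lemma applies unconditionally.
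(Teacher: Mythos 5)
Your proof is correct and follows essentially the same route as the paper: plug the near-optimizer $\mu^{\eta}$ into the infimum defining $\mathcal{C}_{\varepsilon}$, let $\varepsilon\downarrow0$, then let $\eta\downarrow0$. You additionally spell out the lower bound $\mathcal{C}_{\varepsilon}\ge\mathcal{C}_{0}$ via nonnegativity of $H$ (which the paper leaves implicit) and note that the hypothesis is automatic here since $H$ is bounded; both are worthwhile clarifications but not departures from the paper's argument.
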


\begin{proof} Given $\eta>0$, we have 
\[
\mathcal{C}_{\varepsilon}\le\int c\mathrm{d}\mu^{\eta}+\varepsilon H(\mu^{\eta})\le\mathcal{C}_{0}+\eta+\varepsilon H(\mu^{\eta}).
\]
Thus $\lim_{\varepsilon\to0}\mathcal{C}_{\varepsilon}\le\mathcal{C}_{0}+\eta$.
Since $\eta>0$ is arbitrary, we are done. \end{proof}

The next result is an adaptation of \citet[Lemma 5.4]{nutzIntroductionEntropicOptimal2022}. 

\begin{lem} \label{lem:nutz5.4} Let $c$ be continuous and bounded.
Then $\lim_{\varepsilon\to0}\mathcal{C}_{\varepsilon}=\mathcal{C}_{0}$.
\end{lem}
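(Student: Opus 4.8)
The plan is to reduce the claim to Lemma \ref{lem:nutz5.2} by verifying its hypothesis, which becomes essentially automatic in our setting. Recall that Lemma \ref{lem:nutz5.2} delivers $\lim_{\varepsilon\to0}\mathcal{C}_{\varepsilon}=\mathcal{C}_{0}$ as soon as, for every $\eta>0$, there is a coupling $\mu^{\eta}\in\mathcal{M}$ that is $\eta$-optimal for the transport problem, i.e.\ $\int c\,\mathrm{d}\mu^{\eta}\le\mathcal{C}_{0}+\eta$, and that additionally has \emph{finite} penalty $H(\mu^{\eta})<\infty$. So the whole task is to produce such $\mu^{\eta}$.

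First I would produce the near-optimizers. Since $c$ is bounded, $\mathcal{C}_{0}=\inf_{\mu\in\mathcal{M}}\int c\,\mathrm{d}\mu$ is a finite real number, and by the very definition of the infimum, for each $\eta>0$ there exists $\mu^{\eta}\in\mathcal{M}$ with $\int c\,\mathrm{d}\mu^{\eta}\le\mathcal{C}_{0}+\eta$. In fact, since $c$ is continuous and $(\mathcal{M},d_{W})$ is compact, the infimum is attained, so one may even take $\mu^{\eta}$ to be an exact minimizer; but the approximate version suffices.

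The key point---and the reason this lemma is short in our framework---is that the finiteness of the penalty comes for free. By Proposition \ref{prop:penalty}, $H$ is bounded on $(\mathcal{M},d_{W})$, so $H(\mu^{\eta})\le\sup_{\mu\in\mathcal{M}}H(\mu)<\infty$ for \emph{every} choice of $\mu^{\eta}\in\mathcal{M}$. This is precisely where our squared Wasserstein penalty $H(\mu)=(d_{W}(\mu,\nu))^{2}$ departs from the entropic penalty of \citet{nutzIntroductionEntropicOptimal2022}: in the entropic case the Kullback--Leibler penalty can be infinite on near-optimal plans, which forces a delicate mollification argument to construct finite-entropy approximants. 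Here no such construction is needed. With both hypotheses of Lemma \ref{lem:nutz5.2} in hand, the conclusion $\lim_{\varepsilon\to0}\mathcal{C}_{\varepsilon}=\mathcal{C}_{0}$ follows at once.

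Because the obstruction that makes the corresponding entropic result difficult has been engineered away by our bounded penalty, I do not anticipate any genuine difficulty in this argument; the only thing to keep straight is that the crucial input ``$H$ bounded on the compact $\mathcal{M}$'' is exactly the property established in Proposition \ref{prop:penalty}, so the proof is self-contained within the paper and amounts to citing Proposition \ref{prop:penalty} and Lemma \ref{lem:nutz5.2} in sequence.
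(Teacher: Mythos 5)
Your proof is correct and follows the same overall route as the paper: both arguments reduce the claim to Lemma \ref{lem:nutz5.2} by exhibiting an $\eta$-optimal coupling with finite penalty. The only difference is in how that coupling is produced: the paper takes an exact optimizer of (OT) and invokes \citet[Lemma 5.3]{nutzIntroductionEntropicOptimal2022} to perturb it into a nearby plan before noting $H(\mu^{\eta})<\infty$, whereas you observe---correctly---that this intermediate step is superfluous here, since $H$ is bounded on all of $\mathcal{M}$ by Proposition \ref{prop:penalty}, so any near-minimizer (or the exact minimizer itself) already satisfies both hypotheses of Lemma \ref{lem:nutz5.2}. Your version is the cleaner one: the citation to Nutz's Lemma 5.3 is a vestige of the entropic setting, where the penalty can be infinite on optimal plans and a genuine approximation argument is required; with the squared-Wasserstein penalty that obstruction disappears, exactly as you say.
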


\begin{proof} Let $\eta>0$ and $\mu\in\mathcal{M}$ an optimal transport
for (OT). By \citet[Lemma 5.3]{nutzIntroductionEntropicOptimal2022},
there exists $\mu^{\eta}\in\mathcal{M}$ such that 
\[
\left|\int c\mathrm{d}\mu^{\eta}-\int c\mathrm{d}\mu\right|\le\eta.
\]
Note that $H(\mu^{\eta})<\infty$. Then applying Lemma \ref{lem:nutz5.2}
yields the conclusion. \end{proof}

Then by Lemma \ref{lem:nutz5.5}, the conclusion of Proposition \ref{prop:nutz}
follows.

\section{\label{sec:Semipara}Optimality in semiparametric models}

We generalize the setup presented in the main texts to allow more
flexible sampling distributions for observable data. Our setup here
basically follows \citet[Section 5]{christensen2023optimal}. Assume
that data $Z^{n}=(Z_{1},\dots,Z_{n})$ are i.i.d. and $Z_{i}$ follows
the distribution $P_{\theta,\eta}$ indexed by $\theta\in\Theta\subset\mathbb{R}^{k}$
and $\eta\in\mathcal{H}$, where $\eta$ is a possibly infinite-dimensional
nuisance parameter. For instance, in a GMM model, $\mathcal{H}$ is
the set of marginal distributions $\eta$ of $Z_{i}$ where for each
$\eta\in\mathcal{H}$, there exists $\theta\in\Theta$ such that $\eta$
satisfies the moment restriction $\int g(\theta,z)\mathrm{d}\eta=0$,
given some known vector function $g$. 

It is said that $\mathcal{P}=\{P_{\theta,\eta}:\theta\in\Theta,\eta\in\mathcal{H}\}$
has the \emph{least favorable submodels} at $(\theta,\eta)$ if there
exist an open neighborhood $\Theta_{\theta,\eta}$ of $\theta$ and
a map $\Theta_{\theta,\eta}\ni t\mapsto\eta_{t}\in\mathcal{H}$ such
that the parametric submodel $\{P_{t,\eta_{t}}:t\in\Theta_{\theta,\eta}\}$
has the density function $p_{t,\eta_{t}}$ with respect to a common
dominating measure $\nu$ and satisfies the DQM condition 
\[
\int\left[\sqrt{p_{\theta+h,\eta_{\theta}}}-\sqrt{p_{\theta,\eta_{\theta}}}-\frac{1}{2}h^{\top}\dot{\ell}_{\theta,\eta}p_{\theta,\eta_{\theta}}\right]^{2}\mathrm{d}\nu=o(\left\lVert h\right\rVert ^{2}),\quad\text{as }h\to0,
\]
where $\dot{\ell}_{\theta,\eta}:\mathcal{Z}^{n}\to\mathbb{R}^{k}$
is the efficient score function for $\theta$. Thus, the parametric
submodel $\{P_{t,\eta_{t}}:t\in\Theta_{\theta,\eta}\}$ achieves the
semiparametric efficiency bound by the inverse of $I_{\theta,\eta}:=\int\dot{\ell}_{\theta,\eta}\dot{\ell}_{\theta,\eta}^{\top}\mathrm{d}P_{\theta,\eta_{\theta}}$.
For each $(\theta,\eta)$, the least favorable submodels need not
to be unique. Picking one of them gives no loss of generality because
they all behave in the same manner asymptotically.

Following the parametric model, we assume that the planner's utility
function $w$ only depends on $\theta$, and not on the nuisance parameter
$\eta$.

\subsection{Decision theoretic framework and rules}

Fix $(\theta_{0},\eta_{0})\in\Theta\times\mathcal{H}$. Consider a
least favorable submodel $\{P_{\beta(t)}:t\in\Theta_{\theta_{0},\eta_{0}}\}$,
where $\beta(t)=(t,\eta_{t})$. Under the reparametrization $t=\theta_{0}+h/\sqrt{n}=\theta_{nh}$,
$Z^{n}$ follows the distribution $P_{\beta(\theta_{nh})}^{n}$. We
denote $\stackrel{h}{\rightsquigarrow}$ by the weak convergence along
the path $P_{\beta(\theta_{nh})}^{n}$, $\stackrel{h}{\to}$ by the
convergence in probability along $P_{\beta(\theta_{nh})}^{n}$, and
$\stackrel{0}{\to}$ by the convergence in probability along $P_{\theta_{0},\eta_{0}}^{n}$. 

We define the class of the sequences of rules by 
\[
\mathcal{D}:=\left\{ \left\{ \mu_{n}\right\} :\mu_{n}(Z^{n})\stackrel{h}{\rightsquigarrow}Q_{\theta_{0},h}\text{ and }\sqrt{n}P_{\beta(\theta_{nh})}^{n}(\mu_{n}\in A_{0})\to0\text{ as }n\to\infty\quad\forall h\in\mathbb{R}^{k},\forall\theta_{0}\in\Theta\right\} .
\]

The optimality criterion in this semiparametric model is based on
the least favorable submodels. The \emph{risk }associated with the
map $Z^{n}\mapsto\mu(Z^{n})\in\mathcal{M}$ at $(\theta,\eta)\in\Theta\times\mathcal{H}$
is given by 
\[
R(\mu,(\theta,\eta)):=\mathbb{E}_{P_{\theta,\eta}^{n}}\left[W_{\mathcal{M}}^{*}(\theta)-W(\theta,\mu(Z^{n}))\right],
\]
where the expectation is taken with respect to the sampling distribution
$P_{\theta,\eta}^{n}$ of $Z^{n}$. Let $\pi$ be any prior density
function on $\Theta$ that is continuous and positive at $\theta_{0}$.
Then, a sequence of rules $\left\{ \mu_{n}^{*}\right\} \in\mathcal{D}$
is said to be \emph{(semiparametrically) average optimal} if $\left\{ \mu_{n}^{*}\right\} $
attains the infimum of the asymptotic risk function: 
\begin{equation}
\inf_{\left\{ \mu_{n}\right\} \in\mathcal{D}}\liminf_{n\to\infty}\int\sqrt{n}R(\mu_{n},\beta(\theta_{nh}))\pi(\theta_{nh})\mathrm{d}h.
\end{equation}

\subsection{Quasi-Bayesian implementation of the Bayesian rules}

We replace a posterior function specified in the parametric setup
by a quasi-posterior. Let $\hat{\theta}_{n}$ be the (semiparametrically)
efficient estimator of $\theta$, and $\hat{I}_{n}^{-1}$ be a consistent
estimator of the asymptotic covariance $I_{\theta,\eta}^{-1}$. We
combine the limited-information quasi-likelihood $N(\hat{\theta}_{n},(n\hat{I}_{n})^{-1})$
for $\theta$ and a prior $\pi$ on $\Theta$ to obtain the quasi-posterior
\[
\pi_{n}(\theta|Z^{n})\propto\exp\left(-\frac{1}{2}(\theta-\hat{\theta}_{n})^{\top}(n\hat{I}_{n})(\theta-\hat{\theta}_{n})\right)\pi(\theta).
\]
We compute the Bayesian rules using $\pi_{n}(\theta|Z^{n})$; i.e.,
\[
\mu_{n}^{B}(z)\in\mathcal{M}_{opt}(z):=\arg\max_{\mu\in\mathcal{M}}\int\sqrt{n}W(\theta,\mu)\mathrm{d}\pi_{n}(\theta|z).
\]
Following the parametric model, we construct a unique $\{\mu_{n}^{B}(z)\}$
where $\mu_{n}^{B}(z)$ minimizes the penalty function $H$ over $\mathcal{M}_{opt}(z)$.

\subsection{Optimality results}

As an analog for Assumption \ref{assu:cms25:assu2} in the parametric
models, we impose the following assumptions.

\begin{assu}\label{assu:cms25:assu3} 

(i) $\Theta$ is open. 

(ii) $\mathcal{P}$ has a least favorable submodel at each $(\theta_{0},\eta_{0})\in\Theta\times\mathcal{H}$. 

(iii) $I_{\theta_{0},\eta_{0}}$ is finite and nonsingular at each
$(\theta_{0},\eta_{0})\in\Theta\times\mathcal{H}$. 

(iv) For each $(\theta_{0},\eta_{0})\in\Theta\times\mathcal{H}$ and
each $h\in\mathbb{R}^{k}$, (iv-a) $\sqrt{n}P_{\beta(\theta_{nh})}^{n}(\lVert\hat{\theta}_{n}-\theta_{0}\rVert>\varepsilon)\to0$
for each $\varepsilon>0$ as $n\to\infty$, and (iv-b) there exists
$c\in(0,1)$ such that $\sqrt{n}P_{\beta(\theta_{nh})}^{n}(c\le\hat{\lambda}_{\min},\hat{\lambda}_{\max}\le c^{-1})\to0$
as $n\to\infty$, 

(v) For each $(\theta_{0},\eta_{0})\in\Theta\times\mathcal{H}$, (v-a)
$\sqrt{n}(\hat{\theta}_{n}-\theta_{0})\stackrel{h}{\rightsquigarrow}Z$
with $Z\sim N(h,I_{\theta_{0},\eta_{0}}^{-1})$ for all $h\in\mathbb{R}^{k}$
as $n\to\infty$, and (v-b) $\hat{I}_{n}\stackrel{0}{\to}I_{\theta_{0},\eta_{0}}$
as $n\to\infty$. \end{assu}

\begin{thm}\label{thm:optimality_semipara} Under Assumptions \ref{assu:cms25:assu3}
and \ref{assu:villani_thm6-9}--\ref{assu:CMS25:lem:8}, $\{\mu_{n}^{B}\}\in\mathcal{D}$
is average optimal. \end{thm}

\begin{proof} Once we fix the parameter $(\theta_{0},\eta_{0})\in\Theta\times\mathcal{H}$,
the least favorable submodel $\{P_{\beta(t)}:t\in\Theta_{\theta_{0},\eta_{0}}\}$
becomes a parametric model. Hence, only slight modifications from
the proof of Theorem \ref{thm:optimality} are needed. Specifically,
Lemmas \ref{lem:lower_bound} and \ref{lem:events_equivalence} follow
in the same manner. For Lemma \ref{lem:The-ex-post-Bayes}, we need
a modification to show $\sqrt{n}P_{\beta(\theta_{nh})}^{n}(\mu_{n}^{B}(Z^{n})\not\in A_{0})\to0$
for all $h\in\mathbb{R}^{k}$, which is given in Lemma \ref{lem:CMS25_lemma12}
below. For Lemmas \ref{lem:process_conv} and \ref{lem:epb_Qn}, we
need to use the quasi-posterior counterparts of the Bernstein-von
Mises theorem given by \citet[Lemma 5]{christensen2023optimal} and
Proposition \ref{prop:subpoly-vdv} given by \citet[Lemma A.5]{xu2024jmp}.
Auxiliary lemmas given in Appendix \ref{sec:auxiliary-lemmas} do
not need modifications. \end{proof}

\begin{lem}\label{lem:CMS25_lemma12} The Bayesian rule $\{\mu_{n}^{B}(Z^{n})\}$
satisfies $\sqrt{n}P_{\beta(\theta_{nh})}^{n}(\mu_{n}^{B}(Z^{n})\not\in A_{0})\to0$
as $n\to\infty$. \end{lem}

\begin{proof} From Lemma \ref{lem:The-ex-post-Bayes} (i), for any
$\theta_{0}\in\Theta$, there are $\overline{n}$ and $\varepsilon_{n}^{\prime}(\overline{n})$
(which is at the order of $n^{\alpha+1}$ for some $\alpha\ge1$)
such that for all $n\ge\overline{n}$,
\[
P_{\theta_{nh}}^{n}\left(\mu_{n}^{B}(z)\notin A_{0}\right)\le P_{\theta_{nh}}^{n}\left(\pi_{n}\left(N_{1/n}(\theta_{0})^{c}|z\right)>2\varepsilon_{n}^{\prime}\right)
\]
Under Assumption \ref{assu:cms25:assu3} (iii) and (iv), \citet[Lemma 12]{christensen2023optimal}
implies that 
\[
\sqrt{n}P_{\beta(\theta_{nh})}\left(\pi_{n}\left(N_{1/n}(\theta_{0})^{c}\right)>2\varepsilon_{n}^{\prime}\right)\to0
\]
 as $n\to\infty$. \end{proof}

\printbibliography

\end{document}